\def\RR{\hbox{I\kern-.2em\hbox{R}}}
\newcommand{\qed}{\hbox to 0pt{}\hfill$\rlap{$\sqcap$}\sqcup$ \vspace{3mm}}
\numberwithin{equation}{section}
\newtheorem{theorem}{Theorem}
\newtheorem{lemma}[theorem]{Lemma}
\tikzstyle{rect} = [draw, rectangle, fill=blue!20, text width=6em, text centered, minimum height=2em]
\tikzstyle{elli} = [draw, ellipse, fill=red!20, minimum height=2em]
\tikzstyle{circ} = [draw, circle, fill=white!20, minimum width=8pt, inner sep=5pt]
\tikzstyle{diam} = [draw, diamond, fill=white!20, text width=6em, text badly centered, inner sep=0pt]
\tikzstyle{line} = [draw, -latex']
\renewcommand{\@cite}[2]{\textcolor{red}{[}\textcolor{blue}{#1}\textcolor{red}{]}}
\date{}
\begin{document}
	
%	\title{HIV/AIDS Models and Implementing Control Strategies to Obtain 90-90-90 and 95-95-95 Targets}
\title{HIV/AIDS Suppression in North America: Intervention Plans  and Cost-Effectiveness of UNAIDS 90-90-90 and 95-95-95 Targets} 
%Control Strategies/Intervention Plans/Mitigation Strategies/Optimization Methods

	\author[1]{\small Nuzhat Nuari Khan Rivu\thanks{Email: nuzhatnuarikhan-rivu@uiowa.edu }}
	\author[2]{\small Md Kamrujjaman\thanks{Corresponding author Email: kamrujjaman@du.ac.bd}}
	\author[3]{\small  Asif Iqbal\thanks{Email: aiqbal3@huskers.unl.edu }}

	\affil[1]{\footnotesize Department of Mathematics, University of Iowa, Iowa City, IA 52242, USA}
	\affil[2]{\footnotesize Department of Mathematics, University of Dhaka, Dhaka 1000, Bangladesh}
	\affil[3]{\footnotesize Department of Economics, University of Nebraska-Lincoln, Lincoln, NE 68588, USA}%T6G  

	\maketitle
	\vspace{-1.0cm}
	\noindent\rule{6.3in}{0.02in}\\
	\noindent {\bf Abstract}\\
\noindent This study utilizes mathematical models to assess progress toward achieving the UNAIDS 90-90-90 and 95-95-95 targets aimed at managing and eradicating HIV/AIDS. It contrasts stochastic and deterministic models, focusing on their utility in optimizing public health strategies. Stochastic models account for real-world unpredictability, offering more realistic insights compared to deterministic approaches. 
The 95-95-95 targets aim for 95\% of people living with HIV to know their status, 95\% of those diagnosed to receive antiretroviral therapy (ART), and 95\% of those on ART to achieve viral suppression. These benchmarks are critical for reducing transmission and improving health outcomes. This analysis establishes the basic reproduction number ($R_0$) to guide interventions and examines the stability of disease-free and endemic equilibria, providing a foundation for applying optimal control strategies to minimize HIV prevalence effectively and cost-efficiently. Moreover, the data for this study was sourced from the official UNAIDS website, focusing on North America.
An innovative feature of this study is the application of the Stochastic method, which enhances model accuracy and operational efficiency in simulating HIV transmission under various interventions. This research offers actionable insights for policymakers and contributes to global efforts to achieve the 95-95-95 targets by 2030, advancing the fight against HIV/AIDS.

\noindent{\it \footnotesize Keywords}: {\small Stochastic Modeling; HIV; UNAIDS 90-90-90 and 95-95-95; Cost Effectiveness.}\\
	\noindent
\noindent{\it \footnotesize AMS Subject Classification 2020}: 92-10, 92C42, 92C60, 92D30, 92D45. \\
\noindent\rule{6.3in}{0.02in}

\clearpage
\section*{Highlights}
\begin{enumerate}
	\item The research directly addresses the critical UNAIDS 90-90-90 and 95-95-95 goals, offering actionable strategies to reduce HIV transmission, enhance awareness, and optimize treatment coverage.
	\item It conducts detailed numerical simulations and sensitivity analyses, shedding light on the impact of key factors such as awareness rates and ART coverage on epidemic outcomes, making it highly relevant for policymakers.
	\item It evaluates cost-effective intervention strategies using mathematical optimization techniques, such as Pontryagin’s Maximum Principle, balancing resource allocation with maximizing public health benefits.
	\item By incorporating stochastic elements to model uncertainties in transmission, treatment, and awareness, the study offers a realistic representation of HIV/AIDS dynamics, making it a valuable reference for future epidemiological research and applications.
	\item It demonstrates that by implementing all three control measures simultaneously at the calculated rates starting in 2001, UNAIDS could have achieved the 90-90-90 and 95-95-95 targets significantly earlier than the projected years of 2020 and 2030, respectively.
\end{enumerate}
%\chapter{HIV/AIDS Models and Implementing Control Strategies to Obtain 90-90-90 and 95-95-95 Targets}\label{chap:03}

\clearpage
\section{Introduction}\label{sec:introduction}
HIV/AIDS, first identified in the early 1980s, has been one of the most devastating global health challenges in modern history. Initially misunderstood and stigmatized, the virus rapidly spread across populations, claiming millions of lives and instilling fear worldwide. The introduction of antiretroviral therapy (ART) in the mid-1990s marked a transformative moment in the fight against HIV/AIDS. ART not only extended the life expectancy of individuals living with HIV but also reduced transmission rates, demonstrating that the epidemic could be controlled with effective public health measures. However, despite significant progress, HIV/AIDS continues to burden low and middle-income countries, with millions still unable to access testing, treatment, or education due to systemic healthcare inequalities.

\noindent  To combat this crisis, the Joint United Nations Programme on HIV/AIDS (UNAIDS) introduced the 90-90-90 targets in 2014, which were later expanded to 95-95-95. These goals aim to ensure that 95\% of people living with HIV know their status, 95\% of those diagnosed receive sustained antiretroviral therapy, and 95\% of those on ART achieve viral suppression. Achieving these targets is critical to halting the HIV/AIDS epidemic and realizing the Sustainable Development Goals (SDGs) related to health and well-being by 2030 \cite{payne2023trends, awaidy2023progress, lundgren2024sweden}. However, reaching these ambitious benchmarks requires an integrated and scientifically informed approach that goes beyond traditional methods.

\noindent  Mathematical modeling has become an essential tool in understanding the dynamics of HIV transmission and the effectiveness of intervention strategies. Historically, epidemiological models have played a pivotal role in informing public health policies, from smallpox eradication to understanding the spread of influenza and COVID-19 \cite{khan2023vaccine,mohammad2025bifurcation,mohammad2025stochastic,mohammad2024wiener}. For HIV/AIDS, these models help unravel the complexities of disease transmission, assess the impact of behavioral and biological factors, and optimize resource allocation in diverse settings. This study leverages both deterministic and stochastic modeling frameworks to analyze HIV/AIDS dynamics, with particular emphasis on the UNAIDS targets.

\noindent  Predictive models offer a broad understanding of disease progression by assuming fixed parameters, while stochastic models incorporate random variations, making them better suited to capture real-world uncertainties. In this research, we examine the interplay between key population segments: those who are susceptible, those living with HIV but unaware of their status, those aware of their infection and seeking treatment, and those progressing to AIDS. The model also evaluates critical factors like treatment adherence, awareness campaigns, and the transmission rates between different groups.

\noindent  We collected the data from the official website of UNAIDS \cite{unaids2023data} and worked with the data of North America from the year 2001 to 2023. We apply Pontryagin’s Maximum Principle to derive optimal control strategies for achieving the goals to end AIDS \cite{ahmed2021optimal}.The results show that the optimal control strategy is the combination of all three control measures. The control effect is closely linked to the weight. By simulating the impact of various intervention strategies, this research provides actionable insights for policymakers and healthcare practitioners to fine-tune their efforts toward achieving the 95-95-95 targets. Furthermore, the study highlights the importance of increasing awareness, expanding ART coverage, and tailoring interventions to specific population needs.

\noindent Ultimately, this research not only contributes to the theoretical advancements in epidemiological modeling but also bridges the gap between mathematical insights and real-world public health applications. By aligning scientific innovation with global health initiatives, it aspires to accelerate progress toward ending the HIV/AIDS epidemic and improving the lives of millions worldwide \cite{ahmed2020dynamics}.

\subsection*{Research Gap}
While this study employs stochastic modeling to analyze HIV/AIDS transmission and control strategies, it considers only three key factors for stochasticity. Although these factors—variability in awareness, treatment adherence, and transmission rates—provide significant insights, they do not capture the full range of complexities present in real-world scenarios. In practice, numerous additional factors influence HIV dynamics, including socioeconomic disparities, healthcare infrastructure variability, population mobility, and behavioral heterogeneity. Current models may oversimplify these intricacies, limiting their applicability to diverse settings. Incorporating a broader set of parameters and stochastic events, such as resource availability, demographic shifts, and policy changes, would offer a more comprehensive representation of practical scenarios. Furthermore, while the study uses a stochastic method to improve model accuracy, its potential to integrate these additional complexities remains unexplored. Bridging this gap is essential for developing models that better inform targeted interventions and global HIV/AIDS eradication efforts.

\noindent The study could be enhanced by integrating AI-driven approaches, such as machine learning algorithms, to improve the accuracy of HIV transmission predictions and identify patterns in large datasets. AI-based optimization models can dynamically allocate resources more efficiently, ensuring cost-effective achievement of the 95-95-95 targets. Additionally, natural language processing (NLP) \cite{sabir2024artificial, datta2023transferable} tools could optimize public health messaging, improving awareness and treatment adherence. Reinforcement learning techniques could further refine intervention strategies by adapting to real-time epidemiological and social changes. These AI innovations would address existing gaps in modeling complexity, offering more comprehensive, responsive, and targeted solutions for HIV/AIDS eradication efforts.

\subsection*{Objective of the Study}
The objective of this study is to investigate how the simultaneous application of three key strategies—screening, education, and treatment—could significantly accelerate the achievement of the UNAIDS 90-90-90 and 95-95-95 targets for eradicating HIV/AIDS. The research highlights that implementing these strategies in a coordinated manner has the potential to drastically reduce HIV transmission rates, suggesting that earlier adoption of such an integrated approach might have achieved the targets years earlier \cite{martin2016progress}. However, the study also questions the cost-efficiency of this comprehensive strategy, emphasizing the need to balance intervention costs with their effectiveness in diverse resource-constrained settings.

\noindent Section \ref{sec:introduction}, highlighting the global HIV/AIDS epidemic and the importance of mathematical modeling in achieving the UNAIDS 90-90-90 and 95-95-95 targets. In section \ref{sec:modelformulation}, a compartmental model with six categories is developed to capture the dynamics of HIV transmission, focusing on transitions between different stages of infection and treatment. In Appendix \ref{sec:mathematicalanalysis}, we examine the stability of disease-free and endemic equilibria, providing insights into the conditions for HIV persistence or eradication. Section \ref{sec:simulations} are incorporated to account for real-world uncertainties in transmission, awareness, and treatment rates. The most significant section \ref{sec:optimal}, conducts extensive sensitivity analyses and simulations to assess the impact of key parameters, such as awareness and ART, on epidemic control. It demonstrates how reducing transmission rates and increasing awareness can substantially lower infection levels and improve treatment outcomes. Finally, section \ref{sec:discussion} underscores the importance of stochastic modeling in optimizing public health strategies and calls for further refinement to enhance practical applications.

\section{Model Formulation}\label{sec:modelformulation}
Within the framework of this study, we construct a mathematical model for HIV/AIDS transmission in order to investigate the spread of HIV among individuals. Each of the individuals is separated into one of six sections, which are, susceptible ($S$), latent ($E$), aware ($I_1$), unaware ($I_2$), treated ($T$), and AIDS ($A$) compartments.

\begin{figure} [H]
	\centering
	\begin{tikzpicture}[small mindmap, outer sep=0pt, text=black]
		
		\begin{scope}[concept color=yellow!100!white!80]
			\node (s) at (-8,0) [concept, scale=0.8] {\bf \small Susceptible Class}
			[counterclockwise from=70];
		\end{scope}
		
		\begin{scope}[concept color=orange!50]
			\node (l) at (-5,0) [concept, scale=0.8] {\bf \small Latent Class}
			[clockwise from=70];
		\end{scope}
  
		\begin{scope}[concept color=red!50]
			\node (ia) at (-2.5,-2) [concept, scale=0.8] {\bf \small Infected Class (Aware)}
			[clockwise from=70];
		\end{scope}
  
           \begin{scope}[concept color=red!60]
			\node (iu) at (-2.5,2) [concept, scale=0.8] {\bf \small Infected Class (Unaware)}
			[clockwise from=70];
		\end{scope}
		
		\begin{scope}[concept color=green!70]
			\node (t) at (1,2) [concept, scale=0.8] {\bf \small Treated Class}
			[clockwise from=70];
		\end{scope}
		
		\begin{scope}[concept color=red!80]
			\node (a) at (1,-2) [concept, scale=0.8] {\bf \small AIDS Class}
			[clockwise from=70];
		\end{scope}
		
		\path (s) to[circle connection bar switch color=from (yellow!100!white!80) to (orange!50)] (l) ;
		\path (l) to[circle connection bar switch color=from (orange!50) to (red!50)] (ia) ;
            \path (l) to[circle connection bar switch color=from (orange!50) to (red!60)] (iu) ;
            \path (iu) to[circle connection bar switch color=from (red!60) to (red!50)] (ia) ;
            \path (ia) to[circle connection bar switch color=from (red!50) to (green!70)] (t) ;
		\path (ia) to[circle connection bar switch color=from (red!50) to (red!80)] (a) ;
            \path (iu) to[circle connection bar switch color=from (red!60) to (red!80)] (a) ;
            \path (t) to[circle connection bar switch color=from (green!70!) to (red!80)] (a) ;
		
		\begin{pgfonlayer}{background}    
			\draw 
			[concept connection,->,orange!50,shorten >= -0.15pt,-{Stealth[angle=70:1pt 6]}] 
			(s) to (l); 
			\draw 
			[concept connection,->,red!50,shorten >= -0.15pt,-{Stealth[angle=70:1pt 6]}] 
			(l) to (ia); 
                \draw 
			[concept connection,->,red!50,shorten >= -0.15pt,-{Stealth[angle=70:1pt 6]}] 
			(l) to (iu);
                \draw
                [concept connection,->,red!50,shorten >= -0.15pt,-{Stealth[angle=70:1pt 6]}] 
			(iu) to (ia);
                \draw
                [concept connection,->,green!70,shorten >= -0.15pt,-{Stealth[angle=70:1pt 6]}] 
			(ia) to (t);
                \draw 
			[concept connection,->,red!75,shorten >= -0.15pt,-{Stealth[angle=70:1pt 6]}] 
			(ia) to (a);
			\draw 
			[concept connection,->,red!75,shorten >= -0.15pt,-{Stealth[angle=70:1pt 6]}] 
			(iu) to (a); 
               \draw 
			[concept connection,->,red!75,shorten >= -0.15pt,-{Stealth[angle=70:1pt 6]}] 
			(t) to (a); 

		\end{pgfonlayer}
		
	\end{tikzpicture}
	\caption{Compartmental model scheme visualization.}
	\label{fig:2.1}
\end{figure}
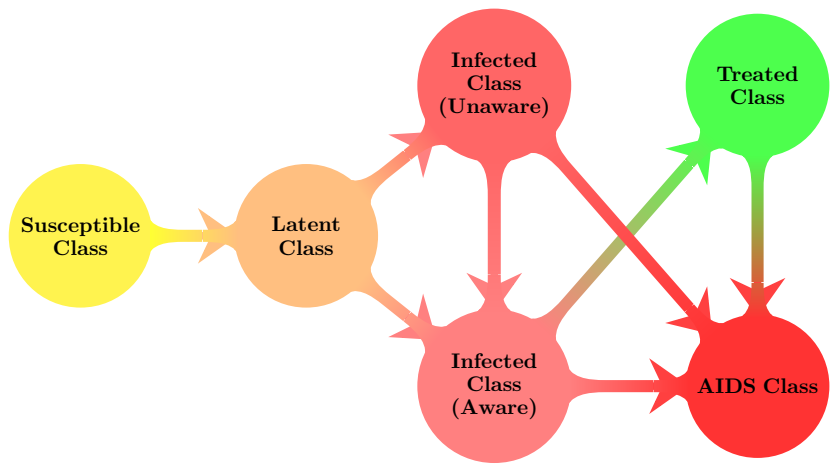
%\noindent At time $t$, the total number of individuals is denoted by $N(t)$, where $N(t)$ is the sum of $S(t)$, $E(t)$, $I_1(t)$, $I_2(t)$, $T(t)$, and $A(t)$. The initial conditions are given as $S(0), E(0), I_1(0), I_2(0), \\T(0),$ and $A(0)$.\\
Usually, any affected person can spread the disease. Individuals with infection in various compartments have varying infectivities. We make the assumption in our investigation that those who are aware do not spread the virus. Define $\lambda(t)$ as the force of infection for individuals infected with HIV, expressed as follows:
\begin{equation*}
\lambda(t) = \alpha \epsilon E(t) + \alpha I_{2}(t),   
\end{equation*}
In this equation, the term $\alpha$ refers to the effective contact rate for HIV transmission, whereas $\epsilon$ is a coefficient that represents the lower infectivity of latent persons in comparison to infected individuals. The parameter \(\epsilon\) at this point meets the condition \(0 < \epsilon \leq 1\). \(\kappa\) is the recruitment rate of persons who are vulnerable to the disease. When susceptible individuals come into touch with infectious persons at the rate \(\lambda(t)\), they have the potential to get infected. A portion of the latent individuals, denoted by $p$, enter the aware class, while the remaining latent individuals, denoted by $(1-p)$, enter the unaware infected class. The rate of symptomatic infection in latent persons is \(\beta\). Unaware infected people become aware at \(\gamma\). ART treatment rates for aware infected individuals are \(\psi\). At a rate of \(\xi\), ART-treated patients transition to latent class due to treatment failure. At the rate \(\delta_1\), aware infected people transition to AIDS class, whereas unaware ones transition at the rate \(\delta_2\). The natural mortality rate is \(\mu\) and the infectious mortality rate is \(\mu_0\). The flow diagram of the model is shown in Figure \ref{fig:2.1}.

Taking all these factors into account, we have formulated the following mathematical model:
\begin{equation}\label{Model_1}
	%\left\{
	\begin{cases}
	\begin{aligned}
		\frac{dS}{dt} &= \kappa - \alpha(\epsilon E + I_2)S - \mu S \\
		\frac{dE}{dt} &= \alpha(\epsilon E + I_2)S - (\beta + \mu)E \\
		\frac{dI_1}{dt} &= p\beta E + \gamma I_2 - (\mu + \psi + \delta_1)I_1 \\
		\frac{dI_2}{dt} &= (1 - p)\beta E - (\gamma + \mu + \delta_2)I_2 \\
		\frac{dT}{dt} &= \psi I_1 - (\mu + \xi)T \\
		\frac{dA}{dt} &= \delta_1 I_1 + \delta_2 I_2 + \xi T - \mu_0 A - \mu A
	\end{aligned}
	\end{cases}
%	\right.
%	\tag{1}
\end{equation}
where, the initial conditions, 
\begin{equation}
	S(0) \geq 0, \, E(0) \geq 0, \, I_1(0) \geq 0, \, I_2(0) \geq 0, \, T(0) \geq 0, \, A(0) \geq 0. \label{initial}    %\tag{7} 
\end{equation}
%$(S_0,E_0,I_{10},I_{20},T_0,A_0),$
and the total population is, $$N(t)=S(t)+E(t)+I_1(t)+I_2(t)+T(t)+A(t)$$ with $t\in [0,\infty)$,
%with $N_0=S_0+E_0+I_0+R_0$, and $t\in (0,\infty)$.
See the schametric diagram in Figure \ref{fig:2.1}.
\noindent The Tables \ref{tab:2.1} and \ref{tab:2.2} include a listing of the parameters and the state variables, respectively.

\begin{table}[ht]
\centering
\caption{Model (\ref{Model_1})'s state variables.}
\label{tab:2.1}
\begin{tabular}{@{}ll@{}}
\toprule
State variable & Description                           \\ \midrule
$S$               & Susceptible population           \\
$E$               & Latent population                \\
$I_1$            & Aware infected population        \\
$I_2$             & Unaware infected population      \\
$T$               & HIV-infected population who receive ART treatment \\
$A$               & AIDS population                  \\ \bottomrule
\end{tabular}
\end{table}

\begin{table}[htbp]
\centering
\caption{Explanation of parameters in model (\ref{Model_1}).}
\label{tab:2.2}
\resizebox{\textwidth}{!}{%
\begin{tabular}{@{}cllcl@{}}
\toprule
\textbf{Parameter} & \textbf{Explanation} & \textbf{Unit} & \textbf{Value} & \textbf{Source} \\ \midrule
$\kappa$ & Susceptible population recruitment rate & Year$^{-1}$ & 1130 & Estimated \\
$\alpha$ & The rate of successful transmission of HIV through direct contact & Year$^{-1}$ & $1.7649e^{-5}$ & \cite{xue2023modelling} \\
$\beta$ & The rate at which people in a latent state transition to either \\
& an unaware or aware infected state & Year$^{-1}$ & 0.3333 & Estimated \\
$\epsilon$ & Diminished infectiousness of latent persons 
& No Dimension & 0.7691 & \cite{xue2023modelling} \\
$p_0$ & The minimal proportion of individuals who are aware of being \\
& infected & No Dimension & 0.0826 & Estimated \\
$p_{\text{max}}$ & The maximum proportion of persons who are aware of being\\
& affected & No Dimension & 0.65584 & Estimated  \\
$\delta_1$ & The rate at which people who are aware of the disease enrol \\
& in AIDS class & No Dimension & 0.0157 & Assumed \\
$\delta_2$ & The rate at which people who are unaware of the disease move \\
& to AIDS class & No Dimension & 0.0157 & Assumed \\
$\psi$ & Treatment rate for aware infected individuals & No Dimension & 0.3487 & Estimated  \\
$\gamma$ & The rate at which those who were previously unaware become \\
& aware & No Dimension & 0.0157 & Estimated \\
$\mu_0$ & Mortality rate of individuals with AIDS & Year$^{-1}$ & 0.2000 & \cite{xue2023modelling} \\
$\mu$ & The rate of deaths that occur naturally & Year$^{-1}$ & 0.0143 & \cite{xue2023modelling} \\ \bottomrule
\end{tabular}
}
\end{table}

\noindent To facilitate calculation and analysis, let \( k_1 = \beta + \mu \), \( k_2 = \mu + \xi + \delta_1 \), \( k_3 = \gamma + \mu + \delta_2 \), and \( k_4 = \mu + \xi \). %, then model (\ref{Model_1}) is changed as follows:
Let us define $\mathbb{X}=(X_1,X_2,X_3,X_4,X_5,X_6)=(S,E,I_1,I_2,T,A)$. Then the mathematical model (\ref{Model_1}) is given in the compact form as below,
\begin{equation}\label{Model_2}
	\mathbb{X}'=G(X,t) , \  \mathbb{X}(0)=\mathbb{X}_0,
	%	\label{compact}
\end{equation}
where, the initial conditions, $\mathbb{X}_0=(S_0,E_0,I_{10},I_{20}, T_0,A_0),$  and,
\begin{align}
G(X,t)&=[\kappa - (\alpha \epsilon E + \alpha I_2)S - \mu S,(\alpha \epsilon E + \alpha I_2)S - k_1 E,p\beta E - k_2 I_1 + \gamma I_2,\nonumber \\ 
&(1 - p)\beta E - k_3 I_2,\psi I_1 - k_4 T, \delta_1 I_1 + \delta_2 I_2 + \xi T - \mu_0 A - \mu A].\nonumber
\end{align}

Note that the basic reproduction number, $\mathcal{R}_0$ of \eqref{Model_2} is
\begin{equation*}
	\displaystyle \mathcal{R}_0 = \rho(FV^{-1}) = \frac{\alpha \varepsilon S_0}{k_1} + \frac{\beta S_0 \alpha (1 - p)}{k_1 k_3}
\end{equation*}
The detailed procedure for determining $\mathcal{R}_0$ can be found in Appendix \ref{sec:mathematicalanalysis}.
 
 \subsection*{The Feasible Area}
 \begin{lemma}\label{Lem1}
 	Assume the initial state variables in model (\ref{Model_2}) are positive, specifically,\\ \(S(0), E(0), I_1(0), I_2(0), T(0), A(0)>0\). Then, the functions \(S(t), E(t), I_1(t), I_2(t), T(t),\) and \(A(t)\) remain non-negative for all \(t \geq 0\) in model (\ref{Model_2}), and
 	\[
 	\lim_{t \to \infty} \sup N(t) \leq \frac{\kappa}{\mu}.
 	\]
 \end{lemma}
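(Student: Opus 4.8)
The plan is to treat the two conclusions separately, after first recording that the right-hand side $G$ of \eqref{Model_2} is polynomial in $(S,E,I_1,I_2,T,A)$, hence locally Lipschitz, so a unique solution exists on a maximal interval $[0,t_{\max})$; the bounds established below will in fact force $t_{\max}=\infty$.

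For the non-negativity claim I would argue by a minimal-time (barrier) argument rather than by solving anything explicitly. Suppose for contradiction that some coordinate leaves the closed orthant, and set $t_1=\inf\{t>0:\min_i X_i(t)<0\}$. By continuity and the strict positivity of the initial data, $t_1>0$, every coordinate satisfies $X_j(t_1)\ge 0$, and there is an index $i$ with $X_i(t_1)=0$. The crucial observation is that whenever the $i$-th coordinate vanishes while the others are non-negative, the $i$-th component of $G$ is a sum of non-negative terms: $\dot S|_{S=0}=\kappa>0$, $\dot E|_{E=0}=\alpha I_2 S\ge0$, $\dot I_1|_{I_1=0}=p\beta E+\gamma I_2\ge0$, $\dot I_2|_{I_2=0}=(1-p)\beta E\ge0$, $\dot T|_{T=0}=\psi I_1\ge0$, and $\dot A|_{A=0}=\delta_1 I_1+\delta_2 I_2+\xi T\ge0$. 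Hence $\dot X_i(t_1)\ge0$, contradicting $X_i$ turning negative just after $t_1$. Equivalently, this verifies the tangency (Nagumo-type) condition on each face of $\mathbb{R}^6_{\ge0}$, so the non-negative orthant is positively invariant.

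For the bound on $N$, I would add the six equations of \eqref{Model_1} and track the cancellations compartment by compartment: the $\alpha(\epsilon E+I_2)S$ terms from $\dot S$ and $\dot E$ annihilate; the $p\beta E$ and $(1-p)\beta E$ recombine with the $-\beta E$ inside $-k_1 E$; and the transfer terms $\gamma I_2$, $\psi I_1$, $\xi T$, $\delta_1 I_1$, $\delta_2 I_2$ each cancel against their image in the receiving compartment. What survives is $\frac{dN}{dt}=\kappa-\mu N-\mu_0 A$. Using $A(t)\ge0$ from the first part together with $\mu_0>0$, this gives the differential inequality $\frac{dN}{dt}\le\kappa-\mu N$, and the comparison lemma (integrating factor $e^{\mu t}$) yields $N(t)\le \frac{\kappa}{\mu}+\bigl(N(0)-\frac{\kappa}{\mu}\bigr)e^{-\mu t}$. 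Letting $t\to\infty$ produces $\limsup_{t\to\infty}N(t)\le \kappa/\mu$; moreover the boundedness of $N$, and hence of every coordinate, rules out finite-time blow-up, so the solution is global.

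The integrating-factor estimate and the appeal to the invariance theorem are routine. The one step that demands real care is the algebraic bookkeeping in computing $\dot N$: the cancellations must be followed exactly so that the final expression is $\kappa-\mu N-\mu_0 A$ with no stray cross term, since a spurious positive term would destroy the inequality. A secondary subtlety is the rigor of the minimal-time argument—one must select the \emph{first} exit time and \emph{some} vanishing index, so that all other coordinates are still non-negative at the instant the tangency condition is invoked.
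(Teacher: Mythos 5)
Your treatment of the bound on $N$ coincides with the paper's: sum the six equations, obtain $\frac{dN}{dt}=\kappa-\mu N-\mu_0 A\le\kappa-\mu N$ (your cancellation bookkeeping is exactly right), and conclude by the comparison principle. For non-negativity, however, you take a genuinely different route. The paper fixes the first time $t_1$ at which the minimum of the six coordinates vanishes and then, for the coordinate achieving that minimum (it writes out only the case of $S$), discards the non-negative inflow terms to get a \emph{linear differential inequality} $\dot S\ge -(\alpha\varepsilon E+\alpha I_2+\mu)S$ on $[0,t_1]$; integrating yields $S(t_1)\ge S(0)\,e^{-\int_0^{t_1}(\alpha\varepsilon E+\alpha I_2+\mu)\,d\tau}>0$, a direct contradiction with $S(t_1)=0$. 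You instead verify quasi-positivity of the vector field on each face of the orthant and appeal to a Nagumo-type invariance theorem. Your version has the merit of treating all six coordinates uniformly (the paper only does $S$ explicitly) and of addressing local existence and the exclusion of finite-time blow-up, which the paper leaves implicit.

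The one point needing repair is that your pointwise contradiction, as literally stated, is not self-contained. From $X_i(t_1)=0$ and $X_i(s)<0$ for times $s\downarrow t_1$ one can only deduce $\dot X_i(t_1)\le 0$; combined with your computation $\dot X_i(t_1)\ge 0$ this gives $\dot X_i(t_1)=0$, which is \emph{not} a contradiction: a function such as $-(t-t_1)^3$ vanishes with zero derivative at $t_1$ and still turns negative immediately afterwards. And the zero-derivative case genuinely occurs here when several coordinates vanish simultaneously, e.g.\ $\dot E\big|_{E=0}=\alpha I_2 S=0$ if $I_2(t_1)=0$ as well; only the $S$-face enjoys a strictly positive inward derivative $\kappa>0$. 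So you must either genuinely cite Nagumo's theorem (quasi-positivity of a locally Lipschitz field implies forward invariance of the non-negative orthant) --- whose proof is precisely the nontrivial step your "contradiction" narrative elides --- or patch the argument the paper's way: on $[0,t_1]$ each coordinate satisfies $\dot X_i\ge -c_i(t)X_i$ with $c_i$ continuous, after discarding the non-negative inflow terms, whence $X_i(t_1)\ge X_i(0)\,e^{-\int_0^{t_1}c_i(\tau)\,d\tau}>0$. With that citation made explicit, or with the integrating-factor repair, your proof is complete.
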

 %\textbf{Lemma 3.1:} Assume the initial state variables in model (\ref{Model_2}) are positive, specifically, \(S(0), E(0), I_1(0), I_2(0), T(0), A(0)>0\). Then, the functions \(S(t), E(t), I_1(t), I_2(t), T(t),\) and \(A(t)\) remain non-negative for all \(t \geq 0\) in model (\ref{Model_2}), and
 %\[
 %\lim_{t \to \infty} \sup N(t) \leq \frac{\kappa}{\mu}.
 %\]
 \begin{proof}
 	Consider, \(M(t)\) as the minimum value of all state variables, i.e.,
 	\begin{align*}
 		M(t) = \min\{S(t), E(t), I_1(t), I_2(t), T(t), A(t)\}  
 	\end{align*}
 	Now, if \(t > 0\), then it is obvious that \(M(t) > 0\). We assume that there exists a minimum time \(t_1\) satisfying that \(M(t_1) = 0\). Suppose \(M(t_1) = S(t_1)\), then \(E(t)\), \(I_1(t)\), \(I_2(t)\), \(T(t)\), \(A(t) > 0\) \(\forall \) $t$ \(\in [0, t_1]\). When \(t \in [0, t_1]\), we get,
 	\begin{align*}
 		\frac{dS}{dt} &= \kappa - \lambda(t)S - \mu S \\
 		&\geq -\lambda(t)S - \mu S 	= -(\alpha \varepsilon E + \alpha I_2)S - \mu S.
 	\end{align*}
 	After integrating both sides of the equation from 0 to \(t\), the resulting expression is as follows:
 	\begin{align*}
 		S(t) &\geq S(0)e^{-\int_{0}^{t}(\alpha \varepsilon E(\tau)+\alpha I_2(\tau)+\mu)d\tau} > 0, \quad \forall \: t \in [0, t_1],
 	\end{align*}
 	which contradicts with \(M(t_1) = S(t_1) = 0\). Therefore, \(S(t)\), \(E(t)\), \(I_1(t)\), \(I_2(t)\), \(T(t)\), \(A(t)>0\) when, \(t \geq 0\).
 	Next, we determine the limits for the state variables. Given \(N(t) = S(t) + E(t) + I_1(t) + I_2(t) + T(t) + A(t)\), by summing the equations in model (\ref{Model_2}), we obtain,
 	\begin{align*}
 		\frac{dN(t)}{dt} &= \kappa - \mu(S(t) + E(t) + I_1(t) + I_2(t) + T(t)) - \mu_0 A(t) - \mu A(t).
 	\end{align*}
 	As, \(\mu_0 \geq 0\) and
 	\begin{align*}
 		\frac{dN(t)}{dt} &= \kappa - \mu N(t) - \mu_0 A(t) \leq \kappa - \mu N(t),
 	\end{align*}
 	using the Comparison Principle Theorem \cite{mukandavire2009modelling}, we get,
 	\begin{align*}
 		N(t) &\leq N(0)e^{-\mu t} + \frac{\kappa}{\mu}(1 - e^{-\mu t}),
 	\end{align*}
 	where \(N(0) = S(0) + E(0) + I_1(0) + I_2(0) + T(0) + A(0)\). Hence,
 	\[
 	\lim_{t \to \infty} \sup N(t) \leq \frac{\kappa}{\mu}.
 	\]
 \end{proof}
 \noindent So, the feasible area $\Omega$ of model (\ref{Model_2}) can be defined by,
 \begin{equation}
 	\Omega = \left\{ (S(t), E(t), I_1(t), I_2(t), T(t), A(t)) \in \mathbb{R}_+^6 : 0 \leq S, E, I_1, I_2, T, A \leq N, N(t) \leq \frac{\kappa}{\mu} \right\} \tag{5}
 	\label{lemma:3.2} 
 \end{equation}

 The %positivity and boundedness of solutions, 
 fixed points, basic reproduction number, and stability analysis of disease-free equilibrium  and endemic equilibrium state are discussed in Appendix \ref{sec:mathematicalanalysis}.

%\begin{equation}
%\left\{
%\begin{aligned}
%\frac{dS}{dt} &= \kappa - (\alpha \epsilon E + \alpha I_2)S - \mu S, \\
%\frac{dE}{dt} &= (\alpha \epsilon E + \alpha I_2)S - k_1 E, \\
%\frac{dI_1}{dt} &= p\beta E - k_2 I_1 + \gamma I_2, \\
%\frac{dI_2}{dt} &= (1 - p)\beta E - k_3 I_2, \\
%\frac{dT}{dt} &= \psi I_1 - k_4 T, \\
%\frac{dA}{dt} &= \delta_1 I_1 + \delta_2 I_2 + \xi T - \mu_0 A - \mu A
%\end{aligned}
%\right.
%\tag{2}
%\label{Model_2}
%\end{equation}
%===============================================omitted the uppder paprt=====================================
%============================================================================================================
\section{Computational Analysis of Numerical Data}
\label{sec:simulations}
UNAIDS \cite{unaids2023}, which stands for the Joint United Nations Programme on HIV/AIDS and Centers for Disease Control and Prevention (CDC) \cite{unaids2023data} %\cite{CDC_HIV_Surveillance} 
official websites are the source of the HIV case statistics. We took into consideration the transmission of HIV among individuals between the ages of 15 and 60. Then, we use these numbers to model (\ref{Model_2}). We utilized the annual data on new HIV cases in North America from 2001 to 2024 with the Markov Chain Monte Carlo (MCMC) method to estimate the unknown parameters of model (\ref{Model_2}), as presented in Table \ref{tab:2.2}. Additionally, we employed Partial Rank Correlation Coefficients (PRCC) to assess the global sensitivity of the parameters in model (\ref{Model_2}). The objective is to find the most critical characteristic influencing HIV transmission. Figure \ref{fig:sensitivity_analysis} shows the PRCCs of the parameters with regard to $\mathcal{R}_0$. Our study results show that the parameters \(\varepsilon\) and \(\alpha\) have a positive correlation with \(\mathcal{R}_0\), while the parameters \(\psi\), \(\delta\), \(\gamma\), and \(p\) are negatively correlated. Moreover, the parameters \(p\) and \(\alpha\) exhibit the highest sensitivity to \(\mathcal{R}_0\). Therefore, reducing \(\alpha\) and increasing \(p\) could effectively lower HIV transmission.
\begin{figure}[H]
  \centering
  \includegraphics[width=0.4\linewidth]{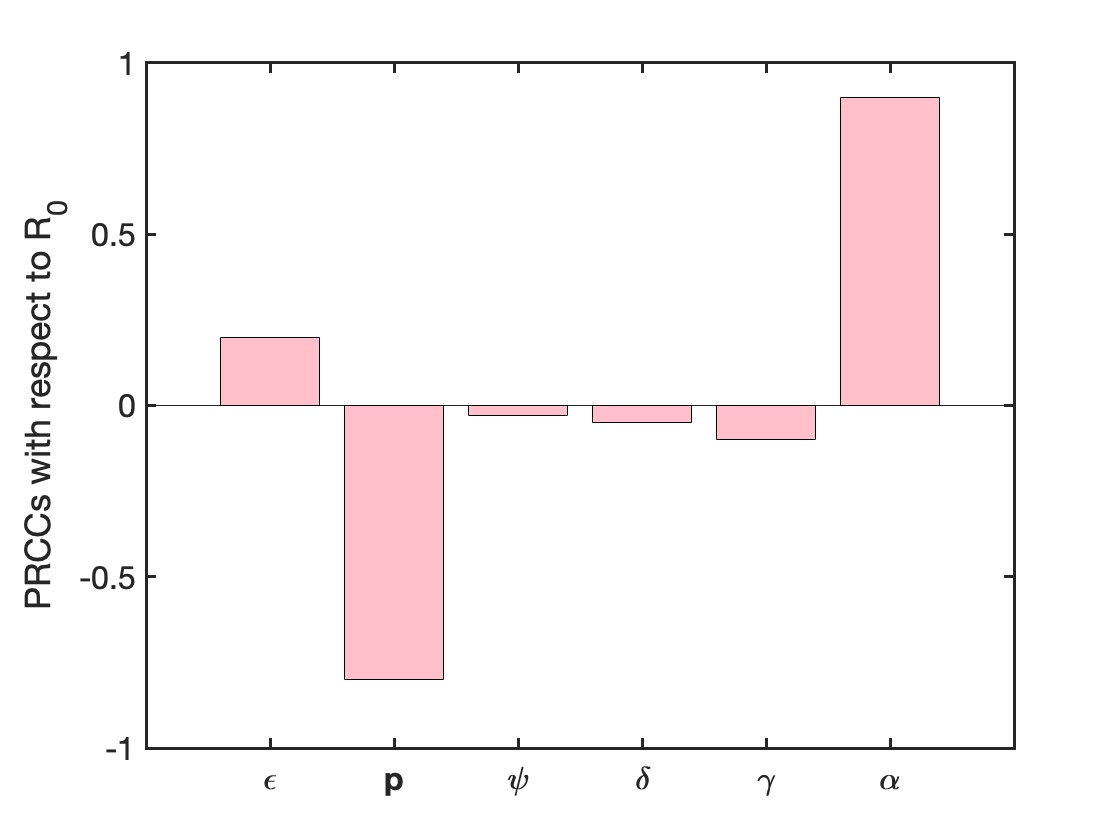}
  \caption{Global sensitivity analysis for the parameters \(\varepsilon, p, \psi, \delta, \gamma, \alpha\).}
  \label{fig:sensitivity_analysis}
\end{figure}

\noindent In our simulations, to model the fraction of latent individuals who become aware of their condition, we define the function \(p = p(t) = (p_0 - p_{\text{max}})e^{-at} + p_{\text{max}}\), where \(p_0\) is the initial fraction of aware infected individuals, \(p_{\text{max}}\) is the maximum fraction achievable, and \(a\) is a constant describing the rate of increase.\\
Evidence supporting the stability of the model is presented in figures \ref{fig:stability_endemic} and \ref{fig:stability_disease_free}. Figures \ref{fig:a}, \ref{fig:b}, \ref{fig:c}, \ref{fig:d} and \ref{fig:e} illustrate the stability at the endemic equilibrium for susceptible, latent, aware infected, unaware infected and treated individual respectively which tends to be unstable when \(R_0 > 1\). Initially, the number of susceptible individuals, \(S\), increases and subsequently stabilizes. The populations of \(E\), \(I_1\), and \(I_2\) also exhibit initial increases before reaching a steady state. The number of treated individuals, \(T\), consistently rises. These groups all show pronounced initial increases followed by stabilization. In contrast, figures \ref{fig:aa}, \ref{fig:bb}, \ref{fig:cc}, \ref{fig:dd} and \ref{fig:ee} show that the disease-free equilibrium point is generally stable when \(R_0 < 1\). The numbers of susceptible people ($S$), tend to climb to a greater level, after which they tend to remain steady. On the other hand, the numbers of individuals who are exposed ($E$), aware of infection ($I_1$), and unaware of infection ($I_2$) tend to decline initially, and then they tend to remain constant after that. Over the course of time, the people who have been treated ($T$) tend to develop.
\begin{figure}[H]
	\centering
	\subfigure[The variation in the number of susceptible individuals ($S$) over time.]{\includegraphics[width=0.3\linewidth]{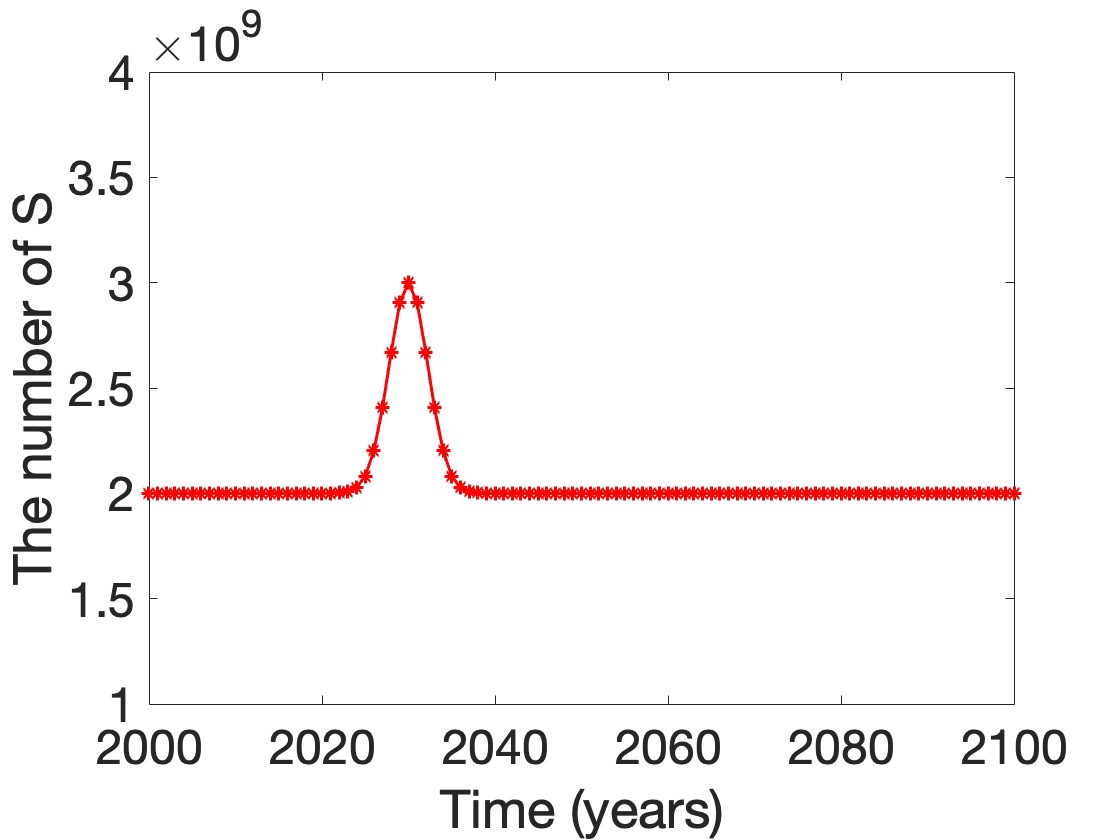}\label{fig:a}}
	\subfigure[The variation in the number of latent individuals ($E$) over time.]{\includegraphics[width=0.3\linewidth]{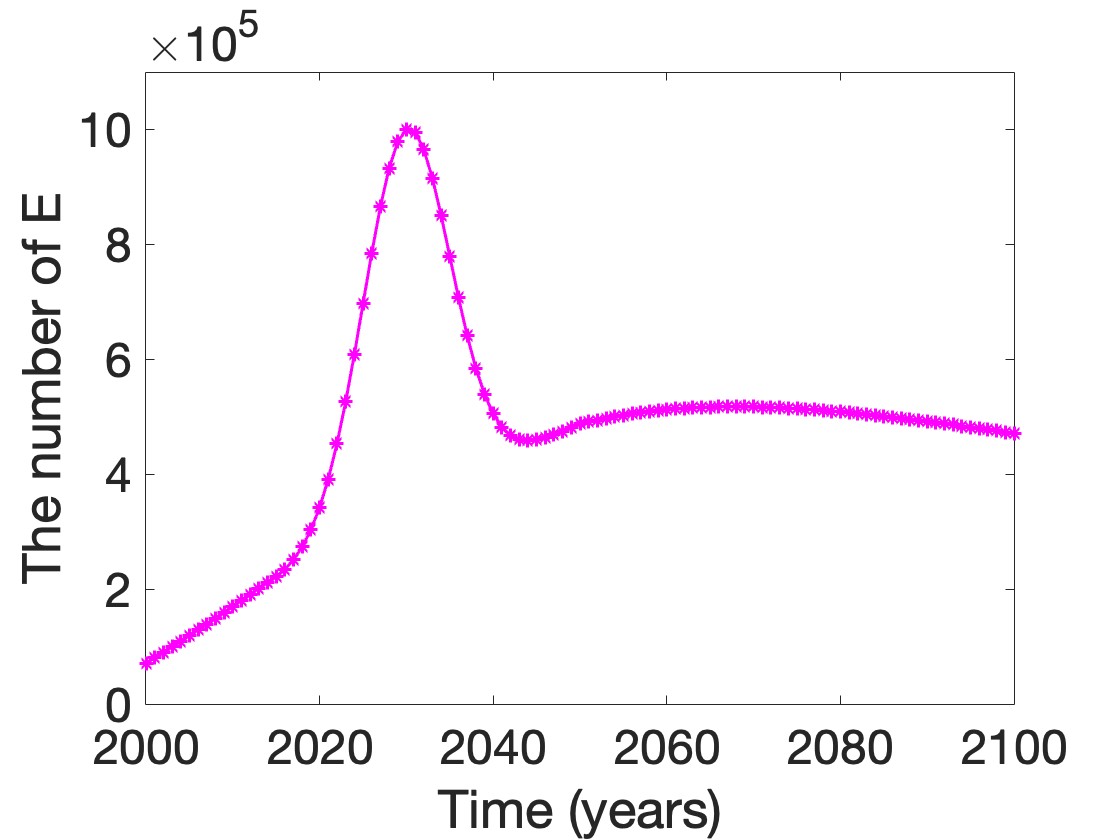} \label{fig:b}}
	\subfigure[The variation in the number of aware infected individuals ($I_1$) over time.]{\includegraphics[width=0.3\linewidth]{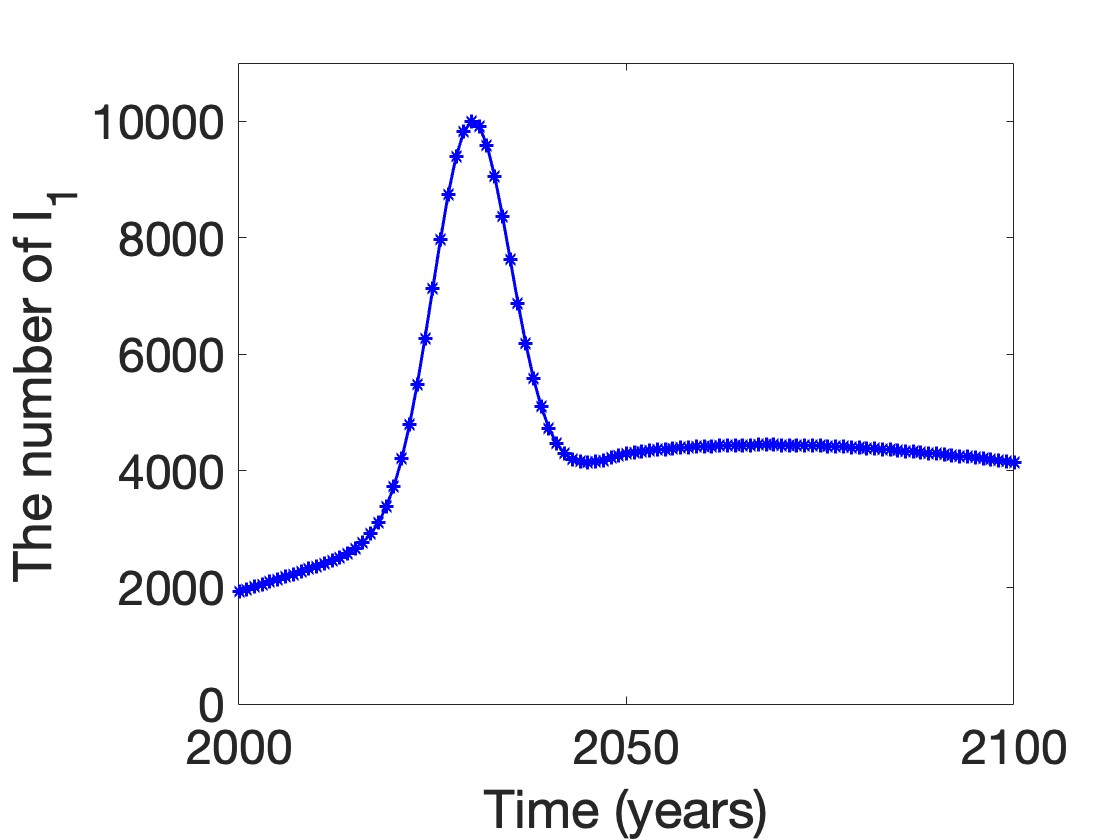}\label{fig:c}}
	\subfigure[The variation in the number of unaware infected individuals ($I_2$) over time.]{\includegraphics[width=0.3\linewidth]{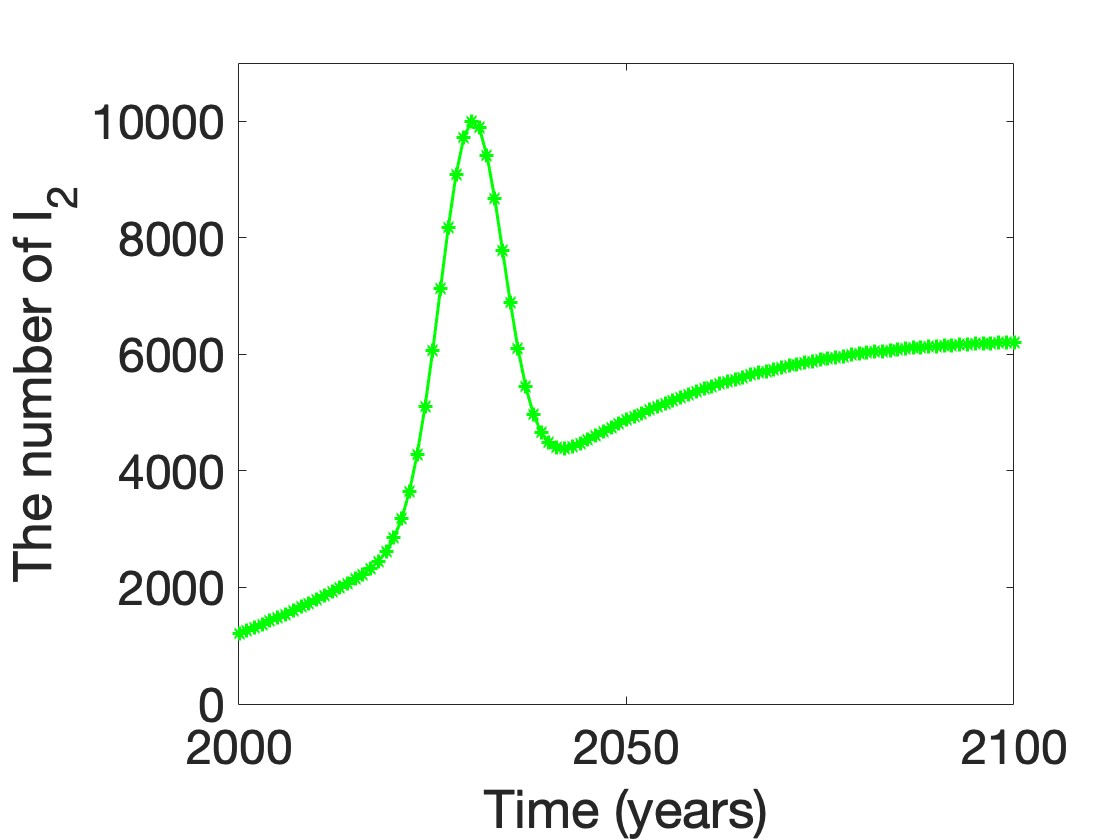}\label{fig:d}}
	\subfigure[The variation in the number of individuals who receive ART treatment ($T$) over time.]{\includegraphics[width=0.3\linewidth]{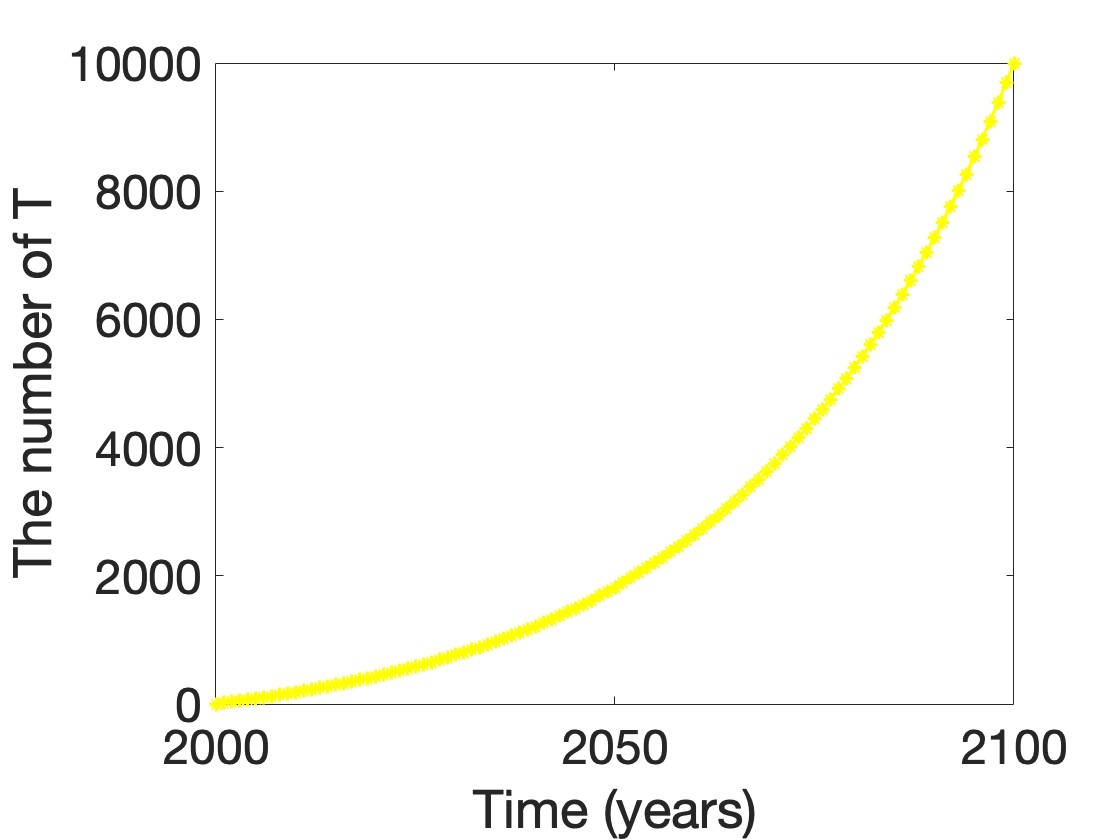}\label{fig:e}}
	\caption{A stability analysis with the following parameter values: \(\alpha = 1.7649 \times 10^{-6}\), \(\gamma = 0.015697\), \(\delta = 0.056044\), \(\psi = 0.34877\), \(p = 0.022591\), \(a = 0.055961\), \(p_{\text{max}} = 0.45584\), and \(R_0 = 1.7261 > 1\), indicating an endemic equilibrium.}
	\label{fig:stability_endemic}
\end{figure}

\begin{figure}[H]
\centering
\subfigure[The variation in the number of susceptible individuals ($S$) over time.]{\includegraphics[width=0.3\linewidth]{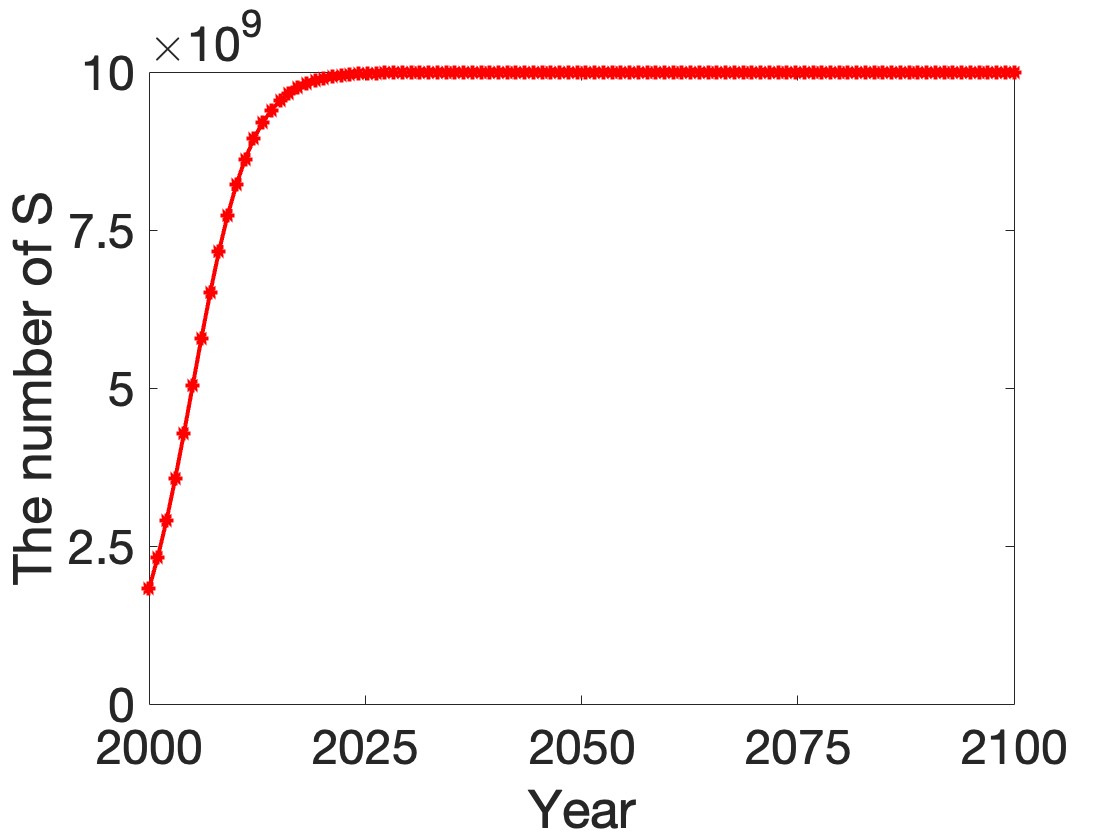}\label{fig:aa}}
\subfigure[The variation in the number of latent individuals ($E$) over time.]{\includegraphics[width=0.3\linewidth]{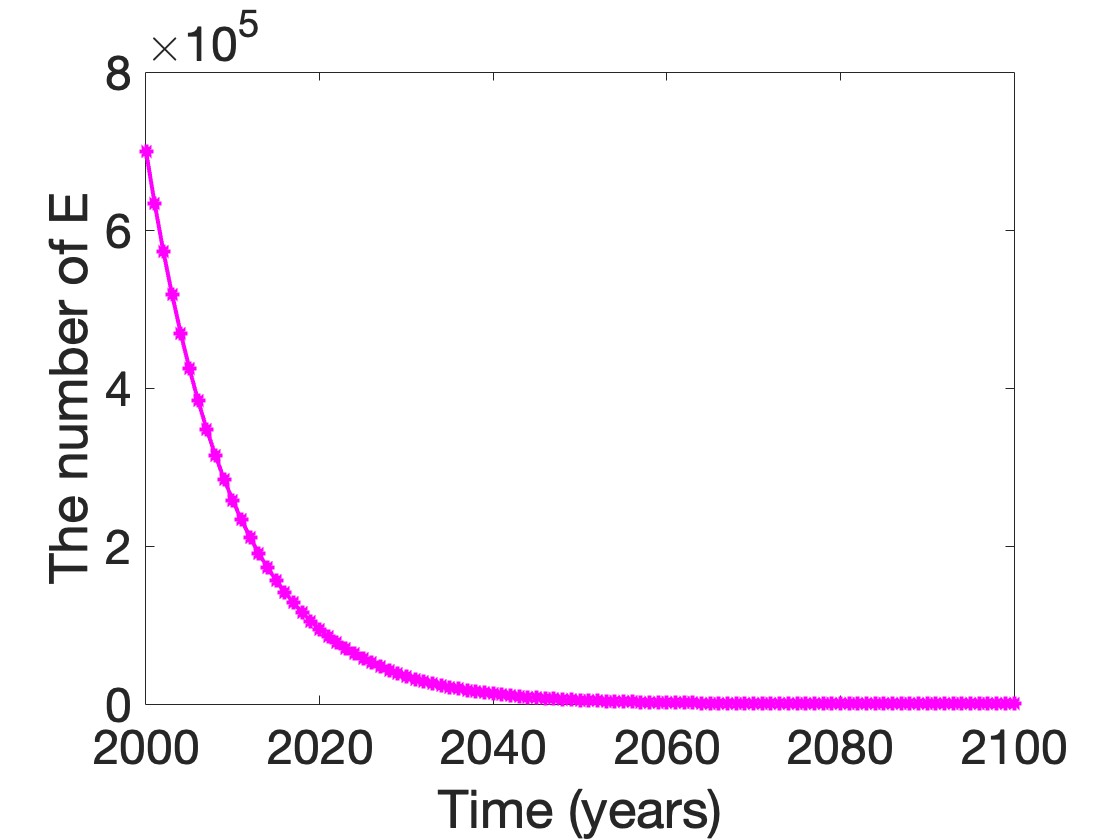}\label{fig:bb}}
\subfigure[The variation in the number of aware infected individuals ($I_1$) over time.]{\includegraphics[width=0.3\linewidth]{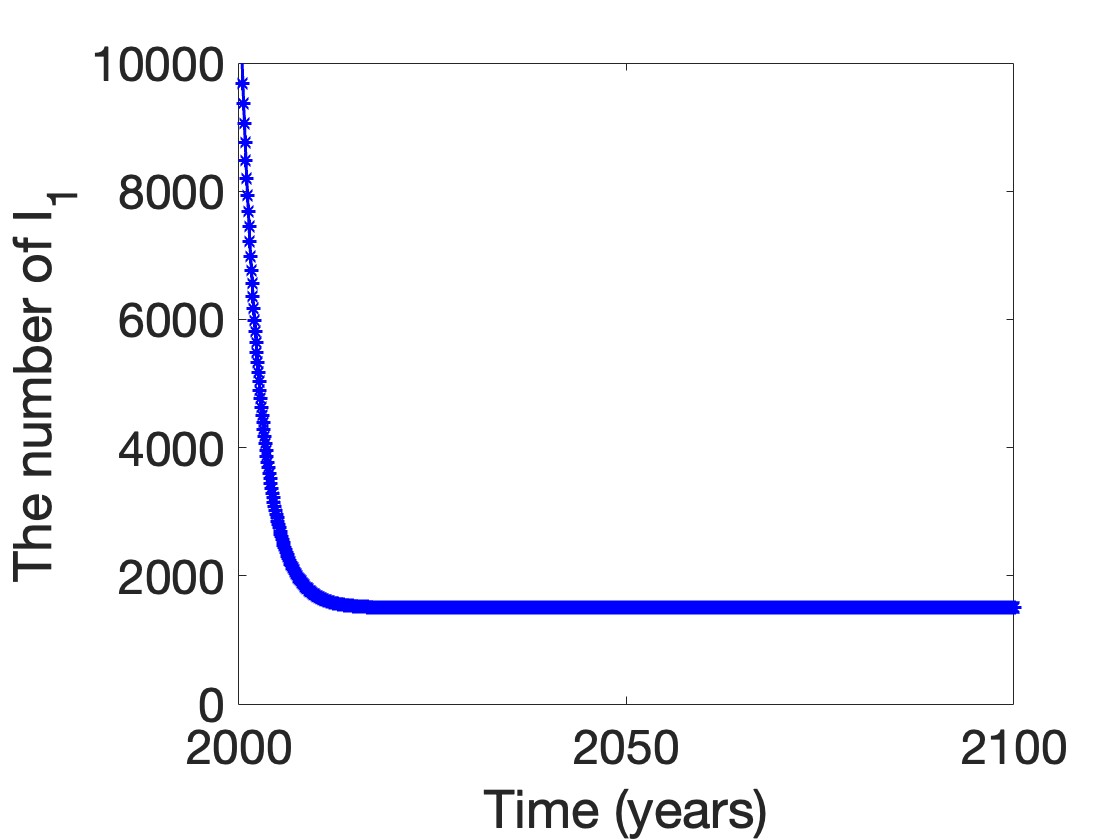}\label{fig:cc}}
\subfigure[The variation in the number of unaware infected individuals ($I_2$) over time.]{\includegraphics[width=0.3\linewidth]{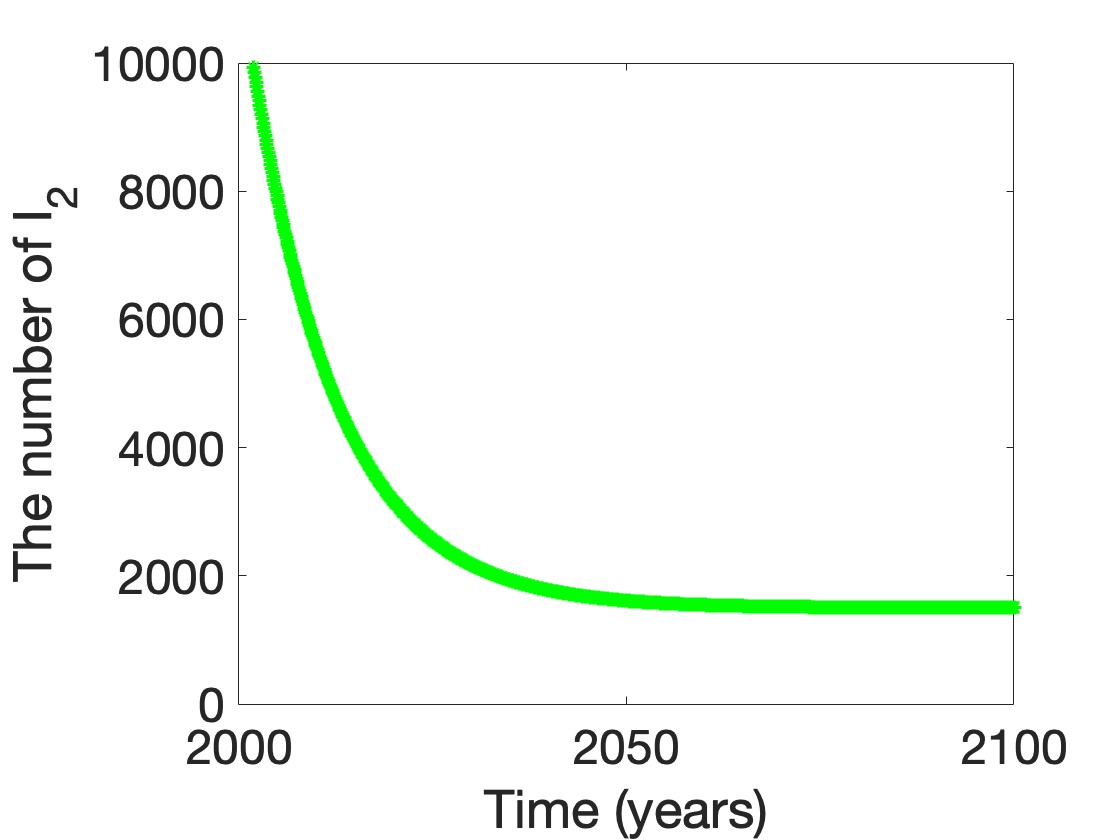}\label{fig:dd}}
\subfigure[The variation in the number of individuals who receive ART treatment ($T$) over time.]{\includegraphics[width=0.3\linewidth]{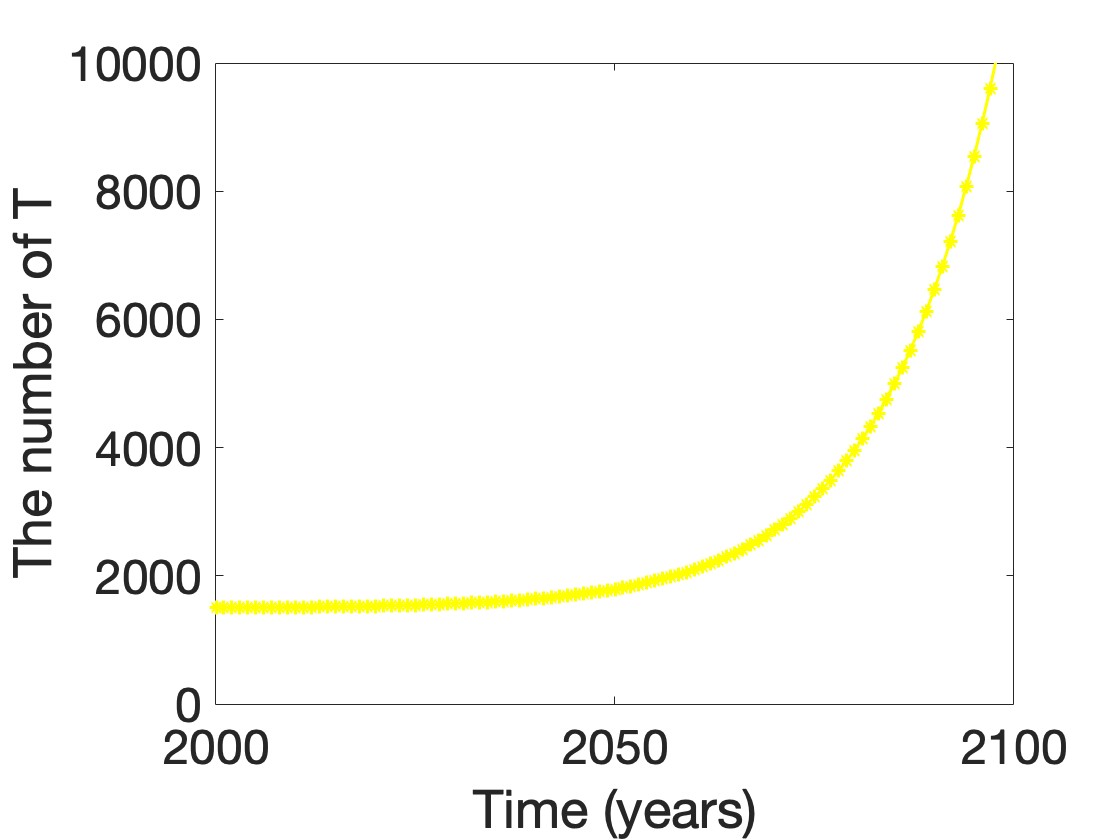}\label{fig:ee}}
\caption{A stability analysiswith the following parameter values: $\alpha = 1.007e^{-6}$, $\gamma = 0.01$, $\delta = 0.15$, $\psi = 0.34877$, $p = 0.022591$, $a = 0.055961$, $p_{\text{max}} = 0.4554$, and $R_0 = 0.7589 < 1$, indicating a disease-free equilibrium.}
\label{fig:stability_disease_free}
\end{figure}

\noindent Following that, we will examine the influence that a single parameter has had on the AIDS pandemic. Specifically, we are looking at the impact of the transmission rate $\alpha$. It is possible to efficiently restrict the number of infected persons during the latent stage of AIDS, as shown in figures \ref{fig:alpha} and \ref{fig:alpha1}, by lowering the transmission rate. Also, we can increase the amount of treated people by doing the same. The epidemic ends faster when the rate of transmission is lower. This suggests that in order to limit the spread of HIV infection, public health officials must keep enhancing control measures. The variations in the transmission rate show that this is necessary. As a result, it is essential to raise the level of awareness among those who are not yet aware. Figures \ref{fig:bottomLeft} and \ref{fig:bottomLeft1} show that the impact of $p_0$ is very insignificant. Nonetheless, in figures \ref{fig:bottomRight} and  \ref{fig:bottomRight1}, we can see that, there exists a robust association between the value of $p_{max}$ and the quantity of latent individuals; as the $p_{max}$ value increases, the number of treated people decreases.

\begin{figure}[H]
\centering
\subfigure[The number of infected individuals (\(I_1+I_2\)), varying with \(p_0\).]{\includegraphics[width=0.45\linewidth]{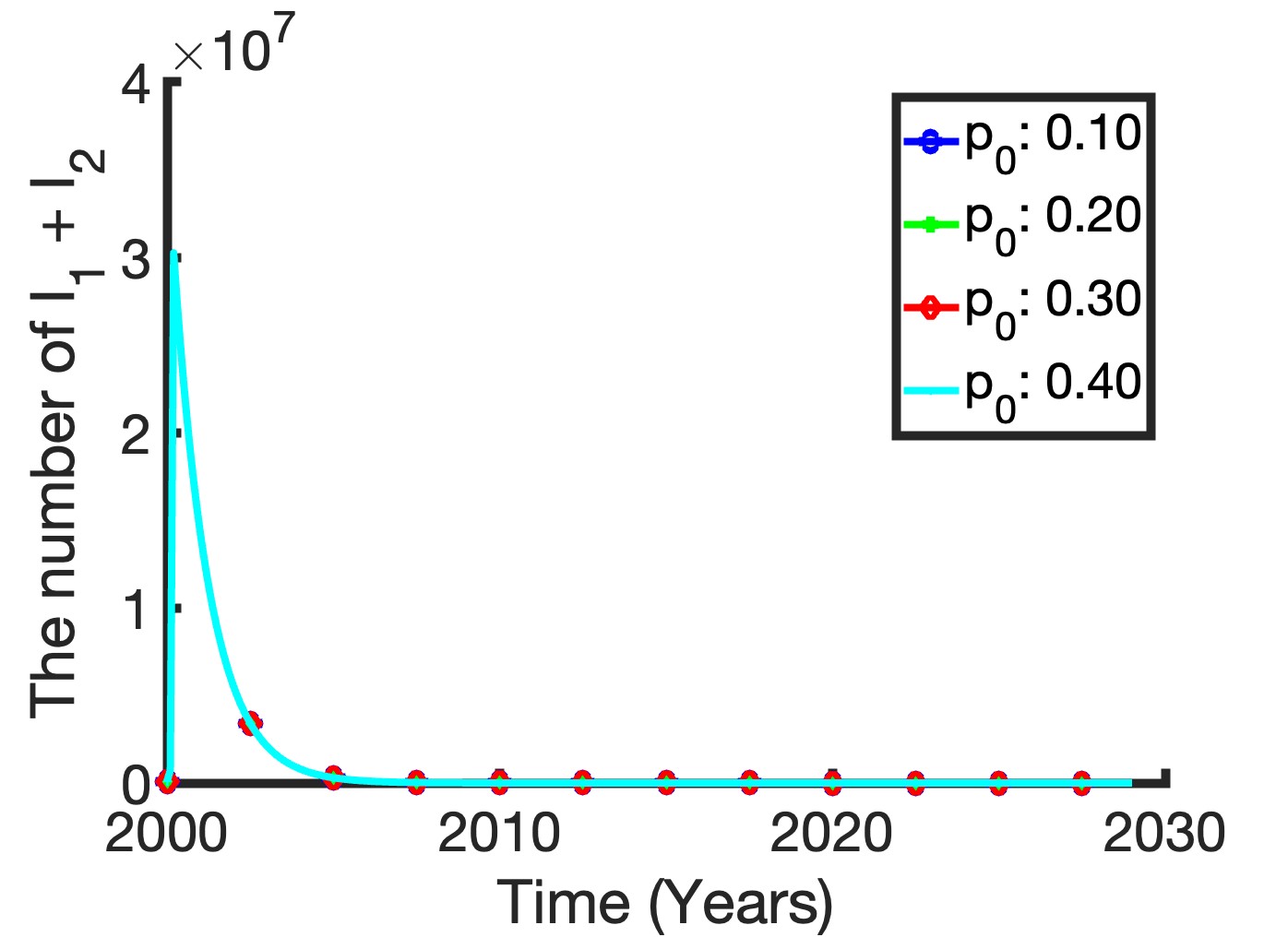}\label{fig:bottomLeft}}
\subfigure[The number of treated individuals (\(T\)), varying with \(p_0\).]{\includegraphics[width=0.45\linewidth]{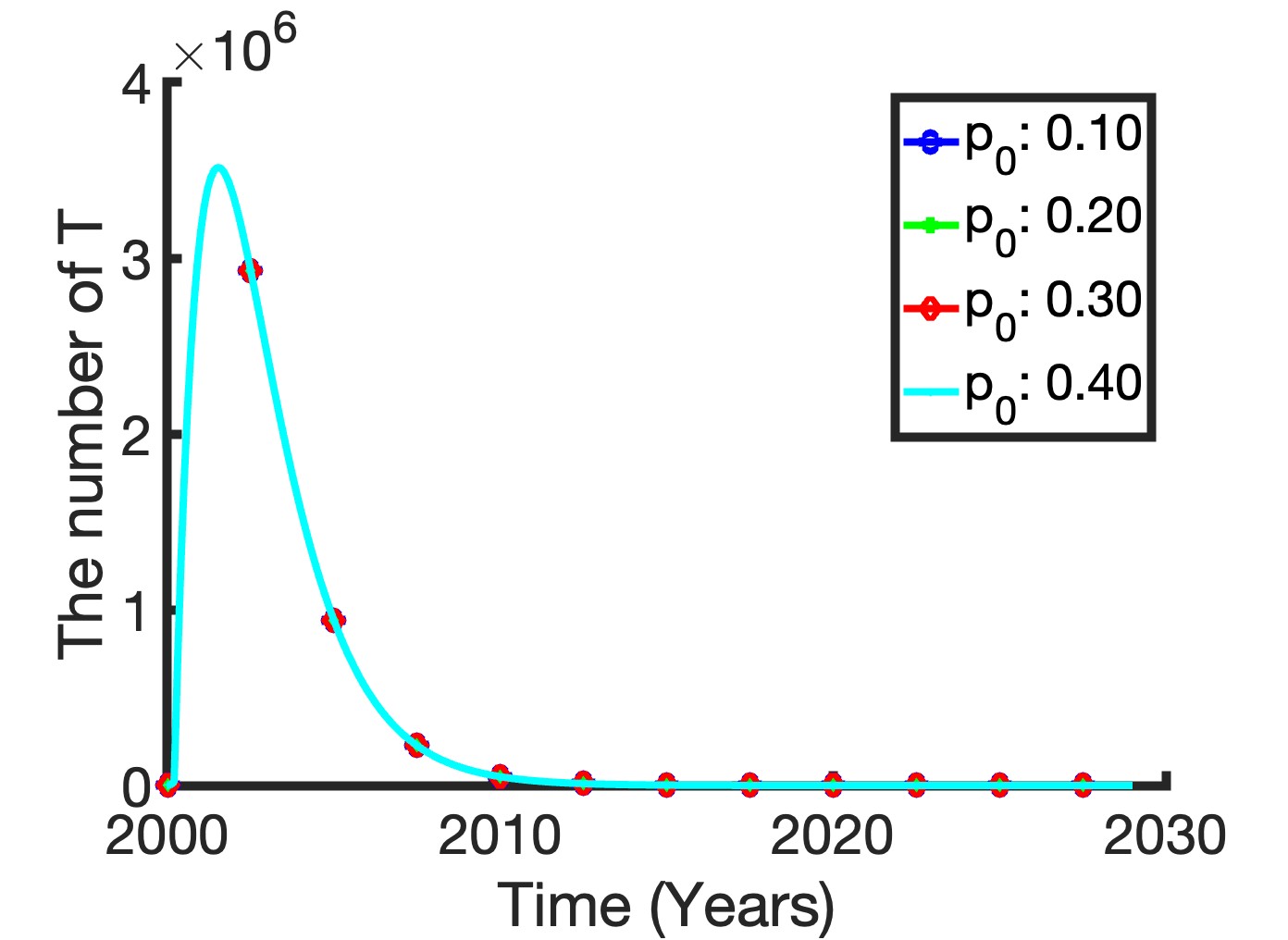}\label{fig:bottomLeft1}}
\subfigure[The number of infected individuals (\(I_1+I_2\)), varying with \(p_{max}\).]{\includegraphics[width=0.45\linewidth]{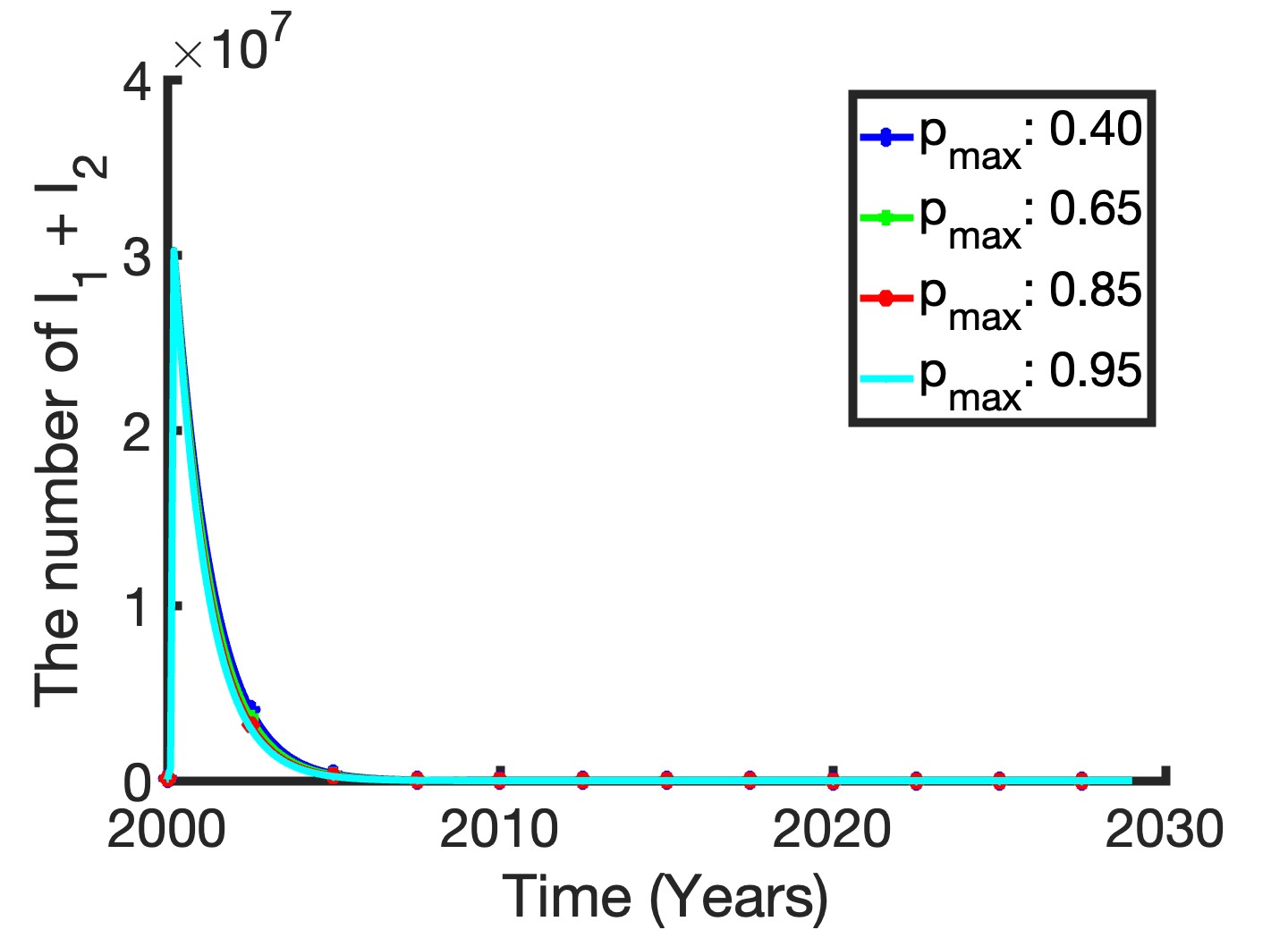}\label{fig:bottomRight}}
\subfigure[The number of treated individuals (\(T\)), varying with \(p_{max}\).]{\includegraphics[width=0.45\linewidth]{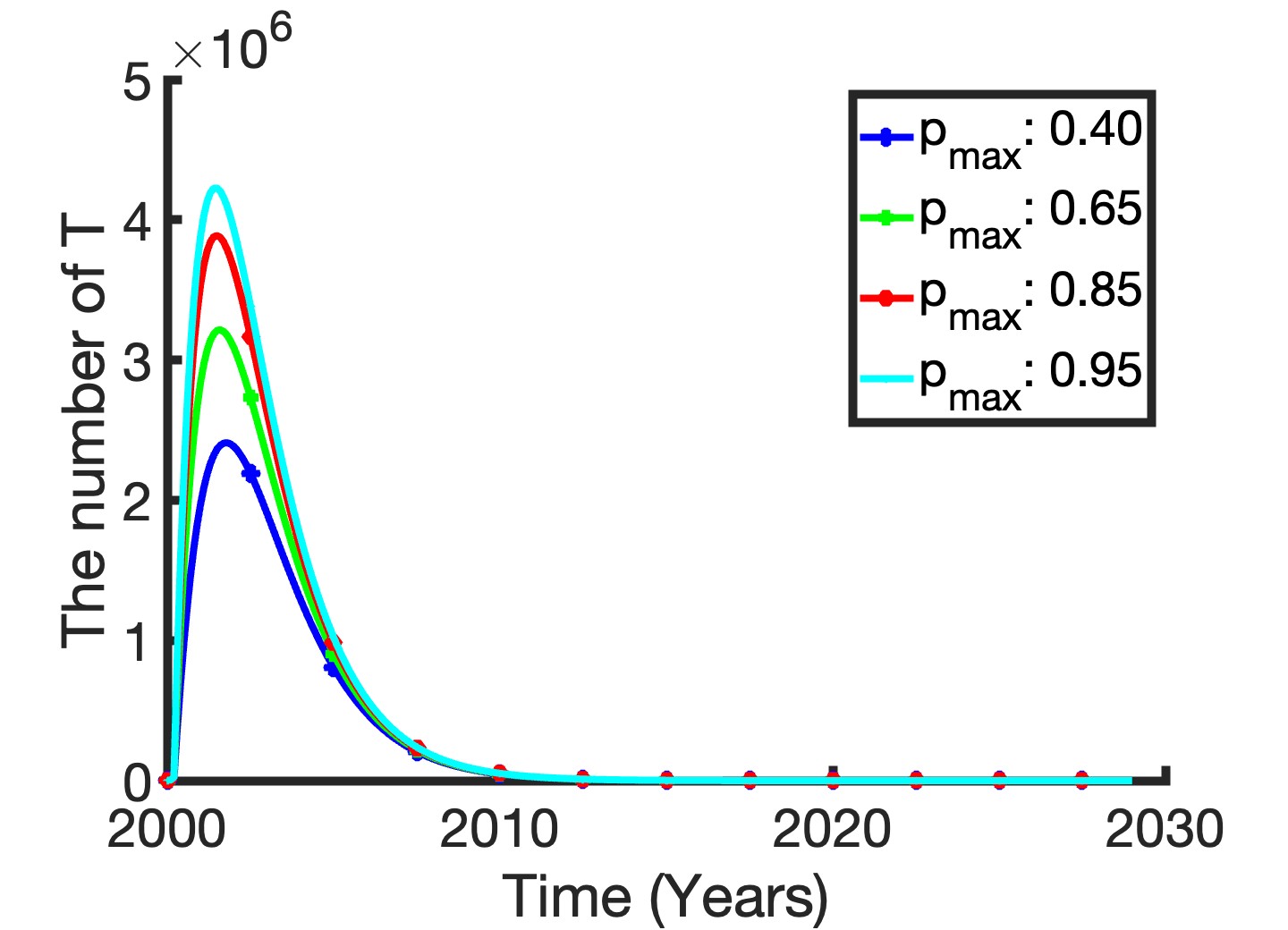}\label{fig:bottomRight1}}
\subfigure[The number of infected individuals (\(I_1+I_2\)), varying with \(\alpha\).]{\includegraphics[width=0.45\linewidth]{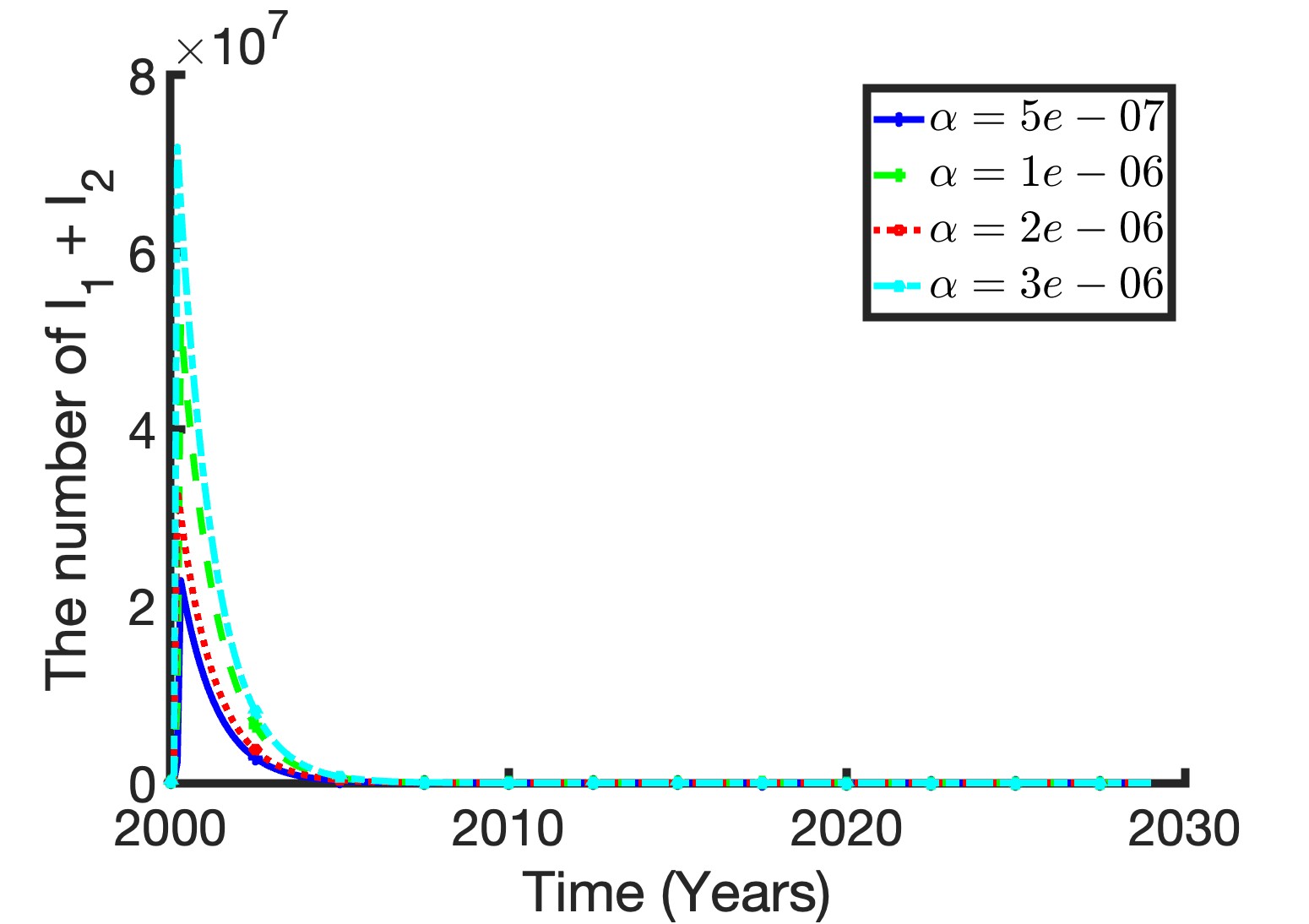}\label{fig:alpha}}
\subfigure[The number of treated individuals (\(T\)), varying with \(\alpha\).]{\includegraphics[width=0.45\linewidth]{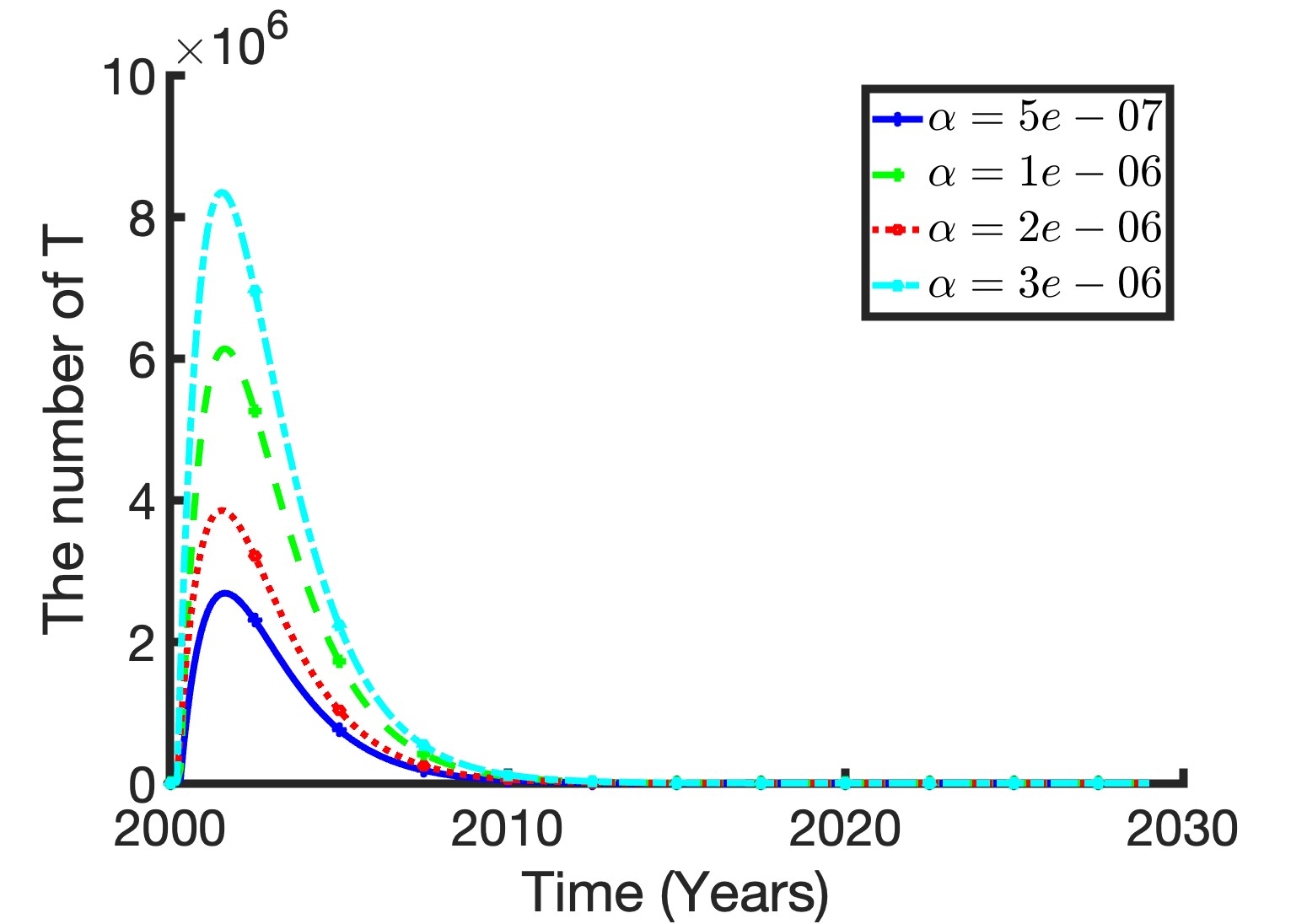}\label{fig:alpha1}}
\caption{Comparative analysis of the quantity of infected and treated individuals.}
\label{fig:images}
\end{figure}

\begin{figure}[H]
	\centering
	\subfigure[$\epsilon = 0.05$]{\includegraphics[width=0.45\linewidth]{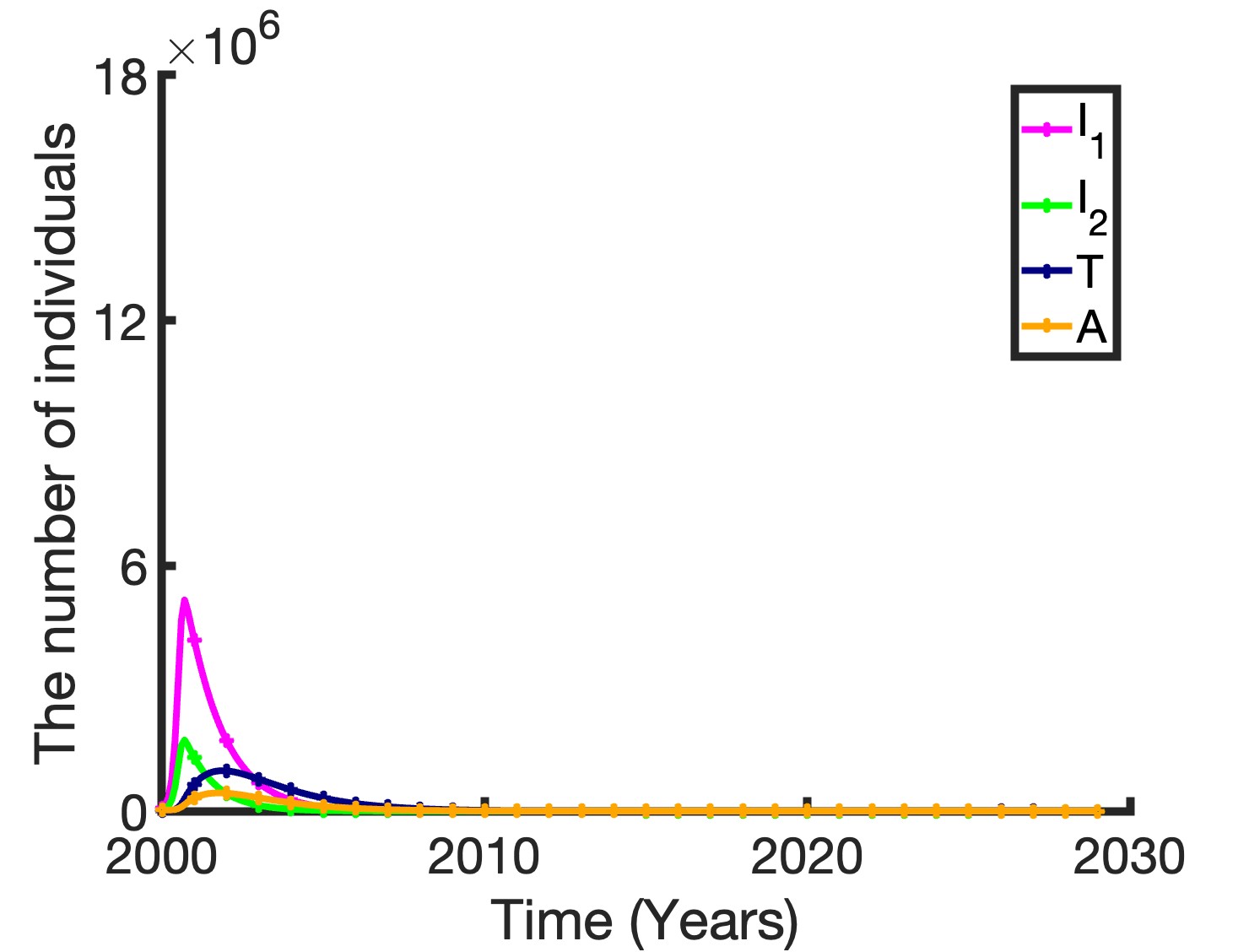}\label{fig:ep05}}
	\subfigure[$\epsilon = 0.10$]{\includegraphics[width=0.45\linewidth]{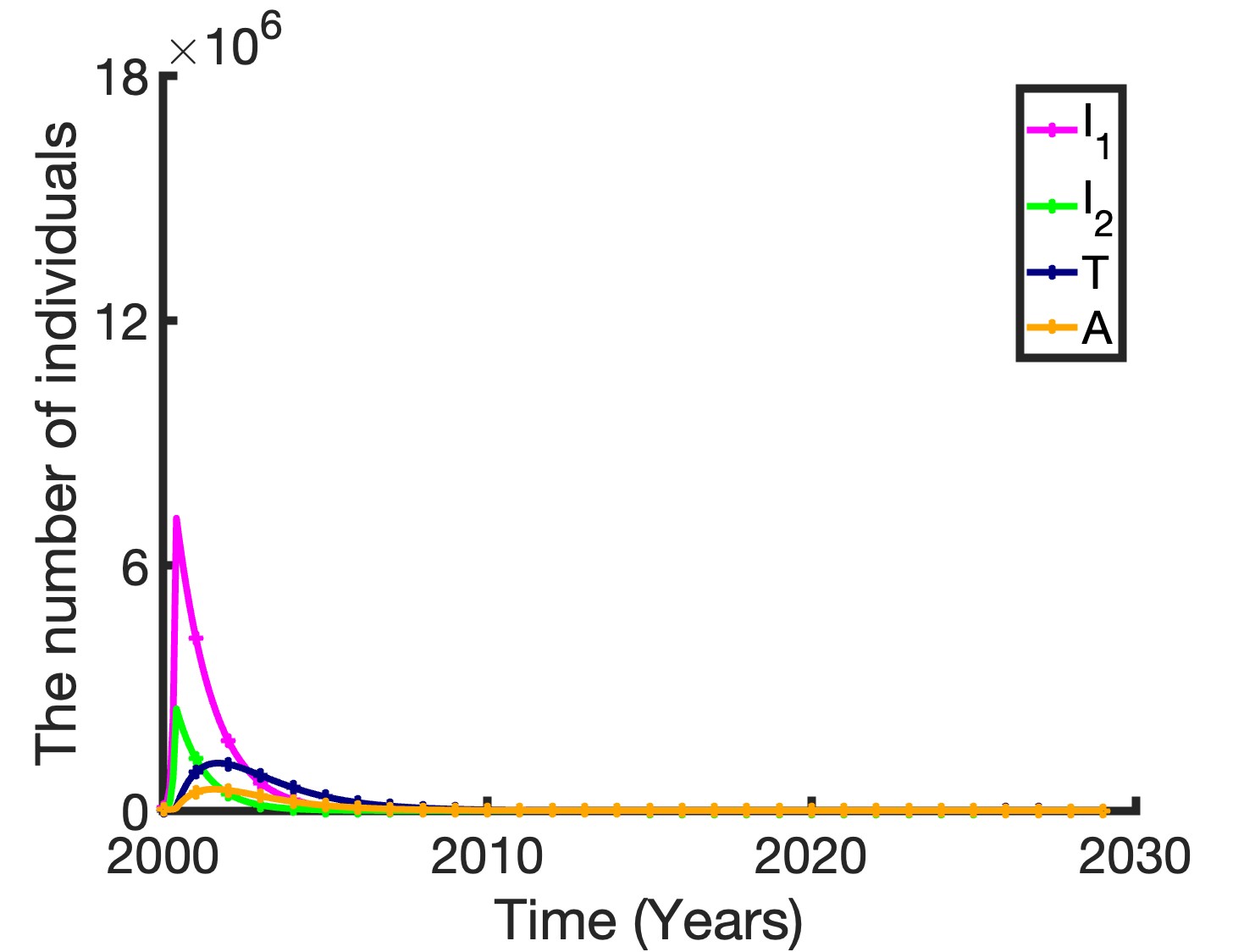}\label{fig:ep10}}
	\subfigure[$\epsilon = 0.15$]{\includegraphics[width=0.45\linewidth]{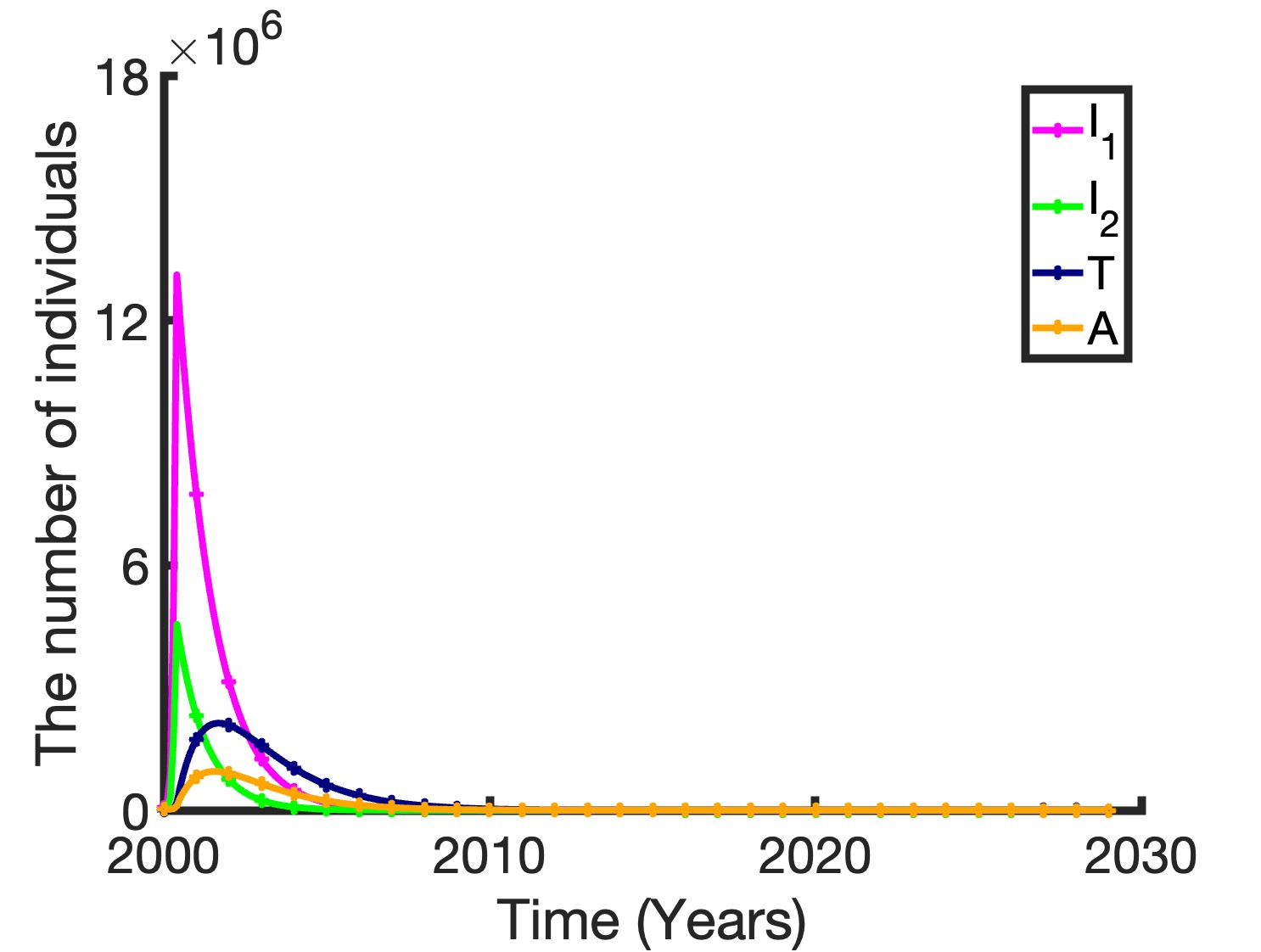}\label{fig:ep15}}
	\subfigure[$\epsilon = 0.25$]{\includegraphics[width=0.45\linewidth]{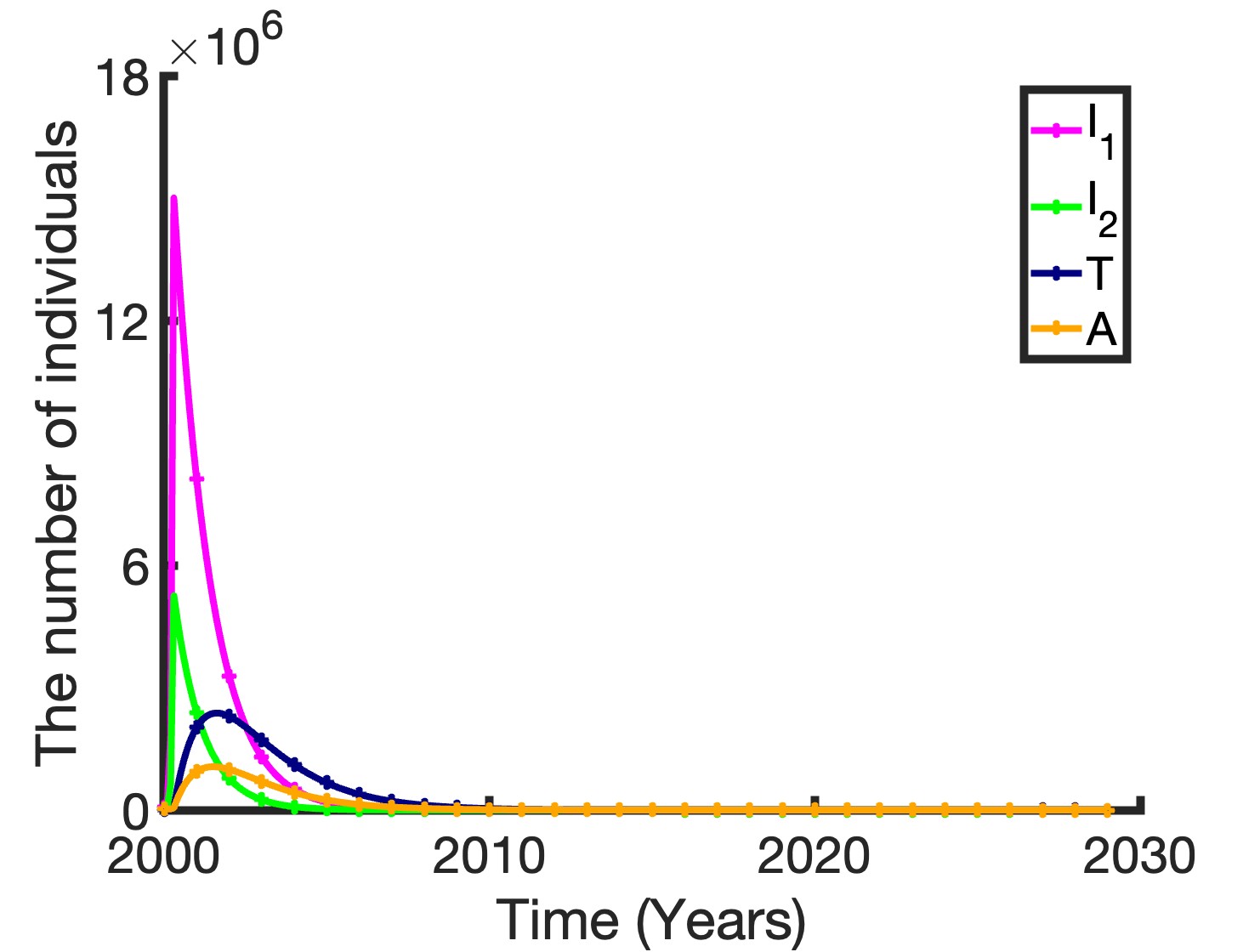}\label{fig:ep25}}
	\caption{The effect of $\epsilon$, the reduced infectiousness of individuals.}
	\label{fig:epsilon-effects}
\end{figure}
 
\noindent In figures \ref{fig:ep05}, \ref{fig:ep10}, \ref{fig:ep15} and \ref{fig:ep25}, we see the effects of $\epsilon$ on individuals. By the reduced number of infectiousness, we observe that the number of infected individuals decreases.  Figures \ref{fig:gamma001}, \ref{fig:gamma202}, \ref{fig:gamma657} and \ref{fig:gamma999} investigate the effect of the rate at which people who were unaware before become conscious, denoted as $\gamma$, on the total number of persons who are infected. Our findings indicate that when the value of $\gamma$ is raised from $0.001$ to $0.999$, more individuals are becoming aware at a faster rate which directly impacts the number of treated population. We notice that the number of treated individuals are increasing when we can raise awareness.

\begin{figure}[H]
\centering
\subfigure[$\gamma = 0.001$]{\includegraphics[width=0.49\linewidth]{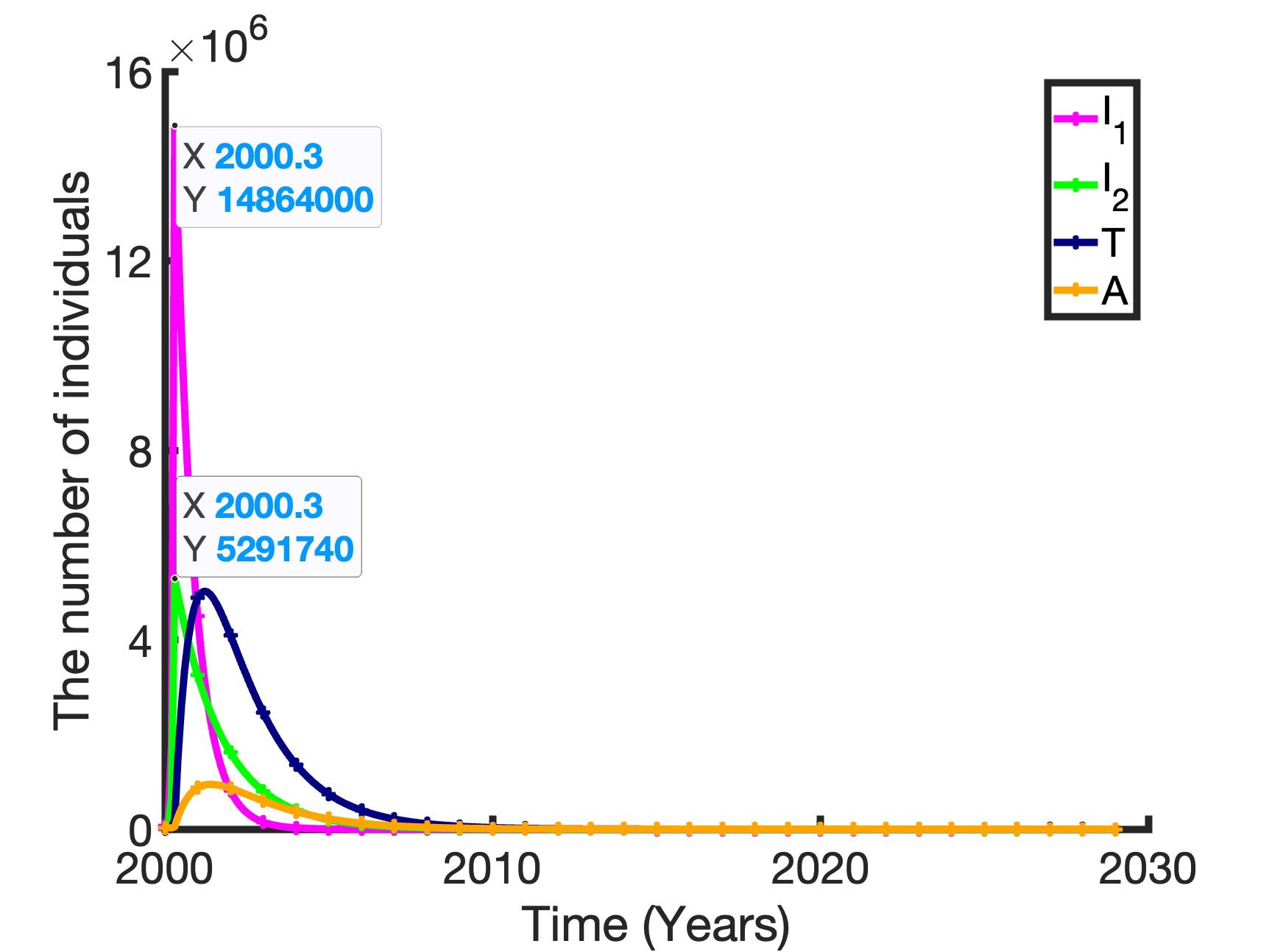}\label{fig:gamma001}}
\subfigure[$\gamma = 0.202$]{\includegraphics[width=0.49\linewidth]{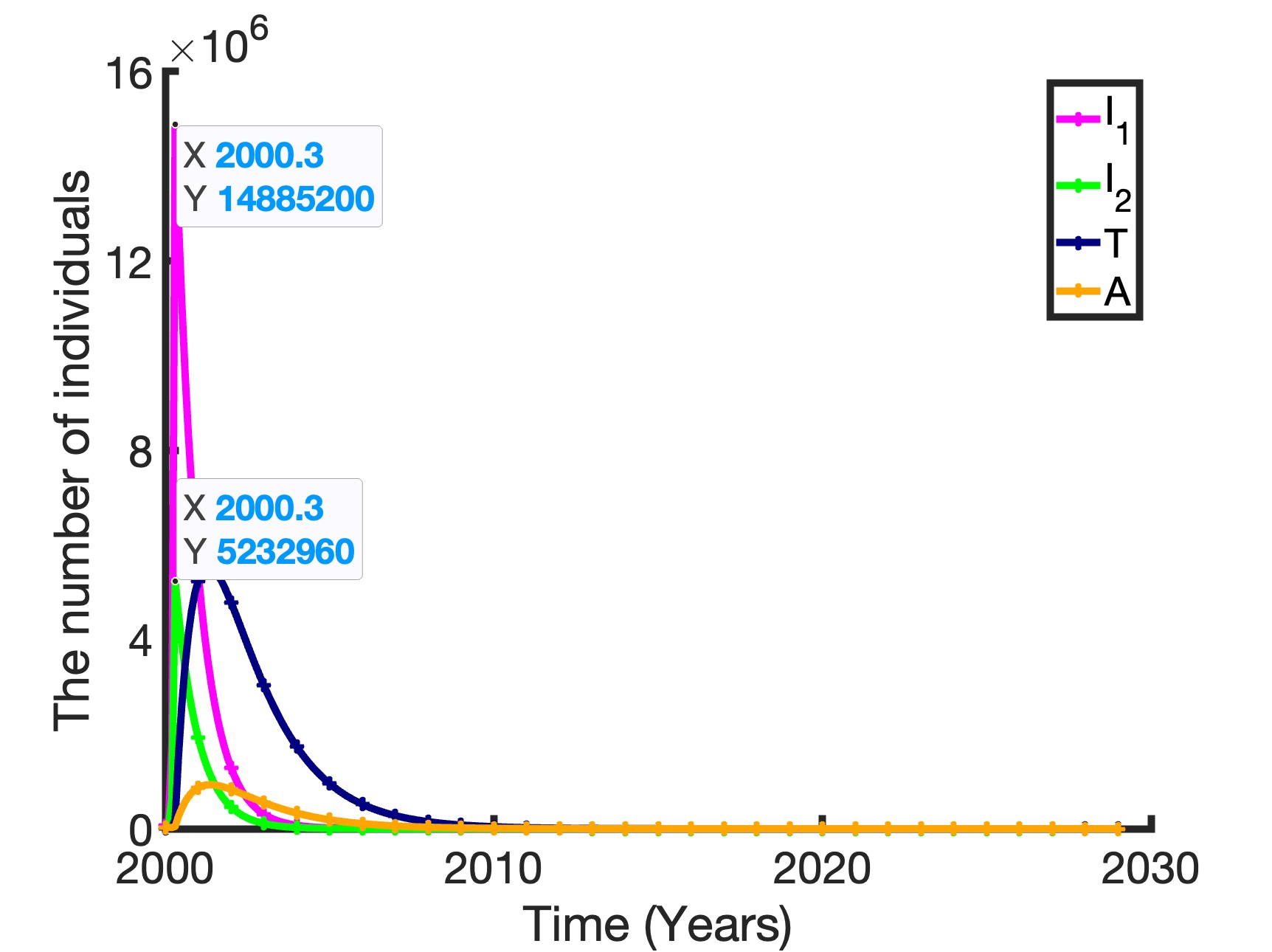}\label{fig:gamma202}}
\subfigure[$\gamma = 0.657$]{\includegraphics[width=0.49\linewidth]{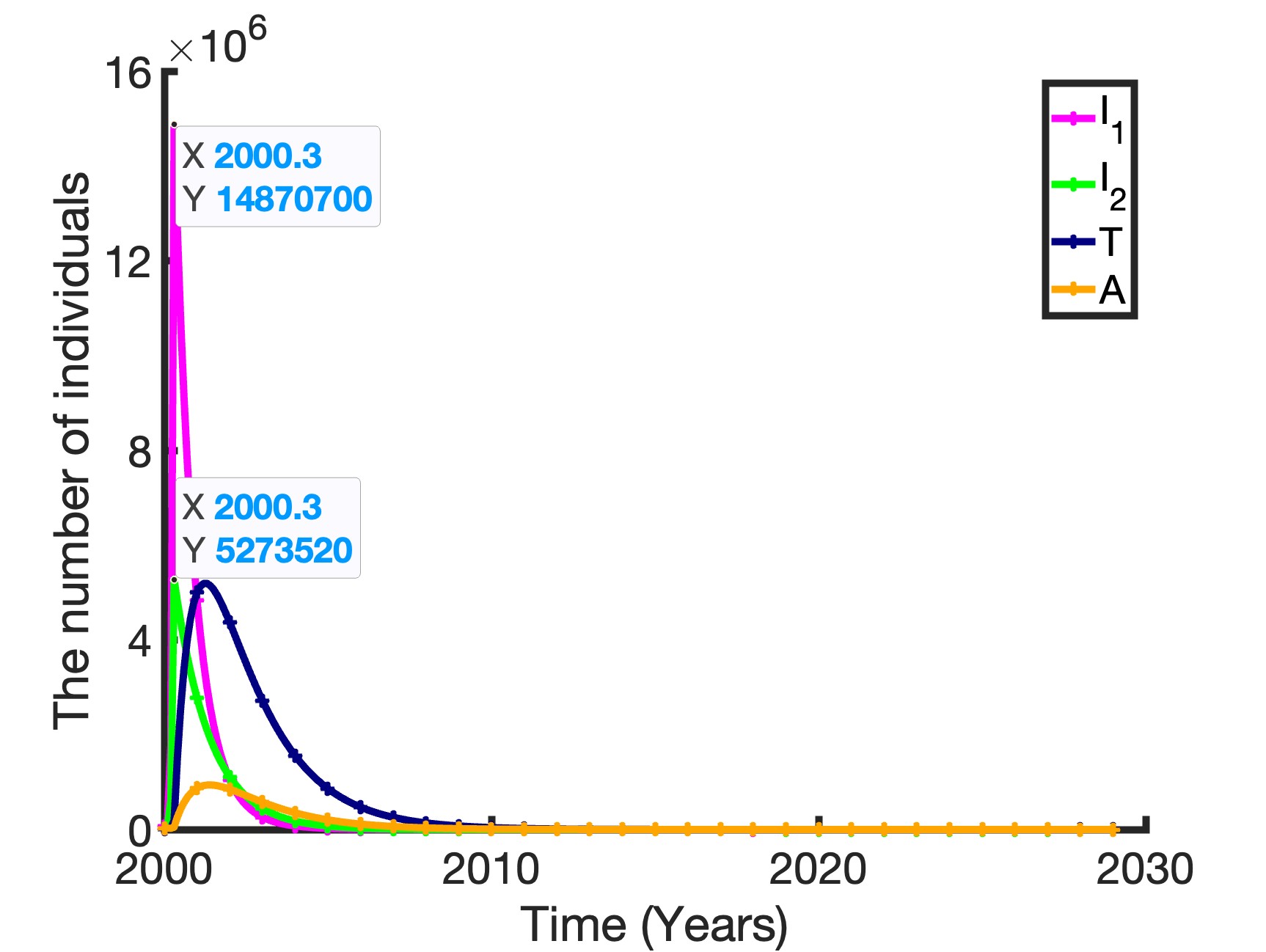}\label{fig:gamma657}}
\subfigure[$\gamma = 0.999$]{\includegraphics[width=0.49\linewidth]{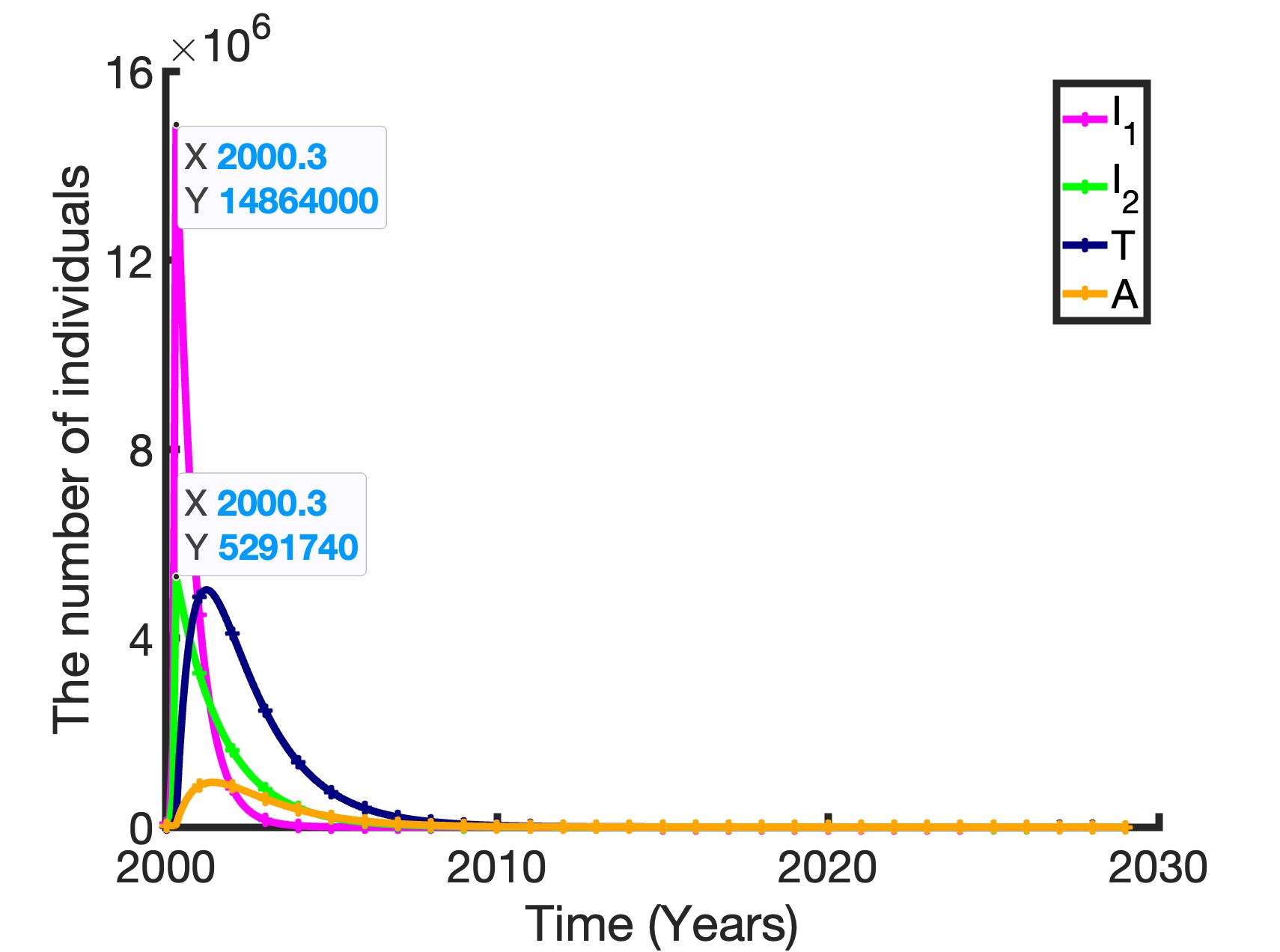}\label{fig:gamma999}}
\caption{The effect of $\gamma$, the rate at which infected individuals who are unconscious become aware.}
\label{fig:gamma-effects}
\end{figure}

\noindent According to figure \ref{fig:stability_endemic}, if the existing control measures are not altered, HIV outbreaks will continue to occur and will gradually get closer to the point where the endemic equilibrium is reached. The percentage of latent persons who transition to the class of aware individuals, denoted as $p_{max}$, and the rate at which unaware infected people become aware, denoted as $\gamma$, are found to have a significant influence on the reduction of the number of individuals who are infected. There is an increase in the number of people receiving antiretroviral therapy (ART) as more people who are HIV-positive are screened or become aware of their condition. Consequently, this results in a substantial decrease in the number of people who are infected with HIV, which in turn lessens the stress placed on the healthcare system \cite{granich2009universal}. In order to raise awareness about HIV infection, it is vital to increase the number of people who are tested for HIV and to improve the media efforts that are already being run.

\section{Optimal Control and Discussion}
\label{sec:optimal}
Through the use of a variety of control mechanisms, we expand our model. The existence, boundedness, and uniqueness of the optimal solution are demonstrated, and then the optimal solution is computed using Pontryagin's Maximum Principle \cite{pontryagin2018mathematical} by applying this principle. In addition, we evaluate the effectiveness of the ways for implementing control as well as the expenses associated with them, and we provide the control approach that is the most cost-effective.

\subsection{Model of Control}
The following model \eqref{Model_3} comprises six state variables: \(S, E, I_1, I_2, T, \text{ and } A\), along with three control variables: \(v_1(t), v_2(t), \text{ and } v_3(t)\), where, \( v_1(t) \) symbolizes the level of screening for latent individuals, \( v_2(t) \) depicts the level of education for people who are unaware of being affected, and \( v_3(t) \) constitutes the therapy for persons who are aware that they are infected. The control set is:
\[
\Omega = \{ (v_1, v_2, v_3) | v_i(t) \in L^{\infty} [0, t_f], 0 \leq v_i(t) \leq c_i, 0 < c_i \leq 1, i = 1,2,3 \},
\]
where \( t_f \) is the point at which the implementation of controls comes to an end.
The following is a description of the optimum control model:
\begin{equation}\label{Model_3}
%\left\{
\begin{cases}
\begin{aligned}
\frac{dS}{dt} &= \kappa - (\alpha \epsilon E + \alpha I_2)S - \mu S \\
\frac{dE}{dt} &= (\alpha \epsilon E + \alpha I_2)S - (\beta + \mu) E - v_1(t) E \\
\frac{dI_1}{dt} &= p \beta E + \gamma I_2 - (\mu + \psi + \delta_1) I_1 + v_1(t) E + v_2(t) I_2 - v_3(t) I_1 \\
\frac{dI_2}{dt} &= (1 - p) \beta E - (\gamma + \mu + \delta_2) I_2 - v_2(t) I_2 \\
\frac{dT}{dt} &= \psi I_1 - (\mu + \xi) T + v_3(t) I_1 \\
\frac{dA}{dt} &= \delta_1 I_1 + \delta_2 I_2 + \xi T - \mu_0 A - \mu A.
\end{aligned}
\end{cases}
%\right.
%\tag{6}
\end{equation}
The initial conditions satisfy,
\begin{equation}
S(0) \geq 0, \, E(0) \geq 0, \, I_1(0) \geq 0, \, I_2(0) \geq 0, \, T(0) \geq 0, \, A(0) \geq 0. \label{equ:initial}    %\tag{7} 
\end{equation}
\noindent In addition to lowering the costs that are connected with the adoption of screening, education, and treatment programmes, our goal is to cut down on the number of people who are infected with the virus. As a result, the objective function may be expressed here:
\[
K(v_1, v_2, v_3) = \int_{0}^{t_f} \left[ p_1 E + p_2 I_1 + p_3 I_2 + \frac{q_1}{2} v_1^2(t) + \frac{q_2}{2} v_2^2(t) + \frac{q_3}{2} v_3^2(t) \right] dt, \label{equ:objective function} %\tag{8} 
\]
where \( p_1 \), \( p_2 \), and \( p_3 \) indicate the relative weights of the numbers of persons who are latent, individuals who are aware of their infection, and those who are unaware of of their infection, respectively. The weights \( q_1 \), \( q_2 \), and \( q_3 \) evaluate the expenses associated with control variables. \( v_1 \), \( v_2 \), and \( v_3 \), respectively.

Initially, we establish the presence of an optimum solution.
\begin{theorem}
An optimal solution \(v^* = \{v_1^*, v_2^*, v_3^*\}\) exists within \(\Omega\) for the objective function \(K(v)\), such that:
\[
K(v^*) = K(v_1^*, v_2^*, v_3^*) = \min_{v \in \Omega} K(v_1, v_2, v_3).
\]
\end{theorem}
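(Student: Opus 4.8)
The plan is to invoke the classical existence theorem for optimal controls (the Fleming--Rishel framework, cf. the standard theory underlying Pontryagin's Maximum Principle \cite{pontryagin2018mathematical}), which reduces the problem to verifying a short checklist of structural hypotheses on the control system \eqref{Model_3} and the objective functional $K$. Concretely, I would establish five ingredients: (i) the set of admissible control-and-state pairs is nonempty; (ii) the admissible control set $\Omega$ is closed and convex; (iii) the right-hand side $G$ of the state system is bounded by a function that is affine in the state and control variables; (iv) the integrand of $K$ is convex in the control $(v_1,v_2,v_3)$; and (v) the integrand satisfies a coercivity bound of the form $L \geq c_1(v_1^2+v_2^2+v_3^2) - c_2$ with $c_1>0$. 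Once these hold, the theorem guarantees a minimizer $v^*\in\Omega$.

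For (i), I would note that for any fixed $v\in\Omega$ the right-hand sides of \eqref{Model_3} are locally Lipschitz in the state, so a unique solution exists; combined with the positivity and the a priori bound $\limsup_{t\to\infty} N(t)\leq \kappa/\mu$ from Lemma \ref{Lem1} (which extends to the controlled system since the controls only redistribute individuals among compartments and preserve the total-population inequality $\tfrac{dN}{dt}\leq \kappa-\mu N$), the solution is defined and bounded on $[0,t_f]$, so the admissible set is nonempty. For (ii), the box constraints $0\leq v_i\leq c_i$ define a set that is manifestly convex, and it is closed in the weak-$*$ $L^\infty$ (equivalently $L^2$) topology. For (iii), the boundedness of the state variables on $[0,t_f]$ lets me dominate each component of $G$ by a constant multiple of $\lvert S\rvert+\lvert E\rvert+\cdots+\lvert A\rvert$ plus terms linear in the $v_i$, giving the required affine majorant.

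For (iv), I would observe that the integrand
\[
L(E,I_1,I_2,v_1,v_2,v_3)=p_1E+p_2I_1+p_3I_2+\tfrac{q_1}{2}v_1^2+\tfrac{q_2}{2}v_2^2+\tfrac{q_3}{2}v_3^2
\]
is a sum of terms that are affine in the state and strictly convex (indeed quadratic) in each control, hence convex in $(v_1,v_2,v_3)$. For (v), dropping the nonnegative state contributions yields
\[
L \;\geq\; \tfrac{1}{2}\min\{q_1,q_2,q_3\}\,\bigl(v_1^2+v_2^2+v_3^2\bigr),
\]
so the coercivity bound holds with $c_1=\tfrac12\min\{q_1,q_2,q_3\}>0$, $c_2=0$, and exponent $2>1$. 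Taking a minimizing sequence $\{v^n\}$, the coercivity forces boundedness in $L^2$, Banach--Alaoglu supplies a weakly convergent subsequence with limit $v^*\in\Omega$ (closedness and convexity ensure $v^*$ is admissible), the corresponding states converge, and the convexity of $L$ gives weak lower semicontinuity of $K$, so $K(v^*)=\min_{v\in\Omega}K(v)$.

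The main obstacle is not any single computation but the convergence step tying together items (ii)--(v): one must confirm that weak-$*$ convergence of the controls actually yields (strong) convergence of the associated state trajectories, so that passing to the limit inside $K$ is legitimate. This is where the uniform state bounds from Lemma \ref{Lem1} and the affine growth of $G$ are essential, since they provide the equicontinuity (via an Arzel\`a--Ascoli argument on the states) needed to justify the limit; the convexity of $L$ then handles the lower semicontinuity of the quadratic control cost. All remaining verifications are routine given the boundedness already secured.
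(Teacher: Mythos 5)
Your proposal follows essentially the same route as the paper: both invoke the classical existence theorem for optimal controls (the Fleming--Rishel checklist) by verifying that the control set $\Omega$ is closed and convex, the system is bounded, the integrand is convex in the controls, and a coercivity bound with exponent $\zeta>1$ holds. Yours is in fact the more careful rendition --- the paper merely asserts these conditions in a few lines (and misstates the convexity of the integrand as ``concavity''), whereas you verify them explicitly, extend the bound of Lemma \ref{Lem1} to the controlled system, and supply the minimizing-sequence, weak-compactness, and lower-semicontinuity argument that the paper leaves implicit.
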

\begin{proof}
The findings demonstrate that there is a strategy \( v^* \) that is optimum. Closed and convex are the two characteristics that define the control set \( \Omega \). The function \( K \) is concave on the set \( \Omega \) when integrated. The control system is bounded, indicating that the optimal control is compact. To add to this, \( \zeta > 1 \), as well as positive values \( D_1 \) and \( D_2 \), exist such that:
\[
K(v_1, v_2, v_3) \geq D_1 \left( |v_1(t)|^2 + |v_2(t)|^2 + |v_3(t)|^2 \right)^{\frac{\zeta}{2}} - D_2,
\]
which demonstrates the existence of the optimum control.
\end{proof}

\begin{theorem} In model \eqref{Model_3}, when the control variables \((v_1^*, v_2^*, v_3^*)\) ensure that \(H(t, \phi, v^*) < H(t, \phi, v)\), then the state variables \(S^{**}\), \(E^{**}\), \(I_1^{**}\), \(I_2^{**}\), and \(A^{**}\) represent the solutions. Accordingly, adjoint variables \(\lambda_i(t)\) for \(i = 1, 2, \dots, 6\) are confirmed to satisfy the conditions:
\begin{align}\label{model_4}
\begin{cases}
lambda_1' &= \lambda_1(\alpha E + \alpha I_2 + \mu) - \lambda_2(\alpha \epsilon E + \alpha I_2), \\
\lambda_2' &= -p_1 + \lambda_1 \epsilon \alpha S - \lambda_2(\alpha \epsilon S - (\beta + \mu) - u_1(t)) - \lambda_3(p \beta + u_1(t)) - \lambda_4(1 - p)\beta, \\
\lambda_3' &= -p_2 + \lambda_3(\mu + \psi + \delta + u_3(t)) - \lambda_5(\psi + u_3(t)) - \lambda_6\delta, \\
\lambda_4' &= -p_3 + \lambda_1\alpha S - \lambda_2\alpha S - \lambda_3(\gamma + u_2(t)) + \lambda_4(\delta + \gamma + \mu + u_2(t)) - \lambda_6\delta, \\
\lambda_5' &= \lambda_5(\xi + \mu) - \lambda_6\xi, \\
\lambda_6' &= \lambda_6(\mu_0 + \mu). %\tag{9}
\end{cases}
\end{align}
The boundary conditions are
\[
\lambda_i(t_f) = 0, \quad i = 1, 2, \ldots, 6.
\]
The optimal controls $v_1^*$, $v_2^*$, and $v_3^*$ are
\[
\begin{aligned}
\begin{cases}
v_1^*(t) &= \min\{\max\{0, v_1^c\}, 1\}, \\
v_2^*(t) &= \min\{\max\{0, v_2^c\}, 1\}, \\
v_3^*(t) &= \min\{\max\{0, v_3^c\}, 1\}, 
\end{cases}
\end{aligned}
\]
where,
\[
\begin{aligned}
v_1^c &= \frac{(\lambda_2 - \lambda_3)E^{**}}{q_1}, \quad
v_2^c = \frac{(\lambda_4 - \lambda_3)I_2^{**}}{q_2}, \quad
v_3^c = \frac{(\lambda_3 - \lambda_5)I_1^{**}}{q_3}.
\end{aligned}
\]
We found the optimal solution:
\[
v_i^* = 
\begin{cases}
0, & \text{if } v_i^c \leq 0, \\
v_i^c, & \text{if } 0 < v_i^c < 1, \\
1, & \text{if } v_i^c \geq 1,
\end{cases}
\]
where, $i = 1, 2, 3.$
\end{theorem}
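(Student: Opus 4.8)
The plan is to obtain the stated adjoint system, boundary conditions, and control characterization as the necessary optimality conditions furnished by Pontryagin's Maximum Principle, whose applicability is guaranteed by the existence of an optimal control established in the preceding theorem. The first step is to form the Hamiltonian by adjoining the six state equations of model \eqref{Model_3} to the integrand of $K$ with the adjoint (costate) multipliers $\lambda_1,\dots,\lambda_6$:
\[
H = p_1 E + p_2 I_1 + p_3 I_2 + \frac{q_1}{2}v_1^2 + \frac{q_2}{2}v_2^2 + \frac{q_3}{2}v_3^2 + \sum_{i=1}^{6}\lambda_i f_i ,
\]
where $f_1,\dots,f_6$ denote the right-hand sides of the equations for $S,E,I_1,I_2,T,A$ respectively. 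Since the problem is a minimization, the Maximum Principle asserts that the optimal control minimizes $H$ pointwise, which is exactly the hypothesis $H(t,\phi,v^*) < H(t,\phi,v)$ in the statement.

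Next I would derive the adjoint system from the costate equations $\lambda_i' = -\partial H/\partial X_i$, evaluated along the optimal trajectory $(S^{**},E^{**},I_1^{**},I_2^{**},T^{**},A^{**})$. For each index this is a routine differentiation: $\partial H/\partial S$ collects the two occurrences of $S$ (in $f_1$ and $f_2$), giving $\lambda_1' = \lambda_1(\alpha\epsilon E + \alpha I_2 + \mu) - \lambda_2(\alpha\epsilon E + \alpha I_2)$, while $\partial H/\partial E$ picks up the running cost $p_1$ together with the contributions of $E$ to $f_1,f_2,f_3,f_4$, reproducing the stated expression for $\lambda_2'$; the remaining four equations follow identically. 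Because the objective functional $K$ carries no terminal payoff and the terminal states are left free, the transversality conditions immediately yield $\lambda_i(t_f)=0$ for $i=1,\dots,6$, which are the claimed boundary conditions.

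For the control characterization I would use the stationarity condition of the Maximum Principle. Since $H$ depends on each $v_i$ only through its quadratic penalty and a single bilinear state term, the partial derivatives are
\[
\frac{\partial H}{\partial v_1} = q_1 v_1 + (\lambda_3 - \lambda_2)E, \quad
\frac{\partial H}{\partial v_2} = q_2 v_2 + (\lambda_3 - \lambda_4)I_2, \quad
\frac{\partial H}{\partial v_3} = q_3 v_3 + (\lambda_5 - \lambda_3)I_1 .
\]
Setting each to zero gives the interior critical values $v_1^c,v_2^c,v_3^c$ precisely as stated. Because every $q_i>0$, $H$ is strictly convex in each $v_i$, so the unconstrained critical point is the unique minimizer on $\mathbb{R}$; restricting to the admissible interval and invoking convexity, the minimizer over $[0,1]$ is the Euclidean projection of $v_i^c$, which is exactly the clamped form $v_i^*=\min\{\max\{0,v_i^c\},1\}$, equivalently the three-case formula.

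The bulk of this is routine bookkeeping, so the step demanding the most care is twofold. First is the sign accounting across the six coupled adjoint equations: the transcription in \eqref{model_4} already shows how readily a stray sign or a mislabeled parameter can slip in, so each partial derivative must be tracked term by term. Second, and the only place where a genuine argument rather than a computation is required, is the clamping step: one must invoke the strict convexity of $H$ in each $v_i$ to conclude that minimizing over the compact, convex control set $\Omega$ returns the projection of the unconstrained stationary point onto the admissible interval, thereby justifying the piecewise definition of $v_i^*$.
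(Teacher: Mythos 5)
Your proposal is correct and follows essentially the same route as the paper: form the Hamiltonian, obtain the adjoint system from $\lambda_i' = -\partial H/\partial X_i$ with transversality conditions $\lambda_i(t_f)=0$, set $\partial H/\partial v_i = 0$ to find the interior critical controls, and clamp to $[0,1]$. Your explicit appeal to strict convexity of $H$ in each $v_i$ to justify the projection is slightly more careful than the paper's, which simply invokes the bounds $0 \le v_i \le 1$, but this is a refinement of the same argument rather than a different approach.
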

\begin{proof}
\noindent To identify the optimal solution, we will construct the Lagrange function \(L\) and the Hamiltonian function \(H\).
\[
L(t, \phi, v) = p_1 E + p_2 I_1 + p_3 I_2 - \frac{q_1}{2} v_1^2(t) - \frac{q_2}{2} v_2^2(t) - \frac{q_3}{2} v_3^2(t),   
\]
\[
\begin{aligned}
H(t, \phi, v) &= p_1 E + p_2 I_1 + p_3 I_2 + \frac{q_1}{2} v_1^2(t) + \frac{q_2}{2} v_2^2(t) + \frac{q_3}{2} v_3^2(t) \\
&\quad + \lambda_1 (\kappa - (\alpha E + \alpha I_2)S - \mu S) + \lambda_2 ((\alpha \epsilon E + \alpha I_2)S - (\beta + \mu)E - u_1(t)E) \\
&\quad + \lambda_3 (p \beta E + \gamma I_2 - (\mu + \psi + \delta)I_1 + u_1(t)E + u_2(t)I_2 - u_3(t)I_1) \\
&\quad + \lambda_4 ((1 - p) \beta E - (\gamma + \delta + \mu)I_2 - u_2(t)I_2) \\
&\quad + \lambda_5 (\psi I_1 - (\xi + \mu)T + u_3(t)I_1) + \lambda_6 (\delta I_1 + \delta I_2 + \xi T - \mu_0 A - \mu A),
\end{aligned}
\]

\noindent where \(\phi = (S, E, I_1, I_2, T, A)^T\), \(v = (v_1, v_2, v_3)^T\), and \(\lambda_i(t)\), \(i = 1, 2, \ldots, 6\), are adjoint variables.\\
The optimal answer is then found by using Pontryagin's Maximum Principle \cite{pontryagin2018mathematical}. We express the system of differential equations for the Hamiltonian function \(H\) as follows, taking into account the principle's guarantee of the availability of optimal control solutions:

\[
\begin{aligned}
\frac{d\lambda_1}{dt} &= -\frac{\partial H}{\partial S} = \lambda_1(\alpha \epsilon E + \alpha I_2 + \mu) - \lambda_2(\alpha \epsilon E + \alpha I_2), \\
\frac{d\lambda_2}{dt} &= -\frac{\partial H}{\partial E} = -p_1 + \lambda_1\alpha \epsilon S - \lambda_2(\alpha \epsilon S - (\beta + \mu) - u_1(t))- \lambda_3 (p \beta + u_1(t)) \\
&\quad - \lambda_4(1 - p)\beta, \\
\frac{d\lambda_3}{dt} &= -\frac{\partial H}{\partial I_1} = -p_2 + \lambda_3(\mu + \psi + \delta + u_3(t)) - \lambda_5(\psi + u_3(t)) - \lambda_6\delta, \\
\frac{d\lambda_4}{dt} &= -\frac{\partial H}{\partial I_2} = -p_3 + \lambda_1\alpha S - \lambda_2\alpha S - \lambda_3(\gamma + u_2(t)) + \lambda_4(\delta + \gamma + \mu + u_2(t))\\
&\quad  - \lambda_6\delta, \\
\frac{d\lambda_5}{dt} &= -\frac{\partial H}{\partial T} = \lambda_5(\xi + \mu) - \lambda_6\xi, \\
\frac{d\lambda_6}{dt} &= -\frac{\partial H}{\partial A} = \lambda_6(\mu_0 + \mu),
\end{aligned}
\]
and $\lambda_i(t_f) = 0, i = 1, 2, \ldots, 6.$
After that, we can derive the partial derivatives of the Hamiltonian function with respect to the control variables \( v_1, v_2, \) and \( v_3 \) in the control set \( \Omega \) as follows:
\[
\frac{\partial H}{\partial v_1} = q_1 v_1(t) + (\lambda_3 - \lambda_2)E^{**},
\]
\[
\frac{\partial H}{\partial v_2} = q_2 v_2(t) + (\lambda_3 - \lambda_4)I_2^{**},
\]
\[
\frac{\partial H}{\partial v_3} = q_3 v_3(t) + (\lambda_5 - \lambda_3)I_1^{**}.
\]
Let \( \frac{\partial H}{\partial v_1} = 0, \frac{\partial H}{\partial v_2} = 0, \) and \( \frac{\partial H}{\partial v_3} = 0, \) we obtain,
\[
\begin{aligned}
v_1^*(t) &= \frac{(\lambda_2 - \lambda_3)E^{**}}{q_1}, \\
v_2^*(t) &= \frac{(\lambda_4 - \lambda_3)I_2^{**}}{q_2}, \\
v_3^*(t) &= \frac{(\lambda_3 - \lambda_5)I_1^{**}}{q_3}.
\end{aligned}
\]
Since the control variable \( 0 \leq v_i \leq 1, i = 1, 2, 3, \) the optimal controls are:
\[
\begin{aligned}
v_1^*(t) &= \min \left\{ \max \left\{ 0, \frac{(\lambda_2 - \lambda_3)E^{**}}{q_1} \right\}, 1 \right\}, \\
v_2^*(t) &= \min \left\{ \max \left\{ 0, \frac{(\lambda_4 - \lambda_3)I_2^{**}}{q_2} \right\}, 1 \right\}, \\
v_3^*(t) &= \min \left\{ \max \left\{ 0, \frac{(\lambda_3 - \lambda_5)I_1^{**}}{q_3} \right\}, 1 \right\}.
\end{aligned}
\]
\end{proof}
\noindent Using the method explained in references \cite{ghosh2021mathematical} and \cite{roy2021control}, it has been demonstrated that the solution to model (\ref{Model_3}), satisfying the initial condition \eqref{equ:initial}, remains within a finite range.
\begin{theorem}
The solution set \((S(t), E(t), I_1(t), I_2(t), T(t), A(t))\) for model (\ref{Model_3}) is absolutely continuous over the complete interval \([0, t_f]\) for all permissible control variables \((v_1(t), v_2(t), v_3(t))\). These state variables \(S(t), E(t), I_1(t), I_2(t), T(t), A(t)\) adhere to the following inequalities:
\begin{align*}
0 \leq S(t) &\leq \frac{\kappa}{\mu}, \quad 0 \leq E(t) \leq \frac{\kappa}{\mu}, \quad 0 \leq I_1(t) \leq \frac{\kappa}{\mu}, \quad 0 \leq I_2(t) \leq \frac{\kappa}{\mu}, \\
0 \leq T(t) &\leq \frac{\kappa}{\mu}, \quad 0 \leq A(t) \leq \frac{\kappa}{\mu}.
\end{align*}
\end{theorem}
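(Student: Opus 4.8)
The plan is to regard model~\eqref{Model_3} as a Carathéodory system $\mathbb{X}'=G(\mathbb{X},t)$ whose right-hand side depends on the admissible controls $v_1,v_2,v_3\in L^\infty[0,t_f]$, and to recover both assertions — absolute continuity and the a priori bounds — by adapting the invariant-region argument already used for Lemma~\ref{Lem1}. First I would establish existence, uniqueness and absolute continuity. For fixed controls the vector field is polynomial (hence locally Lipschitz) in the state $(S,E,I_1,I_2,T,A)$ and, because each $v_i$ is bounded and measurable, it is measurable in $t$ and dominated by an integrable function on any bounded state set. The Carathéodory existence theorem then yields a solution satisfying the integral identity $X_i(t)=X_i(0)+\int_0^t G_i\,ds$; since the integrand lies in $L^1[0,t_f]$ once the solution is confined to a bounded set, each $X_i$ is absolutely continuous on $[0,t_f]$, and local Lipschitz continuity gives uniqueness.

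Next I would prove non-negativity exactly as in the proof of Lemma~\ref{Lem1}, via the minimum-time device. Set $M(t)=\min\{S,E,I_1,I_2,T,A\}$ and suppose $t_1$ is the first instant at which some coordinate vanishes while the others remain positive on $[0,t_1]$. The key structural observation is that in each equation of \eqref{Model_3} every outflow term is proportional to the compartment's own variable, whereas all inflow terms (including the control transfers $v_1E$, $v_2I_2$, $v_3I_1$) are non-negative on $[0,t_1]$. Hence each coordinate obeys a differential inequality $X_i'\ge -c_i(t)X_i$ with $c_i\in L^\infty$, so $X_i(t)\ge X_i(0)\exp\!\big(-\int_0^t c_i\,d\tau\big)>0$, contradicting $M(t_1)=0$. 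This forces $S,E,I_1,I_2,T,A>0$ for all $t\ge 0$.

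For the upper bounds I would sum the six equations of \eqref{Model_3} and verify that the three control terms cancel pairwise, since $v_1,v_2,v_3$ only redistribute individuals among compartments rather than creating or removing them: the $-v_1E$ in the $E$ equation is offset by $+v_1E$ in the $I_1$ equation, and likewise for $v_2I_2$ and $v_3I_1$. One is left with exactly the balance law from Lemma~\ref{Lem1}, namely $N'=\kappa-\mu N-\mu_0 A\le \kappa-\mu N$. The Comparison Principle then gives $N(t)\le N(0)e^{-\mu t}+\frac{\kappa}{\mu}(1-e^{-\mu t})$, so on the feasible region of Lemma~\ref{Lem1} (where $N(0)\le \kappa/\mu$) we have $N(t)\le \kappa/\mu$ for all $t\in[0,t_f]$. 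Combining this with non-negativity, every coordinate satisfies $0\le X_i(t)\le N(t)\le \kappa/\mu$, which is the claimed inequality.

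The main obstacle is not the boundedness — that reduces cleanly to the Lemma~\ref{Lem1} estimate once the control terms are shown to cancel in the total-population equation — but the regularity step: because the controls are only $L^\infty$, the field $G(\cdot,t)$ is merely measurable in $t$, so classical Picard–Lindel\"of does not apply and one must invoke the Carathéodory framework to obtain an absolutely continuous (rather than $C^1$) solution. The bookkeeping verifying that $v_1E$, $v_2I_2$, $v_3I_1$ each appear with opposite signs in two equations is the one place where an error would break the conservation argument, so I would carry out that cancellation explicitly before invoking the comparison step.
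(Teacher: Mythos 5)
Your proposal is correct and follows essentially the same route as the paper: the paper's own proof likewise reduces everything to Lemma \ref{Lem1}, asserting non-negativity from the structure of model \eqref{Model_3} together with the condition $0 \le v_i \le 1$, and then citing that lemma for the bound $\kappa/\mu$ on the maximal interval of existence. Your write-up is in fact more complete than the paper's, which never verifies the pairwise cancellation of the control terms $v_1E$, $v_2I_2$, $v_3I_1$ in the total-population equation, never addresses the claimed absolute continuity (your Carath\'eodory step), and does not record that the pointwise bound $N(t)\le \kappa/\mu$ on all of $[0,t_f]$ requires $N(0)\le \kappa/\mu$ rather than just the limsup estimate of Lemma \ref{Lem1} --- a caveat you correctly flag by restricting to the feasible region.
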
 
\begin{proof}
We will assume that the solution \((S(t), E(t), I_1(t), I_2(t), T(t), A(t))\) for model (\ref{Model_3}) is defined over the interval \([0, t^*]\), representing the maximal interval of existence for this model. For the sake of simplicity, let's assume that, $t^* \leq t_f$. The solutions $(S(t), E(t), I_1(t),\\ I_2(t), T(t), A(t))$ are not negative according to model (\ref{Model_3}) and the condition $0 \leq v_i \leq 1$ for $i = 1, 2, 3$. 
Through the use of Lemma \ref{Lem1}, we may concurrently ascertain the maximum value of the solution for model (\ref{Model_3}), which are:
\[
0 \leq S(t), E(t), I_1(t), I_2(t), T(t), A(t) \leq \frac{\kappa}{\mu}.
\]
As a result, the responses of model (\ref{Model_3}) that meet the initial condition \eqref{equ:initial} are considered to be bounded.
\end{proof} 
\noindent We will utilize the technique outlined in reference \cite{roy2015effect} to showcase the distinctiveness of the optimum control.
\begin{theorem}
The optimal system (\ref{Model_3}) possesses a unique solution within the adequately brief interval \([t_0, t_f]\).
\end{theorem}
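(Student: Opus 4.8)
The plan is to prove uniqueness by a Grönwall/Lipschitz argument for the two-point boundary value problem formed by the optimality system, following the method of \cite{roy2015effect}. This system couples the forward state equations \eqref{Model_3} (data prescribed at $t_0$), the backward adjoint equations \eqref{model_4} (data prescribed at $t_f$), and the control characterizations $v_i^{*}=\min\{\max\{0,v_i^{c}\},1\}$. First I would record two a priori facts. By the preceding boundedness theorem, every admissible state trajectory satisfies $0\le S,E,I_1,I_2,T,A\le \kappa/\mu$ on $[t_0,t_f]$; and since \eqref{model_4} is linear in the $\lambda_i$ with coefficients built from the fixed parameters, the bounded states, and the controls $u_i\in[0,1]$, each $\lambda_i$ is bounded on the finite interval $[t_0,t_f]$. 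Consequently every right-hand side of the coupled system is Lipschitz in $(S,\dots,A,\lambda_1,\dots,\lambda_6)$ on the relevant bounded set, the only nonlinearities being the bilinear incidence term $\alpha(\epsilon E+I_2)S$ and the products $(\lambda_2-\lambda_3)E$, $(\lambda_4-\lambda_3)I_2$, $(\lambda_3-\lambda_5)I_1$ that enter through the controls.

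Next I would suppose, for contradiction, that the system admits two solutions, writing one set of state/adjoint functions and a second barred set. To absorb possible exponential growth and to reconcile the opposite orientations of the initial and terminal data, I would introduce a constant $\theta>0$ and rescale, setting each state variable equal to $e^{-\theta t}$ times a new unknown and each adjoint variable equal to $e^{\theta t}$ times a new unknown, as in \cite{roy2015effect}. Substituting the two solutions into the rescaled system and subtracting produces differential equations for the differences of the rescaled states and adjoints, in which the exponential factors cancel the linear self-terms and leave controllable remainders.

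Then I would form the squared $L^{2}$ differences of each component, differentiate, and integrate over $[t_0,t_f]$, using that the states agree at $t_0$ and the adjoints agree at $t_f$ to annihilate the boundary contributions. The $L^{\infty}$ bounds on the states and adjoints, together with the fact that $w\mapsto\min\{\max\{0,w\},1\}$ is nonexpansive, let me dominate every cross term by a constant multiple of the summed squared differences. Collecting the six state estimates and six adjoint estimates yields an inequality of the schematic form
\[
C_1\int_{t_0}^{t_f}\Sigma(t)\,dt \le C_2\bigl(e^{(3\theta+C_3)(t_f-t_0)}-1\bigr)\int_{t_0}^{t_f}\Sigma(t)\,dt,
\]
where $\Sigma(t)$ is the sum of all squared component differences and $C_1,C_2,C_3$ depend only on the parameters, the bound $\kappa/\mu$, and the adjoint bounds. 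For fixed admissible $\theta$, choosing $t_f-t_0$ small enough makes the bracketed factor strictly less than $C_1/C_2$, forcing $\int_{t_0}^{t_f}\Sigma\,dt=0$; hence the two solutions coincide and the optimality system is unique on a sufficiently short interval.

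The main obstacle I anticipate is the bookkeeping of the bilinear incidence term $\alpha(\epsilon E+I_2)S$, which appears in the state block and, through $\lambda_1,\lambda_2,\lambda_4$, in the adjoint block as well, so that after the exponential rescaling each product of differences is dominated by $\Sigma(t)$ with explicit, $\theta$-independent constants; getting these constants right is exactly what makes the final smallness condition on $t_f-t_0$ quantitative. The remaining ingredients, namely the nonexpansiveness of the projection and the Grönwall closure, are routine once the component estimates are assembled.
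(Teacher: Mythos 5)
Your overall strategy is exactly the paper's: assume two solutions of the coupled state--adjoint system, make an exponential change of variables as in \cite{roy2015effect}, subtract, multiply by the differences, integrate over \([t_0,t_f]\), and close with a smallness condition on the interval. However, as written your change of variables has the exponents backwards, and this breaks the one step that makes the method work. You set each state equal to \(e^{-\theta t}\) times a new unknown and each adjoint equal to \(e^{+\theta t}\) times a new unknown; the paper (and \cite{roy2015effect}) does the opposite, \(S=e^{\phi t}m_1,\dots\), \(\lambda_i=e^{-\phi t}n_i\). The sign matters. With the paper's convention the rescaled state differences satisfy \((m-\hat m)'+\phi(m-\hat m)=e^{-\phi t}(\cdots)\), so multiplying by \((m-\hat m)\) and integrating from \(t_0\) (where the states agree) puts \(\tfrac12 (m-\hat m)^2(t_f)+\phi\int_{t_0}^{t_f}(m-\hat m)^2\,dt\) on the left, both terms non-negative; the adjoints, after the analogous manipulation using \((n-\hat n)(t_f)=0\), contribute \(\tfrac12(n-\hat n)^2(t_0)+\phi\int_{t_0}^{t_f}(n-\hat n)^2\,dt\) on the left as well. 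With your convention the state difference equation is \((u-\hat u)'-\theta(u-\hat u)=e^{\theta t}(\cdots)\), so the same manipulation produces \(\tfrac12(u-\hat u)^2(t_f)-\theta\int_{t_0}^{t_f}(u-\hat u)^2\,dt\): the \(\theta\)-term now enters with the wrong sign and cannot dominate anything, while for the adjoints the surviving boundary term \(\tfrac12(n-\hat n)^2(t_0)\) lands on the unfavorable side of the identity, where nothing is known about it (the adjoint data sit at \(t_f\), not \(t_0\)). So the argument as stated collapses at the integration step; the repair is simply to swap the two exponents, matching the growth factor to the endpoint at which each block of variables has prescribed (hence agreeing) data.

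A second, smaller discrepancy is in your closing inequality. After the correctly oriented rescaling, the left-hand coefficient of \(\int\Sigma\,dt\) is \(\theta\) itself, and the right-hand coefficient has the form \(\tilde K_1+\tilde K_2 e^{3\theta t_f}\), where \(\tilde K_1>0\) collects the \(\theta\)-independent terms; as \(t_f-t_0\to 0\) this tends to \(\tilde K_1+\tilde K_2\), not to \(0\). Consequently one must first choose \(\theta>\tilde K_1+\tilde K_2\) and only then shrink the interval --- this is precisely the paper's choice of \(\phi\) and its bound on \(t_f\). Your schematic bound \(C_1\int\Sigma\,dt\le C_2\bigl(e^{(3\theta+C_3)(t_f-t_0)}-1\bigr)\int\Sigma\,dt\), with \(C_1\) independent of \(\theta\) and a right side vanishing as the interval shrinks, is not what this computation delivers, so ``fix any admissible \(\theta\), then take \(t_f-t_0\) small'' does not suffice. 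The remaining ingredients you list --- the a priori bounds \(0\le S,E,I_1,I_2,T,A\le\kappa/\mu\), boundedness of the adjoints of \eqref{model_4}, nonexpansiveness of the projection \(\min\{\max\{0,\cdot\},1\}\), and the cancellation of exponentials in the bilinear terms --- are all correct and match the paper's proof.
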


\begin{proof}
We suppose that $(S, E, I_1, I_2, T, A, \lambda_1, \lambda_2, \lambda_3, \lambda_4, \lambda_5, \lambda_6)$ 
and $(\hat{S}, \hat{E}, \hat{I}_1, \hat{I}_2, \hat{T}, \hat{A}, \hat{\lambda}_1, \hat{\lambda}_2,\\ \hat{\lambda}_3, \hat{\lambda}_4, \hat{\lambda}_5, \hat{\lambda}_6)$ 
are two solutions to models (\ref{Model_3}) and (\ref{model_4}). 
Let $S = e^{\phi t} m_1, E = e^{\phi t} m_2, I_1 = e^{\phi t} m_3, I_2 = e^{\phi t} m_4, T = e^{\phi t} m_5, A = e^{\phi t} m_6, \lambda_1 = e^{-\phi t} n_1, \lambda_2 = e^{-\phi t} n_2, \lambda_3 = e^{-\phi t} n_3, \lambda_4 = e^{-\phi t} n_4, \lambda_5 = e^{-\phi t} n_5, \lambda_6 = e^{-\phi t} n_6, \hat{S}=e^{\phi t} \hat{m_1}, \hat{E}=e^{\phi t} \hat{m_2}, \hat{I_1}=e^{\phi t} \hat{m_3}, \hat{I_2}=e^{\phi t} \hat{m_4}, \hat{T}=e^{\phi t} \hat{m_5}, \hat{A}=e^{\phi t} \hat{m_6}, \hat{\lambda_1} = e^{-\phi t} \hat{q_1}, \hat{\lambda_2} = e^{-\phi t} \hat{q_2}, \hat{\lambda_3} = e^{-\phi t} \hat{q_3}, \hat{\lambda_4} = e^{-\phi t} \hat{q_4}, \hat{\lambda_5} = e^{-\phi t} \hat{n_5}, \hat{\lambda_6} = e^{-\phi t} \hat{n_6}$, 
where $\phi > 0$ is to be determined, and also $m_i, n_i, \hat{m}_i$ and $\hat{n}_i$ ($i = 1, 2, \ldots, 6$) are variables about $t$.
On top of that, we have:
\[
v_1^*(t) = \min \left\{ \max \left[ 0, \frac{(n_2 - n_3)m_2}{q_1} \right], 1 \right\}
\]
\[
v_2^*(t) = \min \left\{ \max \left[ 0, \frac{(n_4 - n_3)m_4}{q_2} \right], 1 \right\}
\]
\[
v_3^*(t) = \min \left\{ \max \left[ 0, \frac{(n_3 - n_5)m_3}{q_3} \right], 1 \right\}
\]
and
\[
\hat{v}_1^*(t) = \min \left\{ \max \left[ 0, \frac{(\hat{n}_2 - \hat{n}_3)\hat{m}_2}{q_1} \right], 1 \right\}
\]
\[
\hat{v}_2^*(t) = \min \left\{ \max \left[ 0, \frac{(\hat{n}_4 - \hat{n}_3)\hat{m}_4}{q_2} \right], 1 \right\}
\]
\[
\hat{v}_3^*(t) = \min \left\{ \max \left[ 0, \frac{(\hat{n}_3 - \hat{n}_5)\hat{m}_3}{q_3} \right], 1 \right\}
\]
Following that, we will replace the values of $S, E, I_1, I_2, T, A, \lambda_1, \lambda_2, \lambda_3, \lambda_4, \lambda_5, \lambda_6$ into \eqref{Model_3} and \eqref{model_4}. Then, we have,
\begin{align*}
m_1' + \phi m_1 &= \kappa e^{-\phi t} - (\alpha \epsilon m_2 + \alpha m_4)m_1e^{\phi t} - \mu m_1, \\
m_2' + \phi m_2 &= (\alpha \epsilon m_2 + \alpha m_4)m_1e^{\phi t} - (\beta + \mu)m_2 - \frac{(n_2 - n_3)m_2}{q_1}m_2, \\
m_3' + \phi m_3 &= m \alpha m_2 + \gamma m_4 - (\mu + \psi + \delta)m_3 + \frac{(n_2 - n_3)m_2}{q_1}m_2 + \frac{(n_4-n_3)m_4}{q_2}m_4 - \frac{(n_3-n_5)m_3}{q_1}m_3, \\
m_4' + \phi m_4 &= (1 - m) \beta m_2 - (\delta + \gamma + \mu)m_4 - \frac{(n_4 - n_3)m_4}{q_2}m_4, \\
m_5' + \phi m_5 &= \psi m_3 - (\xi + \mu)m_5 + \frac{(n_3 - n_5)m_3}{q_1}m_3, \\
m_6' + \phi m_6 &= \delta m_3 + \delta m_4 + \xi m_5 - \mu_0 m_6 - \mu m_6, \\
n_1' - \phi n_1 &= n_1(\alpha \epsilon m_2e^{\phi t} + \alpha m_4e^{\phi t} + \mu) - n_2e^{\phi t}(\alpha m_2 + \alpha m_4), \\
n_2' - \phi n_2 &= -p_1e^{\phi t} + n_1m_1\alpha \epsilon e^{\phi t} - n_2(\alpha \epsilon m_1e^{\phi t} - (\beta + \mu) - \frac{(n_2 - n_3)m_2}{q_1}m_2) \\
&\quad - n_3(m\beta + \frac{(n_3 - n_5)m_2}{q_1}m_2) - n_4(1 - m)\beta, \\
n_3' - \phi n_3 &= -p_2e^{\phi t} + n_3(\mu + \psi + \delta + \frac{(n_3 - n_5)m_3}{q_1}m_3) - n_5(\psi + \frac{(n_3 - n_5)m_3}{q_1}m_3) - \delta n_6, \\
n_4' - \phi n_4 &= -p_3e^{\phi t} - n_3(\gamma + \frac{(n_4 - n_3)m_4}{q_2} m_4) + n_4(\delta + \gamma + \mu + \frac{(n_4 - n_3)m_4}{q_2}m_4) - \delta n_6, \\
n_5' - \phi n_5 &= n_5(\xi + \mu) - \eta n_6, \\
n_6' - \phi n_6 &= n_6(\mu_0 + \mu).
\end{align*} 
The subtraction of the equations for \(S\) and \(\hat{S}\), \(E\) and \(\hat{E}\), \(I_1\) and \(\hat{I}_1\), \(I_2\) and \(\hat{I}_2\), \(T\) and \(\hat{T}\), \(A\) and \(\hat{A}\) are performed. Next, each equation is multiplied by a function that corresponds to it, and the resulting product is integrated across the time range ranging from \(t_0\) to \(t_f\). In light of this,
\begin{align*}
\int_{t_0}^{t_f} (v_1^* - \hat{v}_1^*)^2 \, dt &\leq B_1 e^{2\phi t_f} \int_{t_0}^{t_f} \left[ |m_2 - \hat{m}_2|^2 + |n_2 - \hat{n}_2|^2 + |n_3 - \hat{n}_3|^2 \right] \, dt, \\
\int_{t_0}^{t_f} (v_2^* - \hat{v}_2^*)^2 \, dt &\leq B_2 e^{2\phi t_f} \int_{t_0}^{t_f} \left[ |m_4 - \hat{m}_4|^2 + |n_3 - \hat{n}_3|^2 + |n_4 - \hat{n}_4|^2 \right] \, dt, \\
\int_{t_0}^{t_f} (v_3^* - \hat{v}_3^*)^2 \, dt &\leq B_3 e^{2\phi t_f} \int_{t_0}^{t_f} \left[ |m_3 - \hat{m}_3|^2 + |n_5 - \hat{n}_5|^2 + |n_5 - \hat{n}_5|^2 \right] \, dt,
\end{align*}
where $B_1$, $B_2$, and $B_3$ are constants.
Therefore, we obtain,
\begin{align*}
&\frac{1}{2}(m_1 - \hat{m}_1)^2(t_f) + \phi \int_{t_0}^{t_f} |m_1 - \hat{m}_1|^2 \mathrm{d}t \\
&\leq \alpha \epsilon e^{-\phi t_f} \int_{t_0}^{t_f} \left[|m_1 - \hat{m}_1|^2 + |m_2 - \hat{m}_2|^2\right] \mathrm{d}t + \alpha e^{-\phi t_f} \int_{t_0}^{t_f} |m_1 - \hat{m}_1|^2 \mathrm{d}t \\
&\quad + |m_4 -\hat{m}_4|^2 \mathrm{d}t + \mu \int_{t_0}^{t_f} |m_1 - \hat{m}_1|^2 \mathrm{d}t,
\end{align*}
and
\begin{align*}
&\frac{1}{2}(n_1 - \hat{n}_1)^2(t_f) + \phi \int_{t_0}^{t_f} |n_1 - \hat{n}_1|^2 \mathrm{d}t \leq \alpha \epsilon e^{2\phi t_f} \int_{t_0}^{t_f} \left[|m_2 - \hat{m}_2|^2 + |n_1-\hat{n}_1|^2\right] \mathrm{d}t \\
&+ \alpha e^{2\phi t_f} \int_{t_0}^{t_f} \left[|m_4 - \hat{m}_4|^2 + |n_1 - \hat{n}_1|^2\right] \mathrm{d}t + \alpha e^{2\phi t_f} \int_{t_0}^{t_f} \left[|m_2 - \hat{m}_2|^2 + |n_2 - \hat{n}_2|^2\right] \mathrm{d}t \\
&+ \alpha e^{2\phi t_f} \int_{t_0}^{t_f} \left[|m_4 - \hat{m}_4|^2 + |n_1 - \hat{n}_1|^2\right] \mathrm{d}t + \mu e^{\phi t_f} \int_{t_0}^{t_f} \left[|m_1 - \hat{m}_1|^2 + |n_1 - \hat{n}_1|^2\right] \mathrm{d}t. 
\end{align*}
Similarly, we derive the inequalities for $m_2$ and $\hat{m}_2$, $m_3$ and $\hat{m}_3$, $m_4$ and $\hat{m}_4$, $m_5$ and $\hat{m}_5$, $m_6$ and $\hat{m}_6$, $n_2$ and $\hat{n}_2$, $n_3$ and $\hat{n}_3$, $n_4$ and $\hat{n}_4$, $n_5$ and $\hat{n}_5$, $n_6$ and $\hat{n}_6$. Summing all the inequalities, we have,
\begin{align*}
	&\frac{1}{2} \big[(m_1 - \hat{m}_1)^2(t_f) + (m_2 - \hat{m}_2)^2(t_f) + (m_3 - \hat{m}_3)^2(t_f) + (m_4 - \hat{m}_4)^2(t_f) + (m_5 - \hat{m}_5)^2(t_f) \\
	& + (m_6 - \hat{m}_6)^2(t_f) + (n_1 - \hat{n}_1)^2(t_0) + (n_2 - \hat{n}_2)^2(t_0) + (n_3 - \hat{n}_3)^2(t_0) + (n_4 - \hat{n}_4)^2(t_0) \\
	& + (n_5 - \hat{n}_5)^2(t_0) + (n_6 - \hat{n}_6)^2(t_0) \big] + \phi \int_{t_0}^{t_f} \left[ |m_1 - \hat{m}_1|^2 + |m_2 - \hat{m}_2|^2 + |m_3 - \hat{m}_3|^2 \right. \\
	& \left. + |m_4 - \hat{m}_4|^2 + |m_5 - \hat{m}_5|^2 + |m_6 - \hat{m}_6|^2 + |n_1 - \hat{n}_1|^2 + |n_2 - \hat{n}_2|^2 + |n_3 - \hat{n}_3|^2 \right. \\
	& \left. + |n_4 - \hat{n}_4|^2 + |n_5 - \hat{n}_5|^2 + |n_6 - \hat{n}_6|^2 \right] \,dt \\
	& \leq (\tilde{K_1} + \tilde{K_2} e^{3 \phi t_f}) \int_{t_0}^{t_f} \left[ |m_1 - \hat{m}_1|^2 + |m_2 - \hat{m}_2|^2 + |m_3 - \hat{m}_3|^2 + |m_4 - \hat{m}_4|^2 + |m_5 - \hat{m}_5|^2 \right. \\
	& \left. + |m_6 - \hat{m}_6|^2 + |n_1 - \hat{n}_1|^2 + |n_2 - \hat{n}_2|^2 + |n_3 - \hat{n}_3|^2 + |n_4 - \hat{n}_4|^2 + |n_5 - \hat{n}_5|^2 + |n_6 - \hat{n}_6|^2 \right] \,dt.
\end{align*}
From the above equation, we have,
\begin{align*}
    &(\phi - \tilde{K}_1 - \tilde{K}_2 e^{-3\phi t_f}) \int_{t_0}^{t_f} \big[|m_1 - \hat{m}_1|^2 + |m_2 - \hat{m}_2|^2 + |m_3 - \hat{m}_3|^2 + |m_4 - \hat{m}_4|^2 + |m_5 - \hat{m}_5|^2 \\
    &+ |m_6 - \hat{m}_6|^2 + |n_1 - \hat{n}_1|^2 + |n_2 - \hat{n}_2|^2 + |n_3 - \hat{n}_3|^2 + |n_4 - \hat{n}_4|^2 + |n_5 - \hat{n}_5|^2 + |n_6 - \hat{n}_6|^2 ] \,dt.\\
    &\leq 0,
\end{align*}
The coefficients and the bounds on $m_i$ and $n_i$ are dependent on the coefficients and the bounds on $\tilde{K}_1$ and $\tilde{K}_2$, where $i$ ranges from 1 to 6.\\
We choose $\phi$ such that $\phi > \tilde{K}_1 + \tilde{K}_2$ and $t_f < \frac{1}{3} \ln\left(\frac{\phi - \tilde{K}_1}{\tilde{K}_2}\right)$, then $m_1 = \hat{m}_1, m_2 = \hat{m}_2, m_3 = \hat{m}_3, m_4 = \hat{m}_4, m_5 = \hat{m}_5, m_6 = \hat{m}_6, n_1 = \hat{n}_1, n_2 = \hat{n}_2, n_3 = \hat{n}_3, n_4 = \hat{n}_4, n_5 = \hat{n}_5, n_6 = \hat{n}_6. $ \\
It may be concluded that model (\ref{Model_3}) is capable of providing a single optimal solution within a short period of time.
\end{proof}

\subsection{Control Strategies: Data Analysis and  Numerical Simulations}
In order to demonstrate that the theoretical conclusions and control methods are feasible, numerical simulations are carried out in order to investigate the influence that the optimum control technique has on the transmission of HIV.
\begin{figure}[H]
  \centering
  \includegraphics[width=0.6\textwidth]{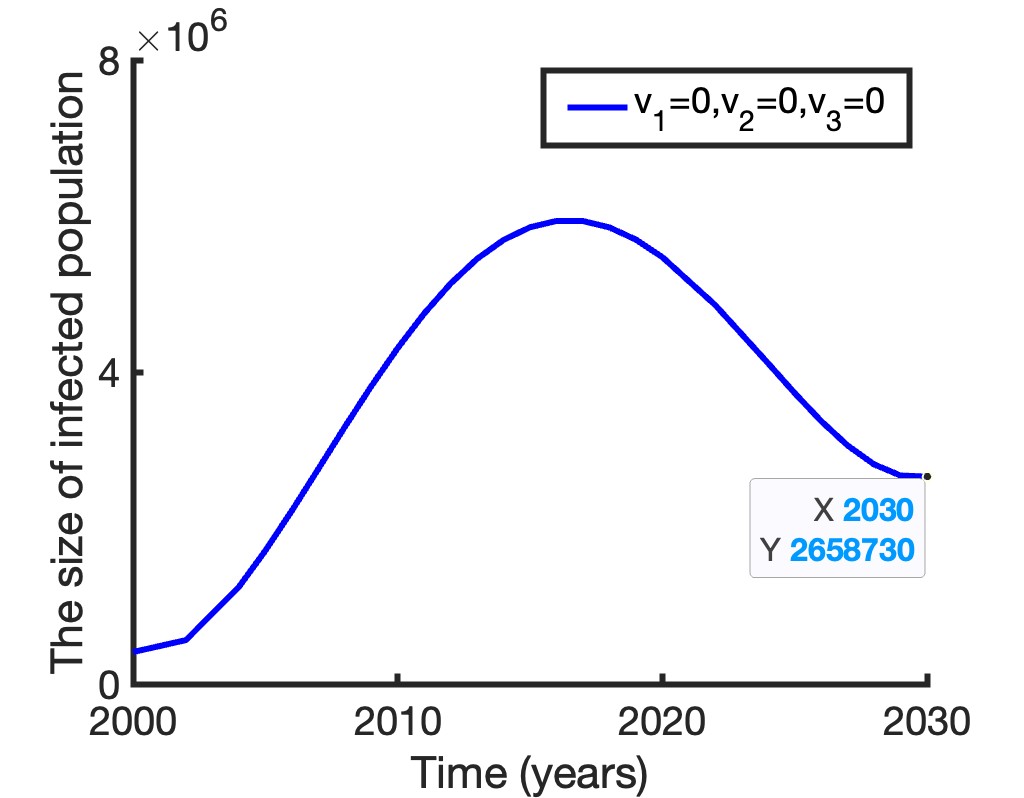}
  \caption{The amount of people with infections when $v_1=0, v_2=0$ and $v_3=0$.}
  \label{fig:v=0}
\end{figure}
\noindent In the Hamiltonian function, the weights assigned to state variables and control variables are set as follows: $p_1 = 20, p_2 = 30, p_3 = 30, q_1 = 45, q_2 = 25,$ and $q_3 = 10$. On the basis of the three control variables, $v_1(t),\; v_2(t)$, and $v_3(t)$, we offer the four control techniques that are listed below.\\
In order to evaluate the effectiveness of the control following the implementation of the control measures, we first simulate the amount of individuals who are infected but do not implement any control measures., as seen in figure \ref{fig:v=0}. Infected persons can be classified as either latent, aware, or unaware.The data shown in figure \ref{fig:v=0} demonstrates that the number of infected persons begins to rise from 2002 to 2017, but then begins to fall from 2017 to 2030. The number of people that are fatigued has expected to reach 26,58,730 by the end of the year 2030. Next, we will investigate the influence that control techniques have on the spread of HIV.\\

\noindent\textbf{Strategy 01: $v_1=0,\;v_2 \neq 0$ and $v_3 \neq 0$}\\
It is being proposed to implement a mix of education for individuals who are infected $v_2(t)$ and treatment for infected individuals who are aware of their infection $v_3(t)$. That is to say, $v_1=0,\;v_2 \neq 0$ and $v_3 \neq 0$. Following that, to analyze control effects, we shall optimize the cost function $K$ in equation \eqref{equ:objective function}.
\begin{figure}[H]
\centering
\subfigure[Variation in the optimal control solutions $v_2$ and $v_3$.]{\includegraphics[width=0.45\linewidth]{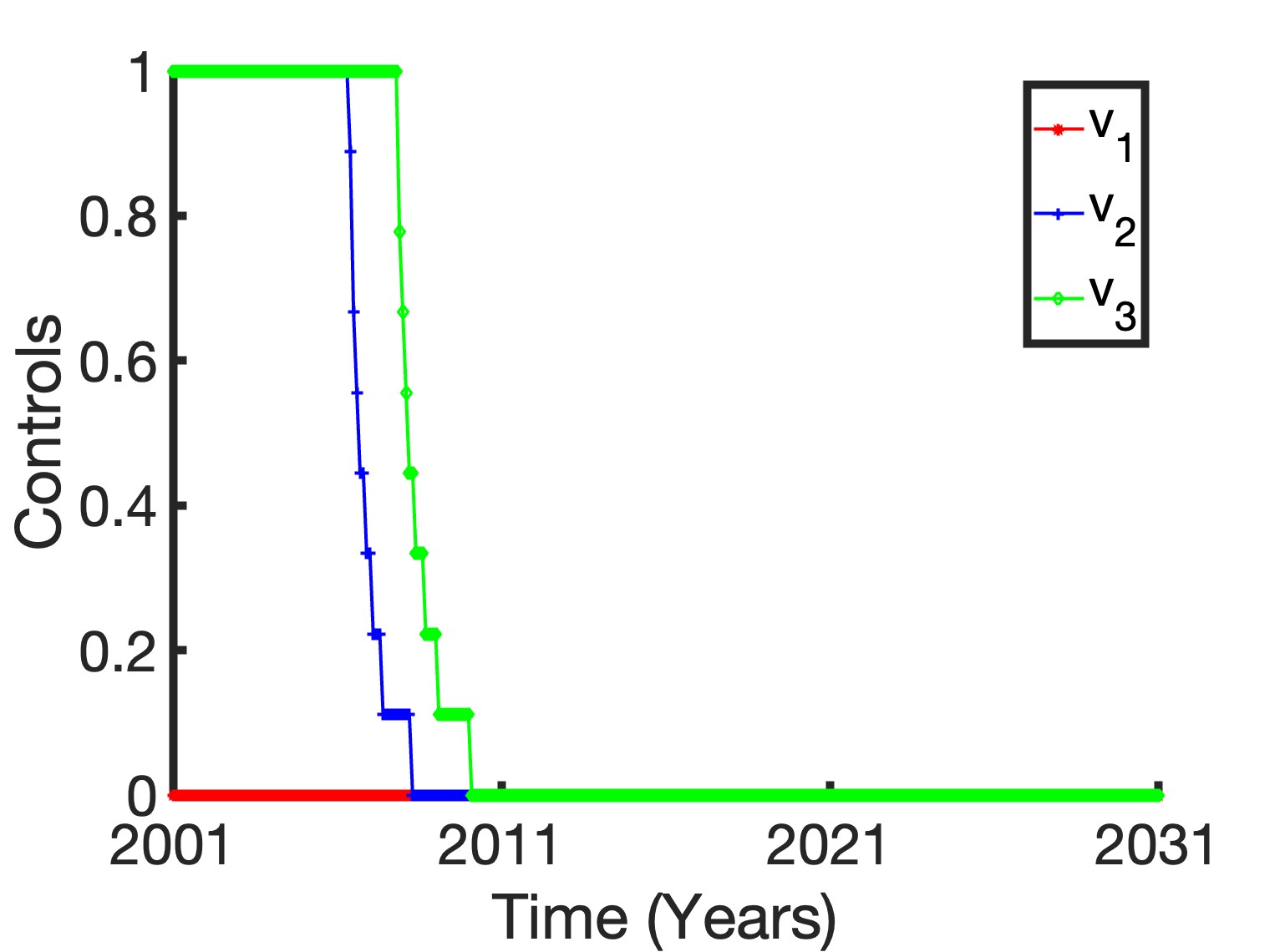}\label{fig:12a}}
\subfigure[Total size of infected population under controls $v_2$ and $v_3$.]{\includegraphics[width=0.45\linewidth]{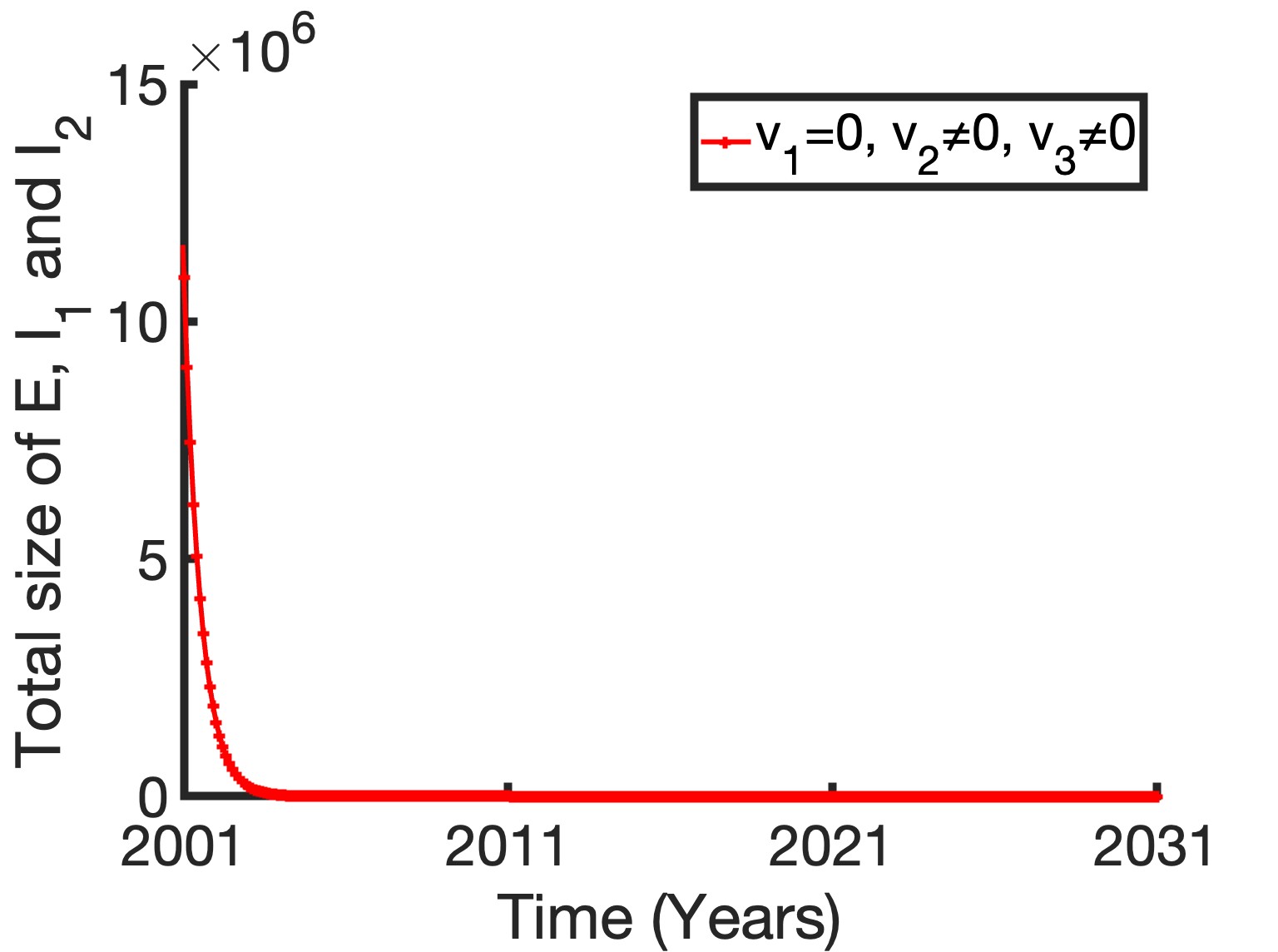}\label{fig:12b}}
\caption{Effects of control strategy implementation 1.}
\label{fig:12}
\end{figure}

\noindent Control techniques $v_2$ and $v_3$ are shown to have their effects depicted in figure \ref{fig:12a}. Both the controls $v_3$ and $v_2$ have reached the maximum control level of 100\% for a timeframe of 8 to 9 years, respectively, at the early stage of the control process. Then both of these drop gradually. Figure \ref{fig:12b} illustrates the impact that controls $v_2$ and $v_3$ have on the total number of persons who are living with the infection. When compared to the scenario where there was no control, the findings indicate that the overall amount of latent, aware, and unaware infected persons is significantly lower when the controls $v_2$ and $v_3$ are implemented (see figure \ref{fig:v=0}). Furthermore, with the implementation of controls $v_2$ and $v_3$, we discover that HIV is completely eradicated in the year 2004. Based on this evidence, it may be concluded that these restrictions are highly successful in reducing the spread of HIV.\\

\noindent\textbf{Strategy 02: $v_1\neq 0,\;v_2=0$ and $v_3\neq 0$}\\
The strategy under consideration involves the utilization of both screening for latent persons $v_1(t)$ and therapies for aware infected individuals $v_3(t)$. To clarify, the values are 
$v_1\neq 0,\;v_2=0$ and $v_3\neq 0$. Subsequently, the cost function $K$ is optimized in equation \eqref{equ:objective function} so that we may investigate the impact of the control.
\begin{figure}[H]
\centering
\subfigure[Variation in the optimal control solutions $v_1$ and $v_3$.]{\includegraphics[width=0.45\linewidth]{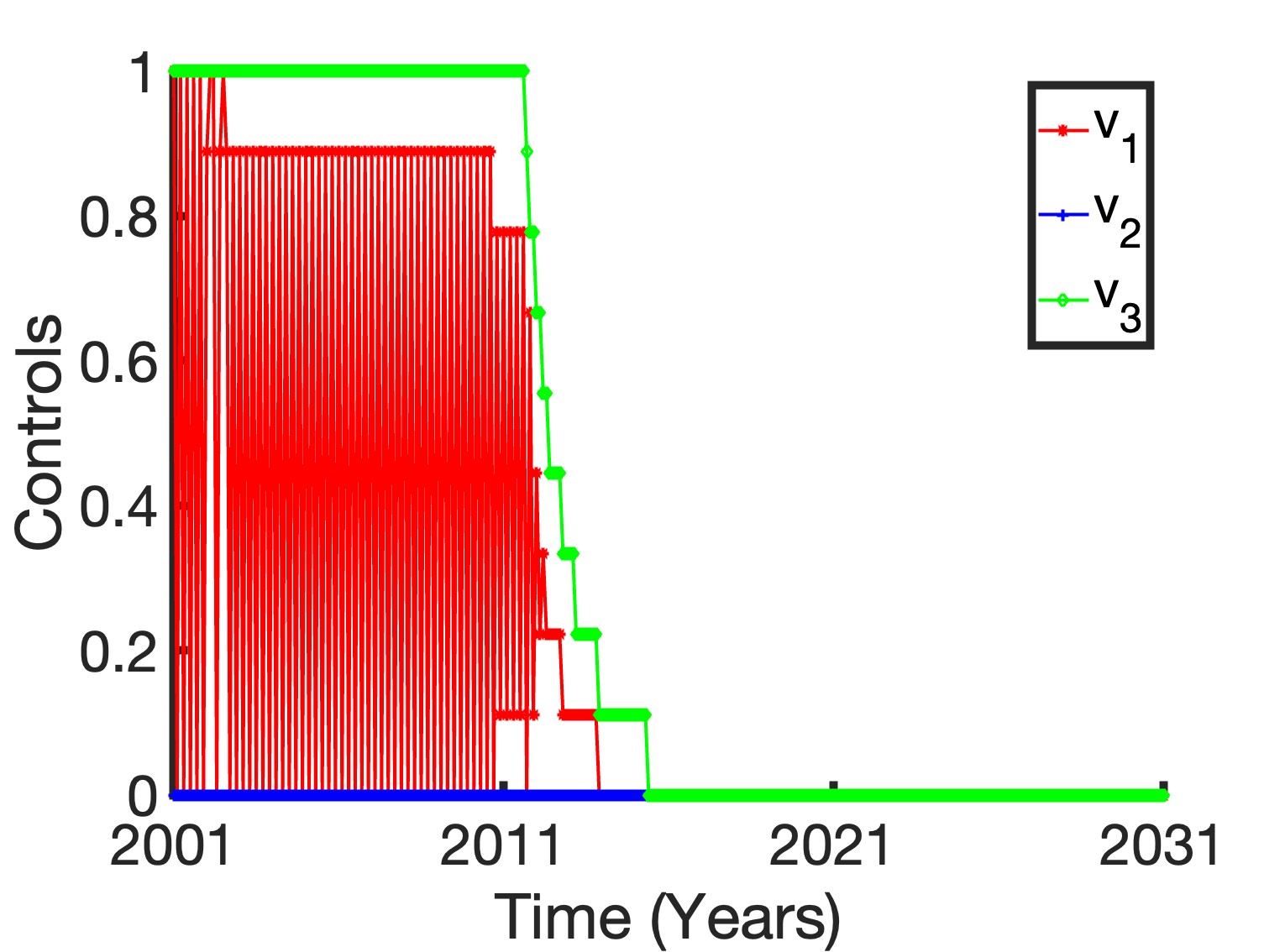}\label{fig:13a}}
\subfigure[Total size of infected population under controls $v_1$ and $v_3$.]{\includegraphics[width=0.45\linewidth]{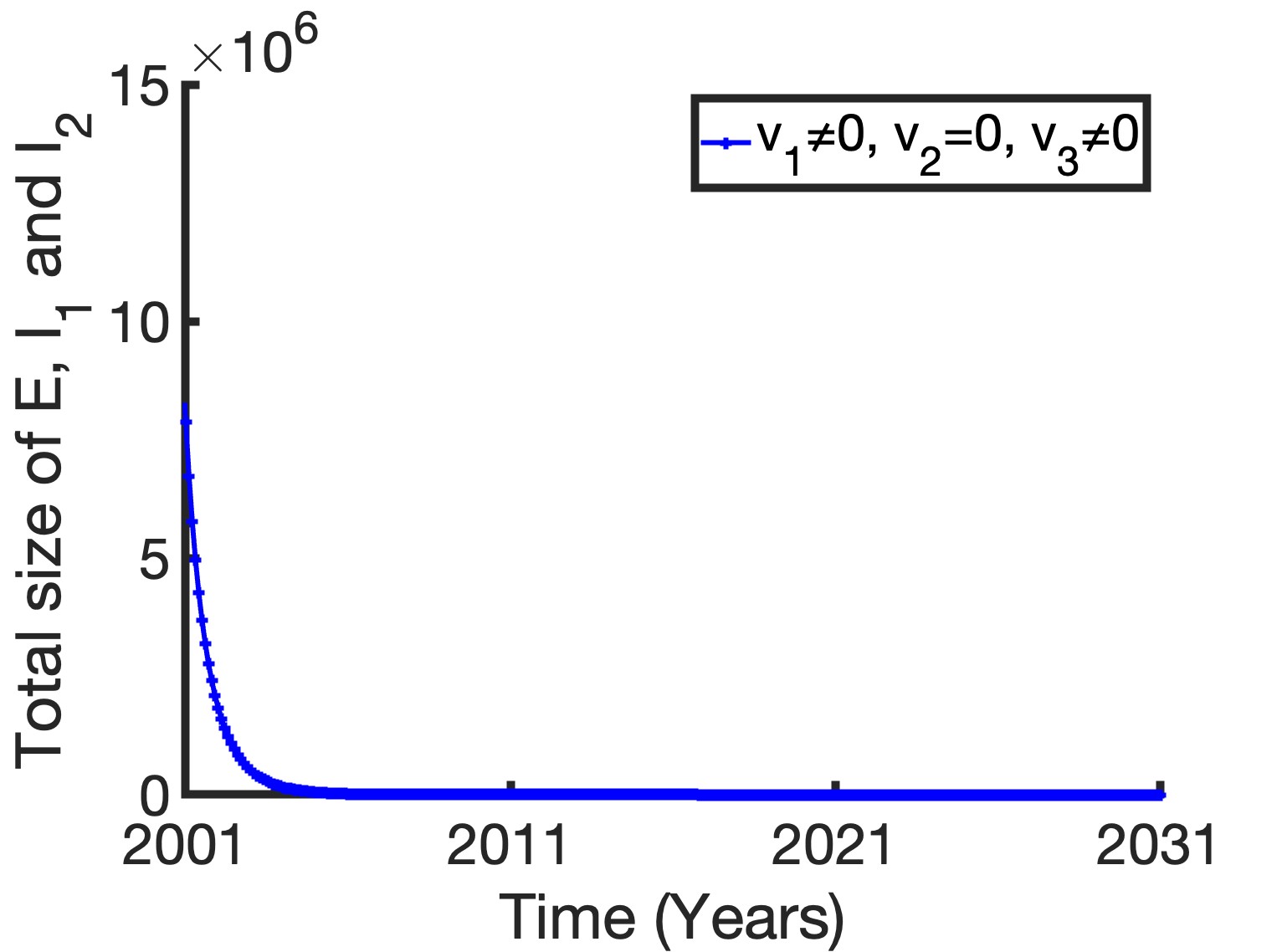}\label{fig:13b}}
\caption{Effects of control strategy implementation 2.}
\label{fig:13}
\end{figure}

\noindent The effects of controls $v_1$ and $v_3$ throughout the course of time are depicted in figure \ref{fig:13a}. As can be seen in the image, control $v_3$ has been operating at the highest possible control level of 100\% for a period of 12 years. After that, these control measure steadily drop.For the control $v_1$, it operates at its 100\% for first 2 years and then drops to 95\%. It continues to be like this for almost 8 years and then again drops to a level of 20\%. With a little fluctuation it is in action for further 2/3 years. In the end, the values of the controls $v_1$ and $v_3$ will be equal to zero. An illustration of the impact that controls $v_1$ and $v_3$ have on the total number of infected persons may be found in figure \ref{fig:13b}. The use of controls $v_1$ and $v_3$ has the potential to lower the overall number of persons who are latent, aware, and oblivious of their infection when compared to the situation where control methods were not implemented (see figure \ref{fig:v=0}). Furthermore, we discover that the utilization of controls $v_1$ and $v_3$ in conjunction with one another has the potential to reduce HIV by the year 2030. In particular, around 2005 is a year that the HIV virus can be eradicated. The screening of latent infected individuals and the treatment of infected individuals over an extended length of time are therefore two methods that can be utilized to achieve optimal control outcomes. Following that, we will be able to put a halt to the transmission of HIV by increasing the number of individuals who are aware of the disease and increasing the number of people who have the chance to get treatment.\\

\noindent\textbf{Strategy 03: $v_1\neq 0,\;v_2\neq 0$ and $v_3=0$}\\
In order to detect latent persons $v_1(t)$ and educate individuals $v_2(t)$ who are unaware that they are infected, a mix of screening and education is utilized. For example, the values are 
$v_1\neq 0,\;v_2\neq 0$ and $v_3=0$. In the subsequent step, we will proceed to optimize the function $K$ that is shown in equation \eqref{equ:objective function}.
\begin{figure}[H]
\centering
	\subfigure[Variation in the optimal control solutions $v_1$ and $v_2$.]{\includegraphics[width=0.45\linewidth]{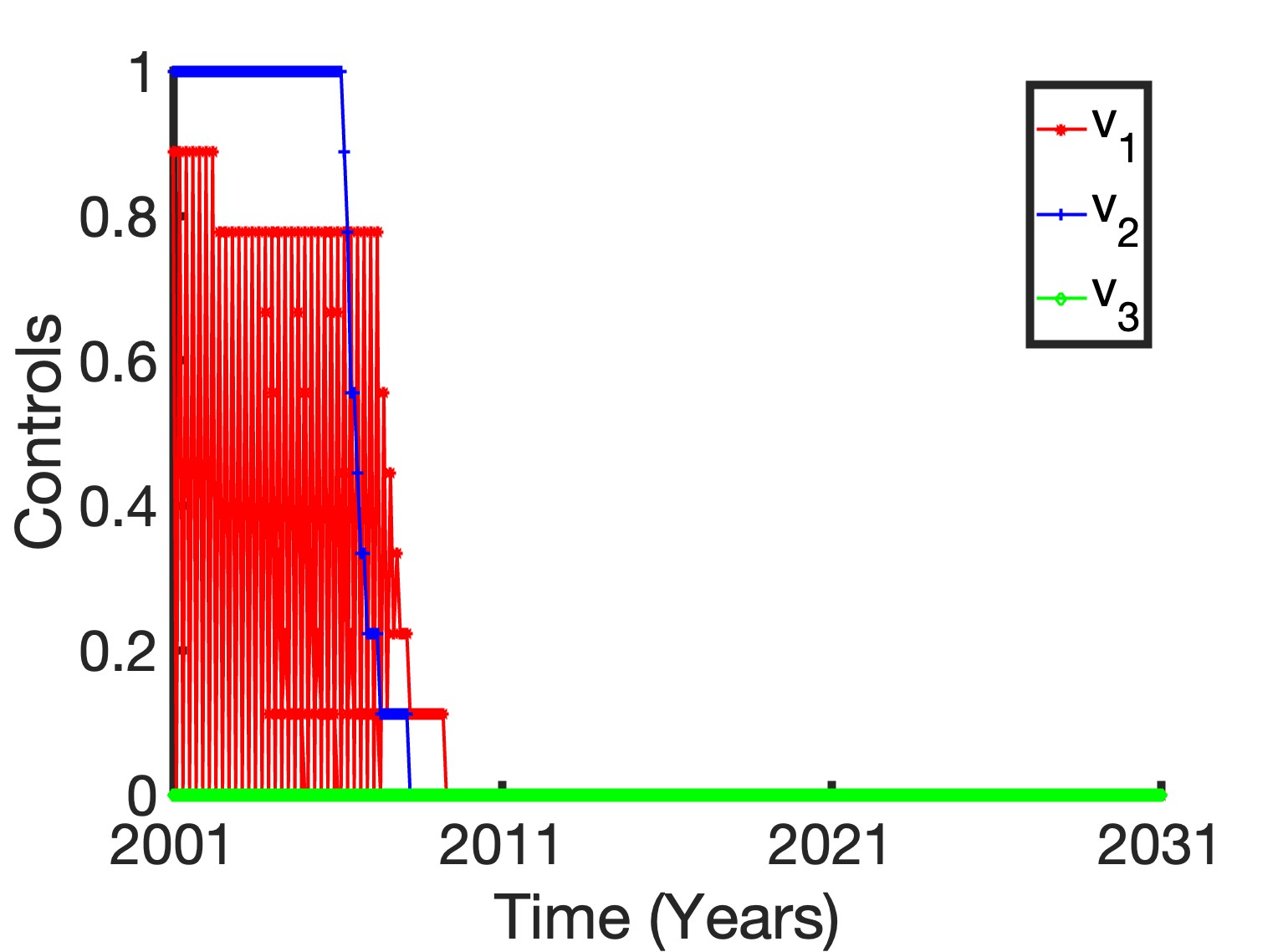}\label{fig:14a}}
	\subfigure[Total size of infected population under controls $v_1$ and $v_2$.]{\includegraphics[width=0.45\linewidth]{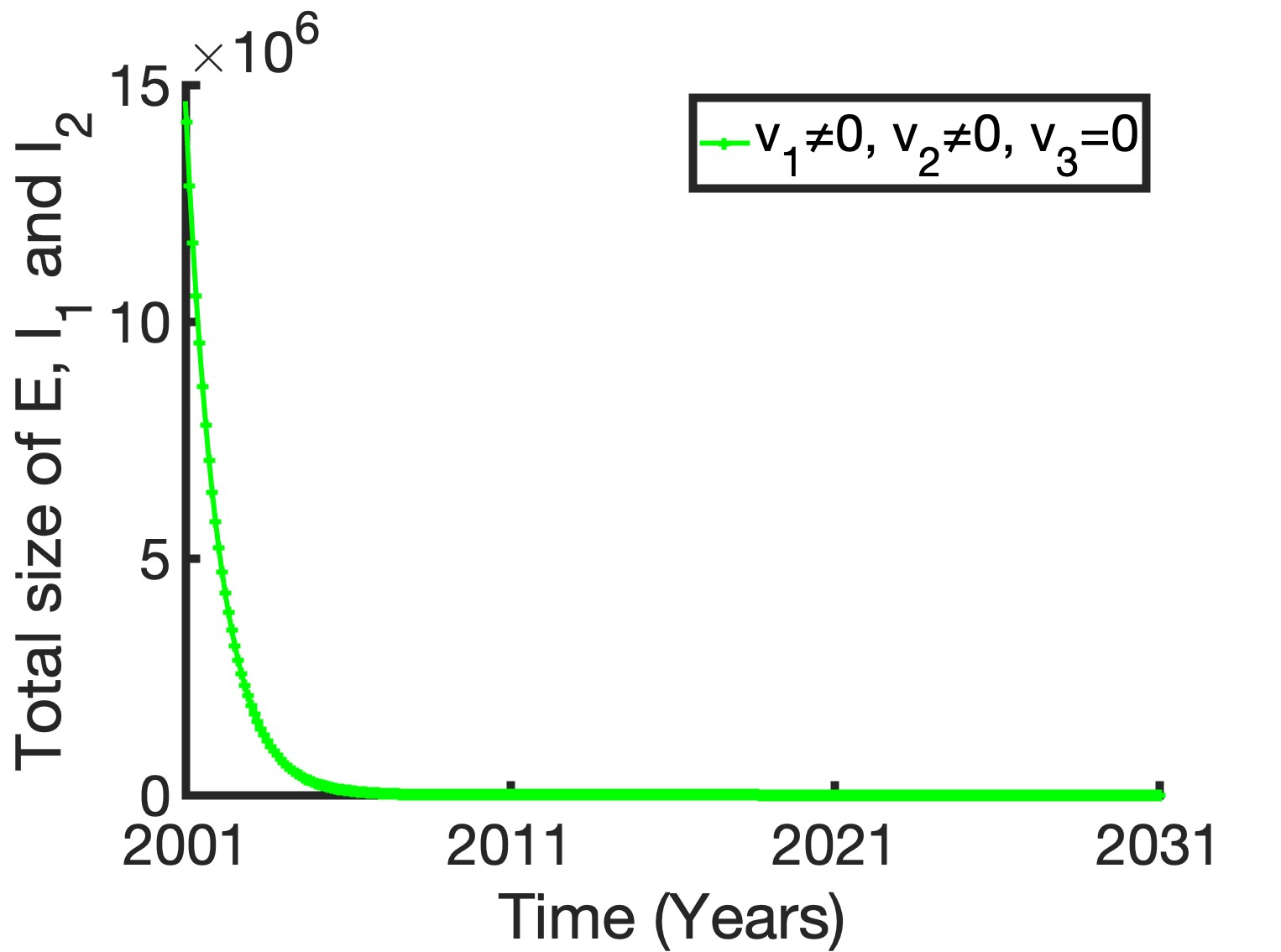}\label{fig:14b}}
\caption{Effects of control strategy implementation 3.}
\label{fig:14}
\end{figure}

\noindent The results of the control techniques $v_1$ and $v_2$ throughout the span of time are depicted in figure \ref{fig:14a}. This means that the maximum values of controls $v_2$ is maintained during the 8 years. Thereafter, the value of control $v_2$ faces a significant drop  and will not be needed after 2010. The value for control $v_1$ starts with being 90\% for 1 year and then dropped to 80\% and continues to be like that till 2009. In some consecutive years the up and down that are shown in the graph refers to the fact that for some years control value of $v_1$ will go up and down.  A comparison of the effects of controls $v_1$ and $v_2$ on the total number of infected persons is presented in figure \ref{fig:14b}. When compared to the scenario in which control techniques were not implemented (see figure \ref{fig:v=0}), the implementation of controls $v_1$ and $v_2$ has the potential to dramatically lower the number of persons that are infected. Specifically, the HIV virus can be eliminated in the year 2007.\\

\noindent\textbf{Strategy 04: $v_1\neq 0,v_2\neq 0$ and $v_3\neq 0$}\\
The proposed approach involves a mix of screening $v_1(t)$, education for those who are aware of being infected $v_2(t)$, and treatment for individuals who are aware of being infected $v_3(t)$. To clarify, the values are $v_1\neq 0,v_2\neq 0$ and $v_3\neq 0$. Subsequently, the cost function $K$ is optimized in equation \eqref{equ:objective function} in order to examine the control impact.\\
\noindent The change in the number of infected persons is depicted in figure \ref{fig:15}, which shows the situation when three controls are implemented simultaneously. Control $v_1$ is applied at 90\% at first and then decreased to 80\% for 7-8 years. Then, it keeps decreasing gradually. For the controls $v_2$ and $v_3$, both start at 100\% and gradually decrease throughout the year as shown in figure \ref{fig:15a}. By applying all the controls together, we can see that the goal could have been reached around 2004 (for details observe figure \ref{fig:15b}). The findings indicate that, in comparison to the scenario in which control techniques were not implemented (as depicted in figure \ref{fig:v=0}), controls $v_1, v_2$, and $v_3$ have the ability to decrease the overall number of persons who are latent, aware, and unaware of their infection. This indicates that the implementation of control techniques leads to a higher number of infected persons receiving treatments, as well as an increase in the number of latent and unaware infected individuals becoming conscious.
\begin{figure}[H]
\centering
\subfigure[Variation in the optimal control solutions $v_1, v_2$ and $v_3$.]{\includegraphics[width=0.45\linewidth]{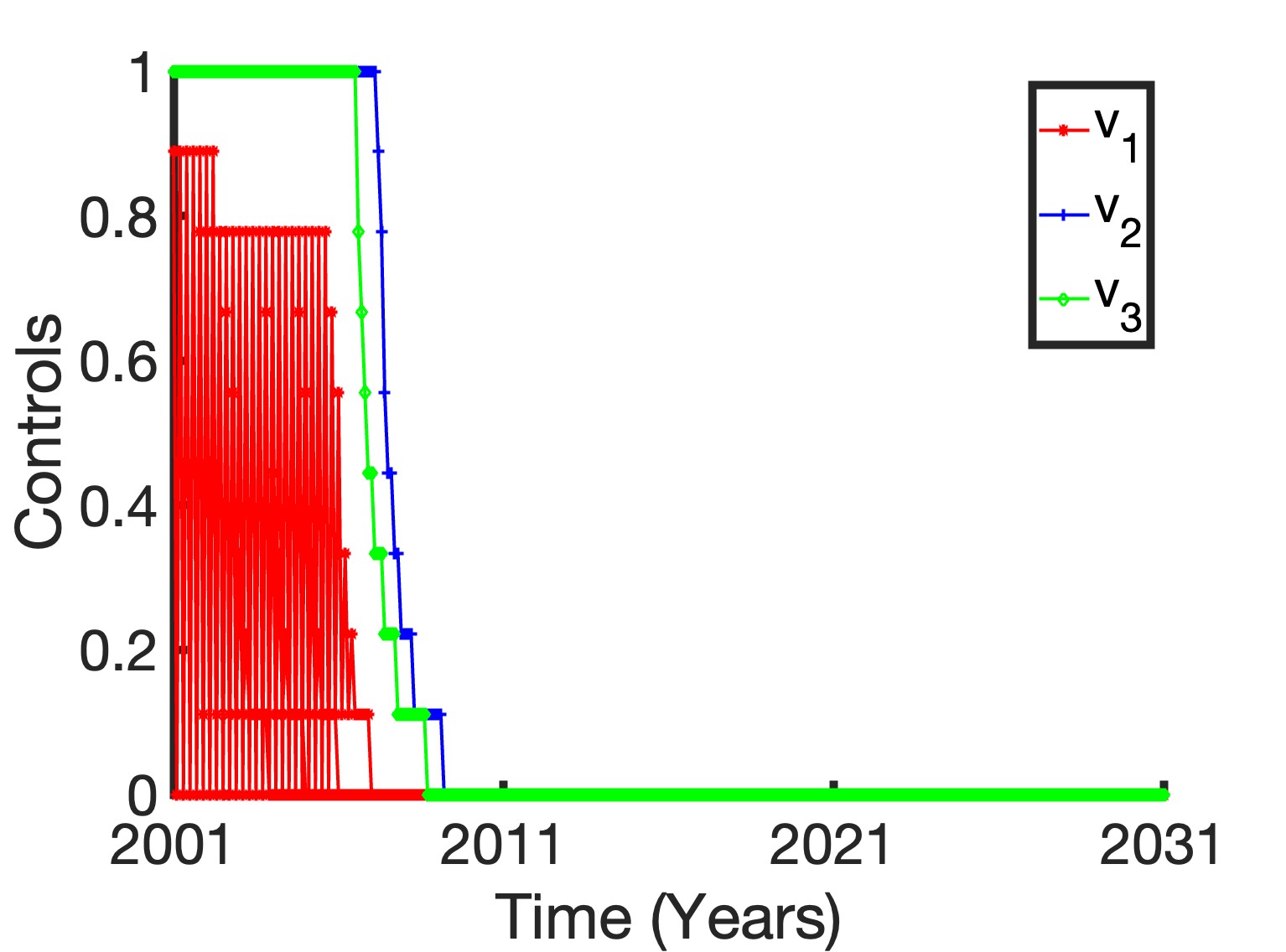}\label{fig:15a}}
\subfigure[Total size of infected population under controls $v_1, v_2$ and $v_3$.]{\includegraphics[width=0.45\linewidth]{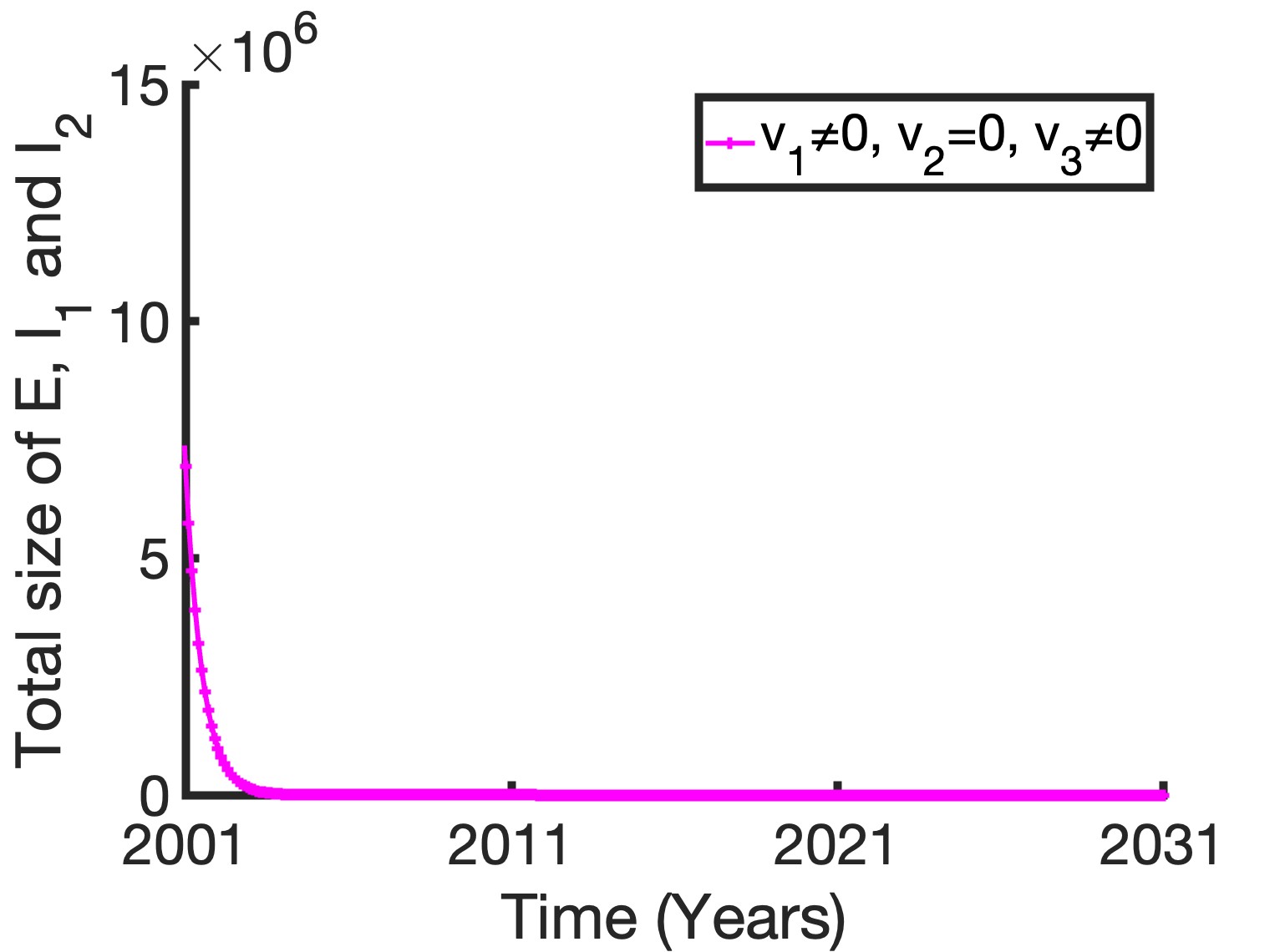}\label{fig:15b}}
\caption{Effects of control strategy implementation 4.}
\label{fig:15}
\end{figure}

\noindent Furthermore, as depicted in figure \ref{fig:all_controls}, we evaluate the impact of four distinct control methods on the number of individuals who are infected. These individuals include those who are unaware of their infection, those who are aware of their illness, and those who are experiencing latent infection. According to our findings, \textbf{Strategy 04}: $v_1\neq 0,\;v_2\neq 0$ and $v_3\neq 0$ is the most effective plan. Strategy 4 has the potential to eliminate HIV in the shortest amount of time (about three years). There are other ways that require additional time in order to obtain the same level of control. We summarize the control outcomes in Table 4 in order to determine whether or not the four control measures can reach the three goals that have been suggested by UNAIDS, which are to reduce the burden of AIDS by 95\% and almost eliminate it by the year 2030. There are four different control techniques that have the potential to accomplish the control objectives that have been stated by the United Nations.
\begin{figure}[H]
  \centering
  \includegraphics[width=0.6\linewidth]{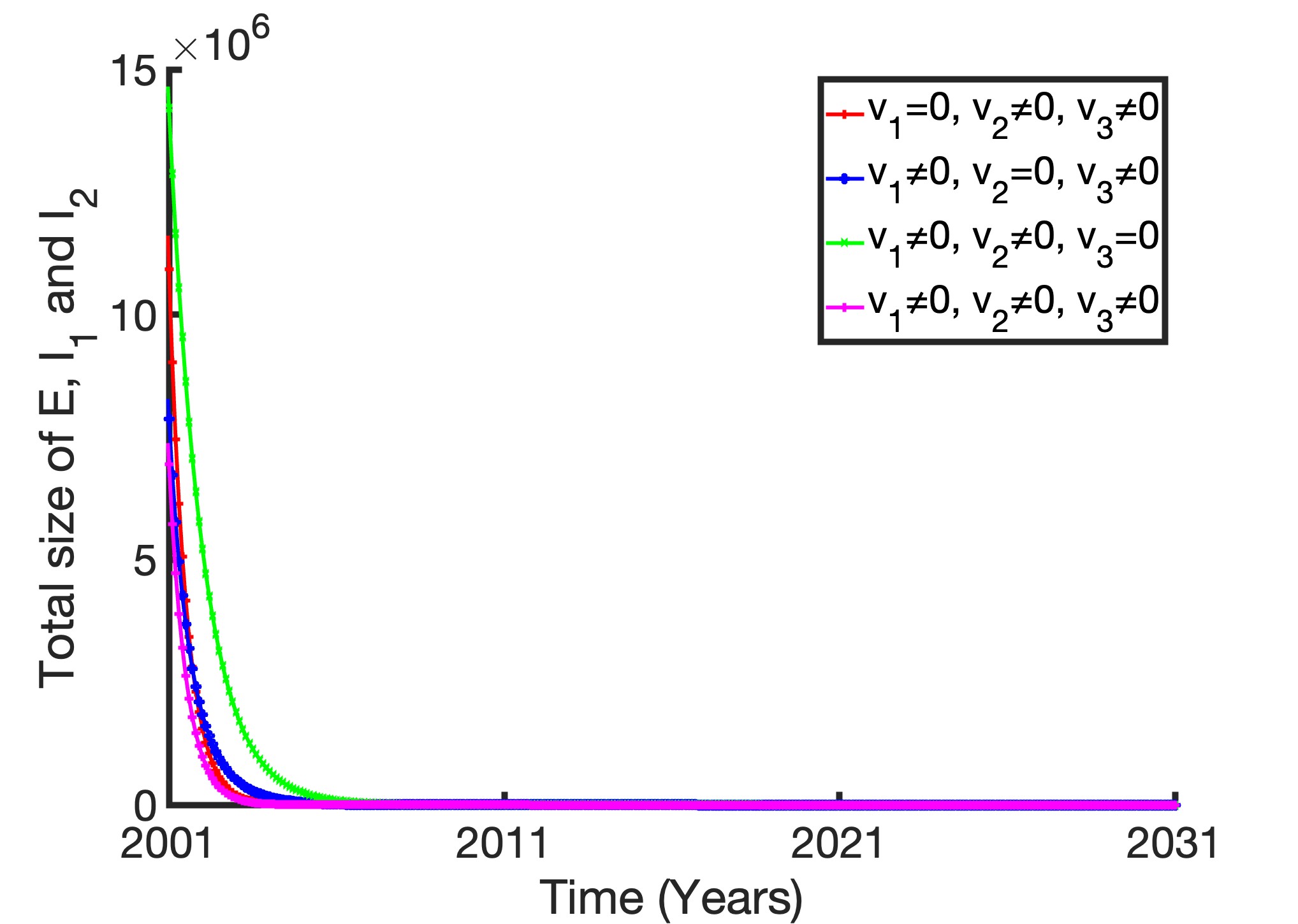}
  \caption{The influence of various control strategies on the frequency of infection among individuals.}
  \label{fig:all_controls}
\end{figure}

\noindent The following step is to investigate the impact that control weights have on Strategy 4. When $q_2 = 50$ and $q_3 = 20$, with $q_1 = 100,\; q_1 = 150,$ and $q_1 = 200$, we conduct an analysis of the optimal control curves as well as the number of infected individuals. In figure \ref{fig:str_4_q1}, the length of the maximum control level of $v_1$ decreases with an increase in $q_1$, yet the maximum control levels of $q_2$ and $q_3$ extend with time. The controls $v_2$ and $v_3$ soon achieve their maximum control level when $q_1$ is set to 100, and they remain at that level for about 7 to 8 years (see figure \ref{fig:q11}), after those controls have reached their maximum control level. Afterward, the control levels gradually decreases. The control $v_1$ starts at 90\% and gradually decreases to zero starting from 2004. Beginning in 2001 and continuing until 2008, the control level of control $v_2$ continues at a consistently high level. After that, it steadily diminishes until it reaches 0. The control $v_3$ experiences a decrease from 2008 to 2030. In figure \ref{fig:q12}, with $q_1$ set to 150, the controls $v_1,\; v_2$ and $v_3$ quickly reach their maximum control level and remain at that level for several years. After certain years, the controls $v_1,\; v_2,$ and $v_3$ gradually decrease to zero. In figure \ref{fig:q13}, with $q_1=200$, the controls $v_1,\; v_2$ and $v_3$ reach their maximum control level and remain at that level for approximately seven, nine and twelve months, respectively. Following certain years, the controls $v_1,\; v_2$ and $v_3$ gradually decrease to zero. Furthermore, we investigate how varying $q_1$ values impact the number of infected individuals. According to figure \ref{fig:q14}, achieving HIV eradication can be done sooner with a smaller weight $q_1$.
\begin{figure}[H]
\centering
\subfigure[Weight: $q_1=100, q_2=50, q_3=20$.]{\includegraphics[width=0.45\linewidth]{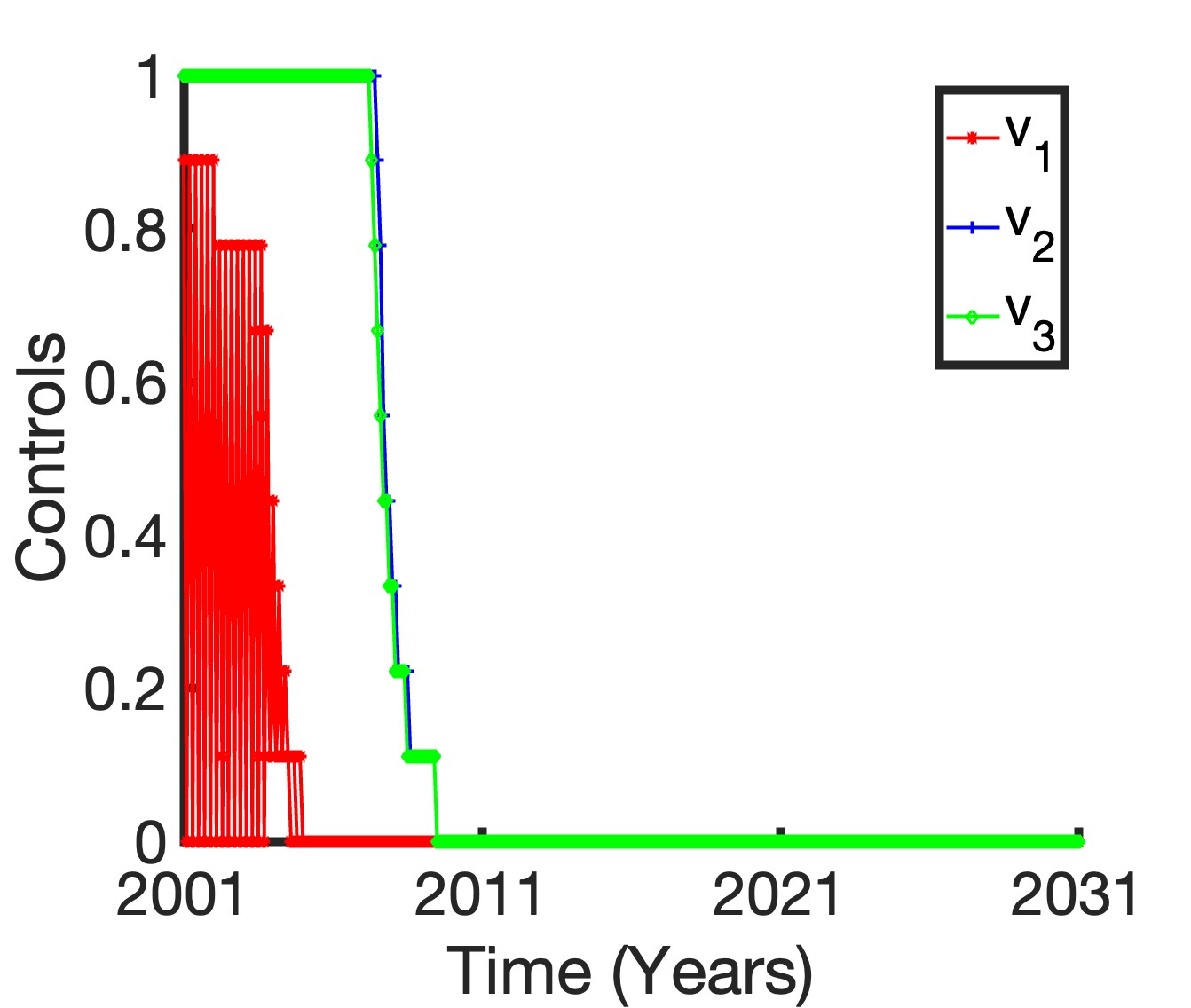}\label{fig:q11}}
\subfigure[Weight: $q_1=150, q_2=50, q_3=20$.]{\includegraphics[width=0.45\linewidth]{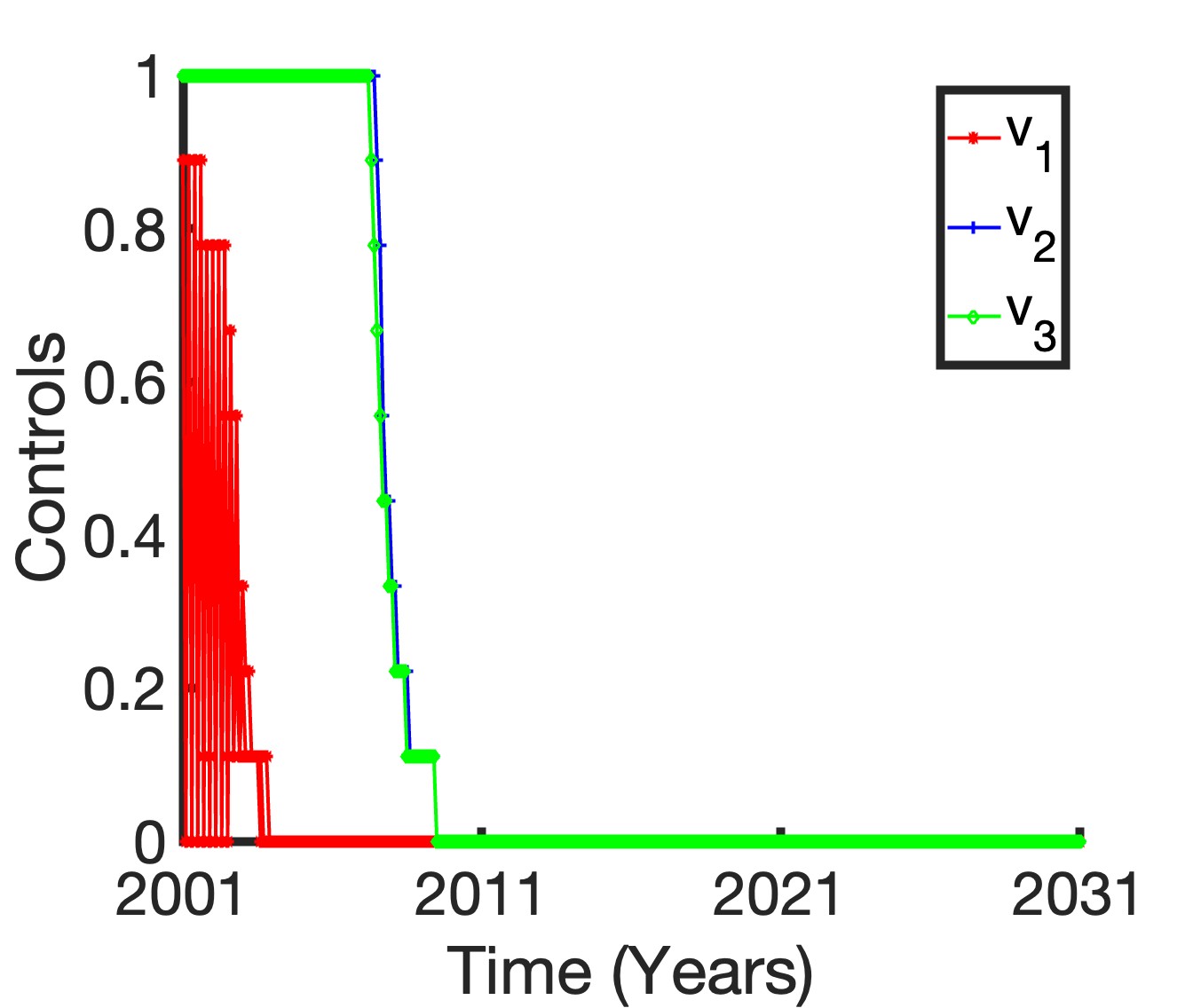}\label{fig:q12}}
\subfigure[Weight: $q_1=200, q_2=50, q_3=20$.]{\includegraphics[width=0.45\linewidth]{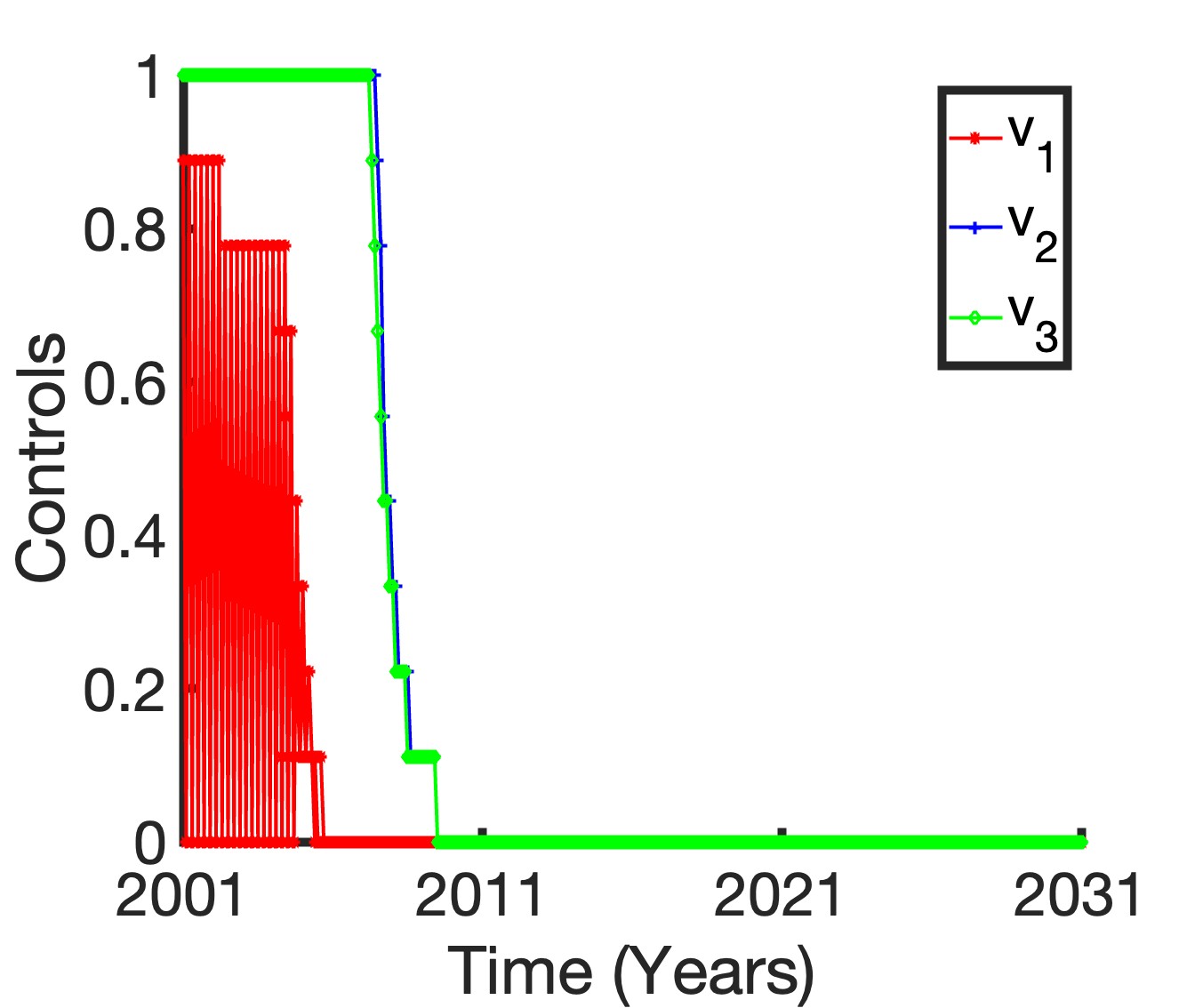}\label{fig:q13}}
\subfigure[The total number of infected people under various $q_1$ cost weights.]{\includegraphics[width=0.45\linewidth]{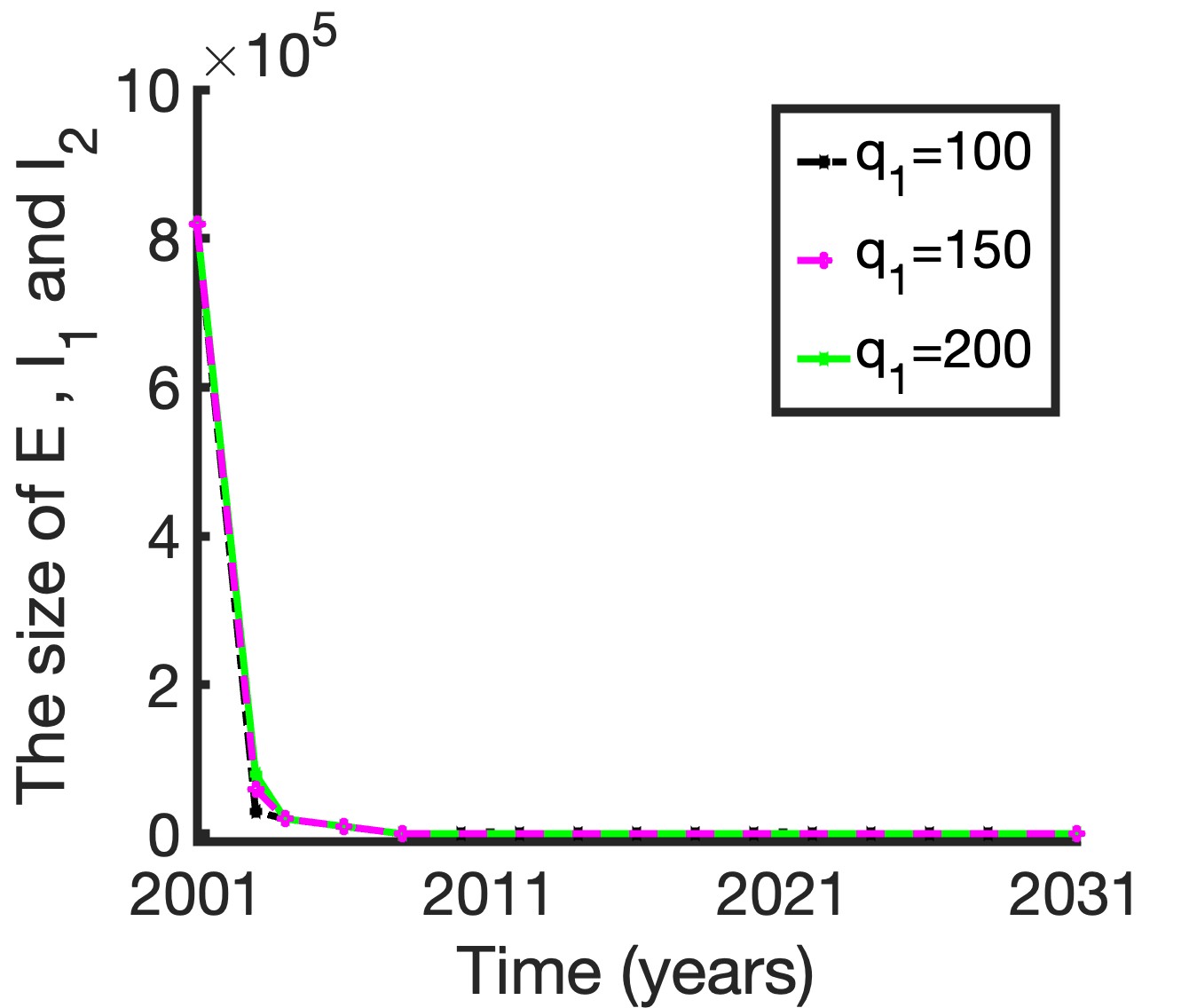}\label{fig:q14}}
\caption{Effects of modifications on the number of infected people and control procedures on the cost weight of control $v_1(t),\; q_1$.}
\label{fig:str_4_q1}
\end{figure}

\noindent The effect of weight $q_2$ on the control curves and the number of infected individuals for $q_2$ values of 30, 90, and 170 is shown in figure \ref{fig:str_4_q2}. As the weight $q_2$ increases, the maximum control level of control $v_2$ gradually decreases. In figure \ref{fig:q21}, with $q_2 = 30$, the control $v_1$ is activated from 90\% and then progressively decreases and reaches zero. For the controls $v_2$ and $v_3$, both start at 100\% and by 2010, decline to zero. In figures \ref{fig:q22} and \ref{fig:q23}, the controls $v_1$ and $v_3$ stay almost the same as the first one. But as we changed the $q_2$ value, only the control percentage of $v_2$ changes.

\begin{figure}[H]
\centering
\subfigure[Weight: $q_1=50, q_2=30, q_3=10$.]{\includegraphics[width=0.45\linewidth]{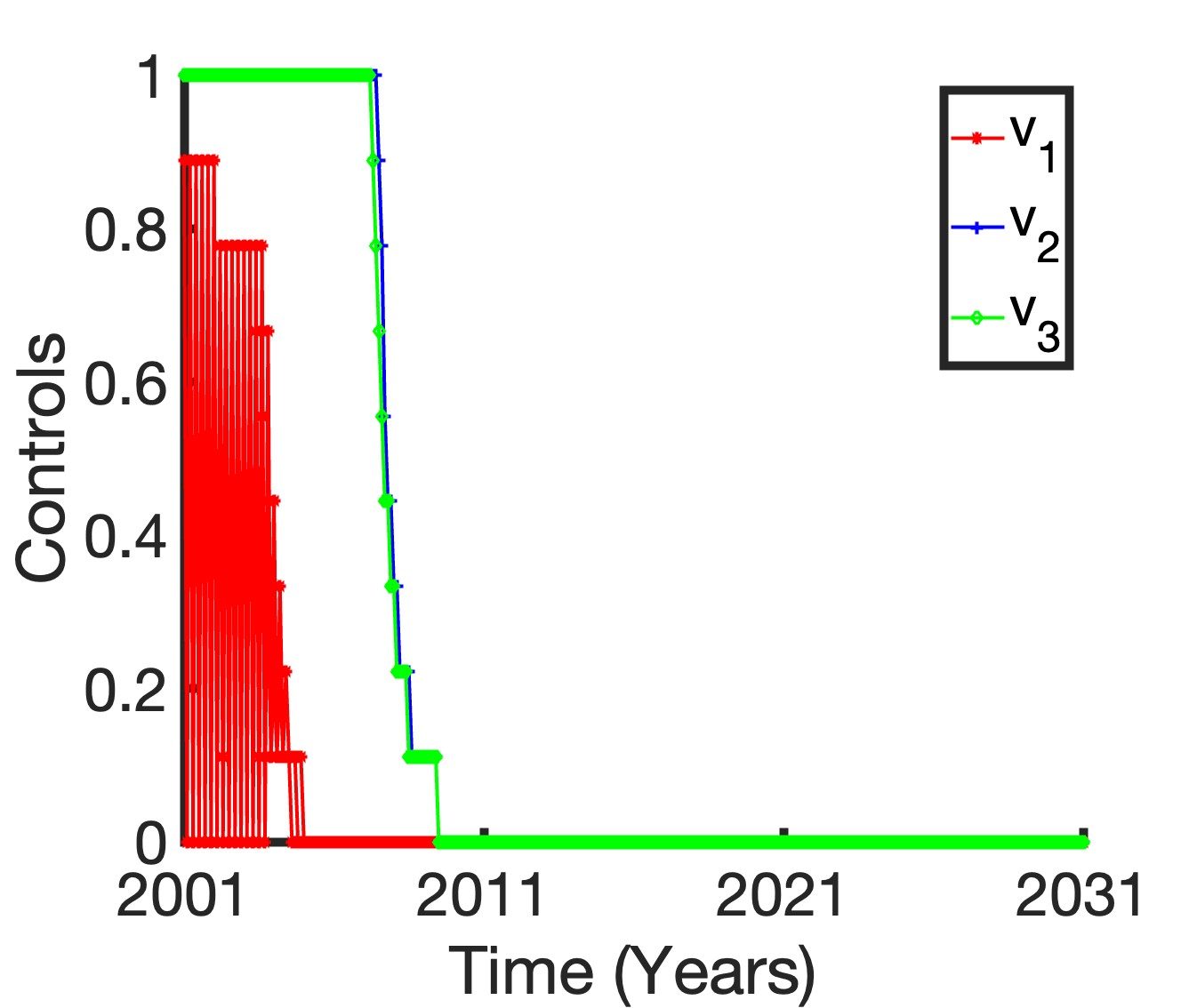}\label{fig:q21}}
\subfigure[Weight: $q_1=50, q_2=90, q_3=10$.]{\includegraphics[width=0.45\linewidth]{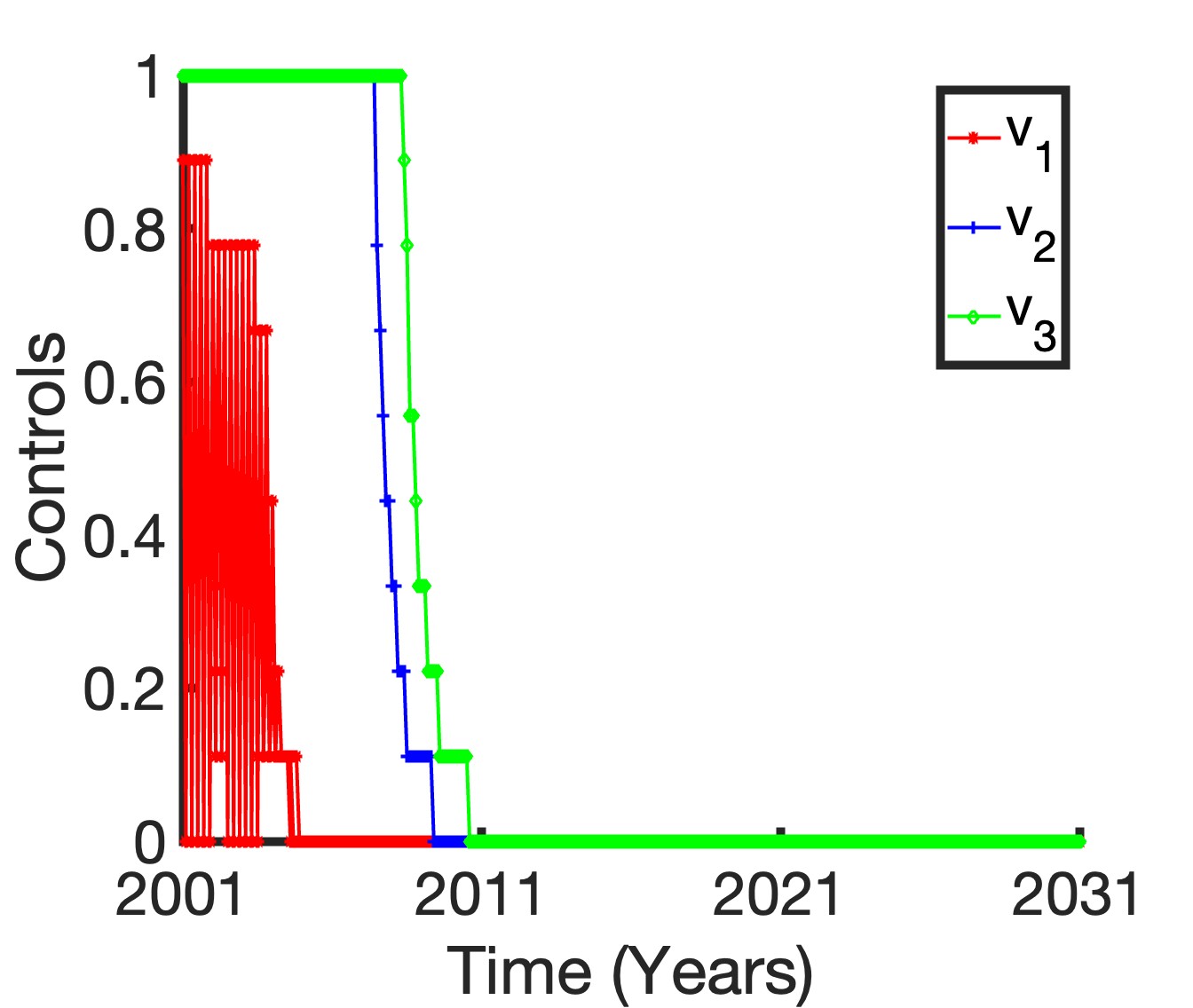}\label{fig:q22}}
\subfigure[Weight: $q_1=50, q_2=170, q_3=10$.]{\includegraphics[width=0.45\linewidth]{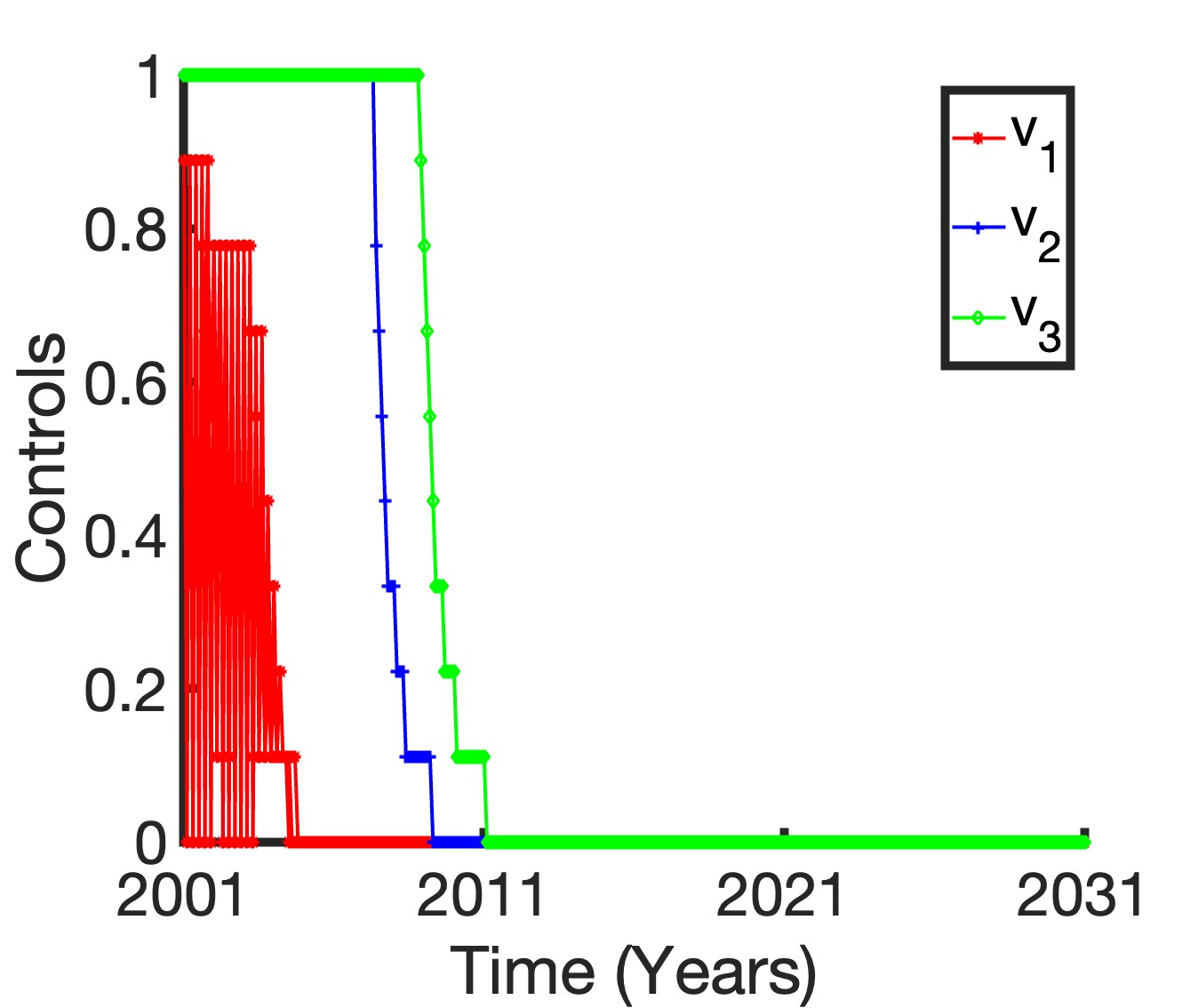}\label{fig:q23}}
\subfigure[The total number of infected people under various $q_2$ cost weights.]{\includegraphics[width=0.45\linewidth]{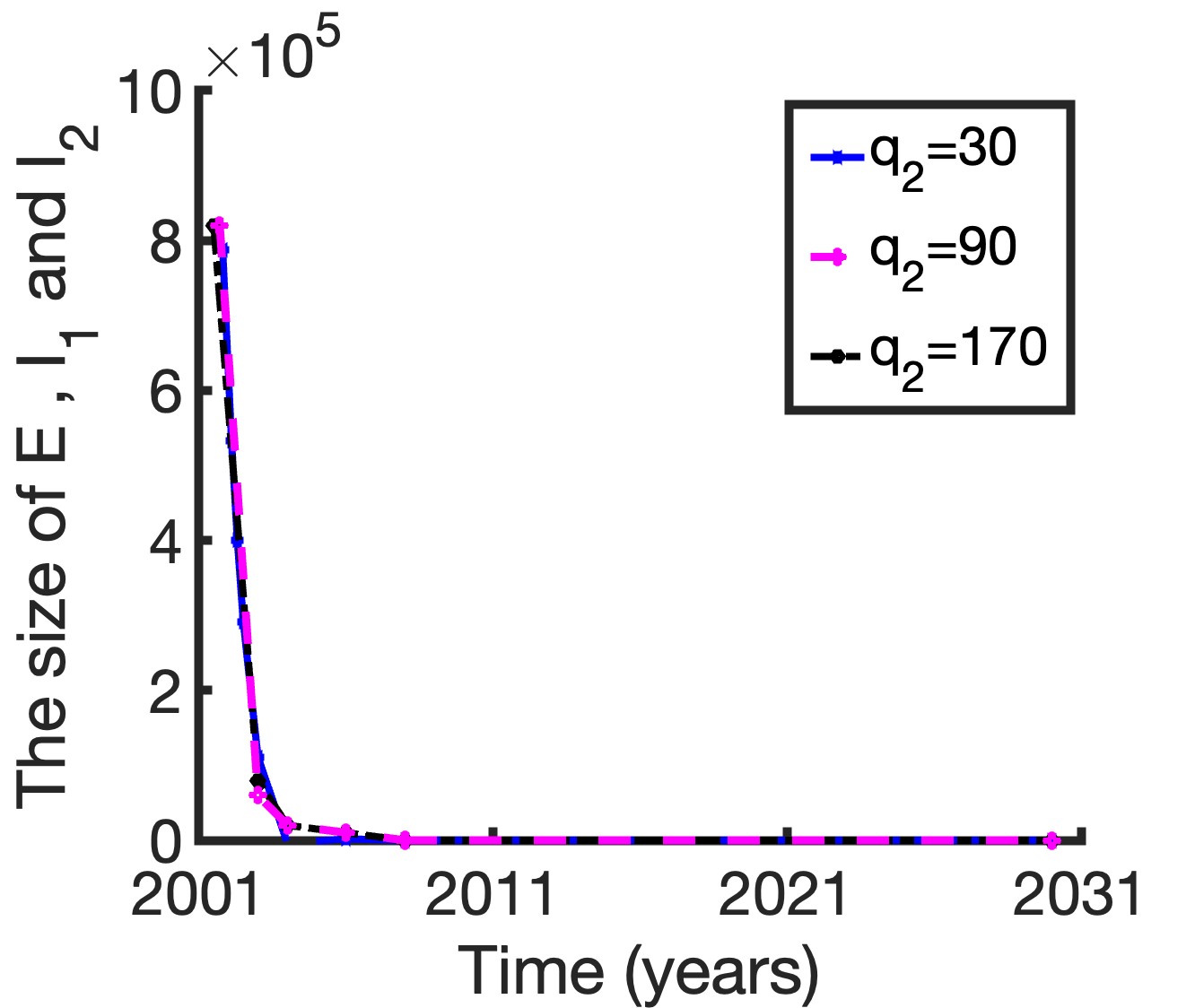}\label{fig:q24}}
\caption{Effects of modifications on the number of infected people and control procedures on the cost weight of control $v_2(t), q_2$.}
\label{fig:str_4_q2}
\end{figure}
\noindent In figure \ref{fig:q22}, with $q_2 = 90$, the controls $v_1, v_2$ and $v_3$ quickly reach their maximum control level. After a certain period of time, the controls $v_1, v_2$ and $v_3$ gradually decrease to zero. In figure \ref{fig:q23}, with $q_2$ set to 170, the controls $v_1, v_2$ and $v_3$ reach their maximum control level and remain active for approximately nine, ten and twelve months, respectively. In addition, we investigate how varying $q_2$ values impact the number of individuals who become infected. According to figure \ref{fig:q24}, a decrease in weight $q_2$ results in a more effective control effect.

\noindent Furthermore, we investigated the impact of varying $q_3$ values on the control strategies and the population of infected individuals. Here, we examine the scenarios where $q_3$ is equal to 5, 10, and 15, respectively. Figure \ref{fig:str_4_q3} demonstrates that as the weight $q_3$ increases, the maximum control level of control strategy $v_3$ gradually decreases. In figure \ref{fig:q31}, with $q_3$ equal to 5, the controls $v_1, v_2$ and $v_3$ quickly reach their maximum control level. After a certain period, the controls $v_1, v_2$ and $v_3$ gradually decrease to zero. This decline occurs at different times for each control. Controls $v_1$ and $v_2$ remain the same throughout this process as we changed the values of $q_3$ in this case. Figure \ref{fig:q32} demonstrates that with $q_3 = 10$, the controls $v_1, v_2$ and $v_3$ quickly reach their maximum control level. Figure \ref{fig:q33} depicts when $q_3=15$. For this the control $v_3$ is extended a little bit more than the other two. Finally, we examine how varying $q_3$ values impact the number of infected individuals, as depicted in figure \ref{fig:q34}. We have observed that reducing the weight $q_3$ leads to a more favorable control effect.

\begin{figure}[H]
\centering
\subfigure[Weight: $q_1=50, q_2=30, q_3=5$.]{\includegraphics[width=0.45\linewidth]{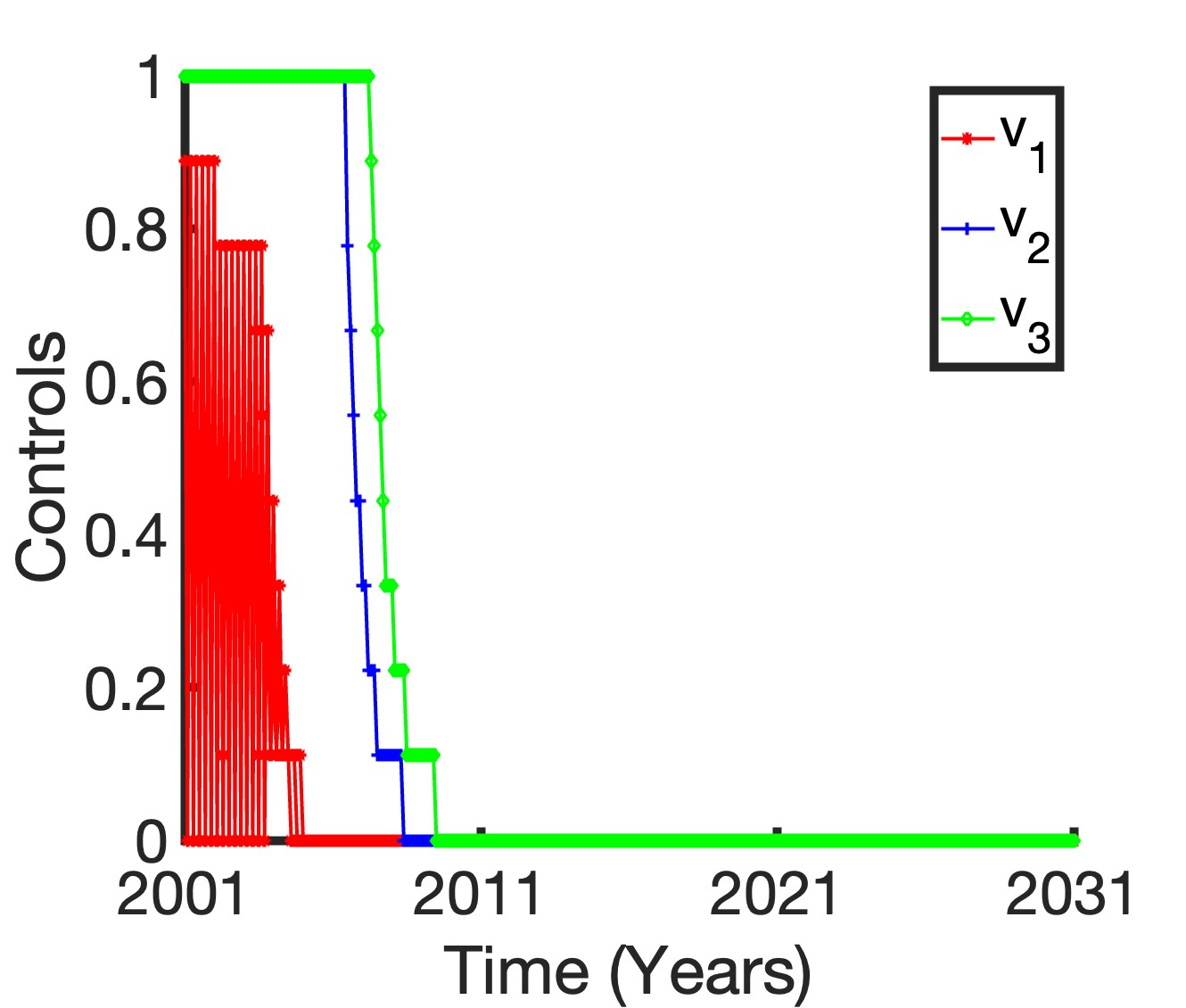}\label{fig:q31}}
\subfigure[Weight: $q_1=50, q_2=30, q_3=10$.]{\includegraphics[width=0.45\linewidth]{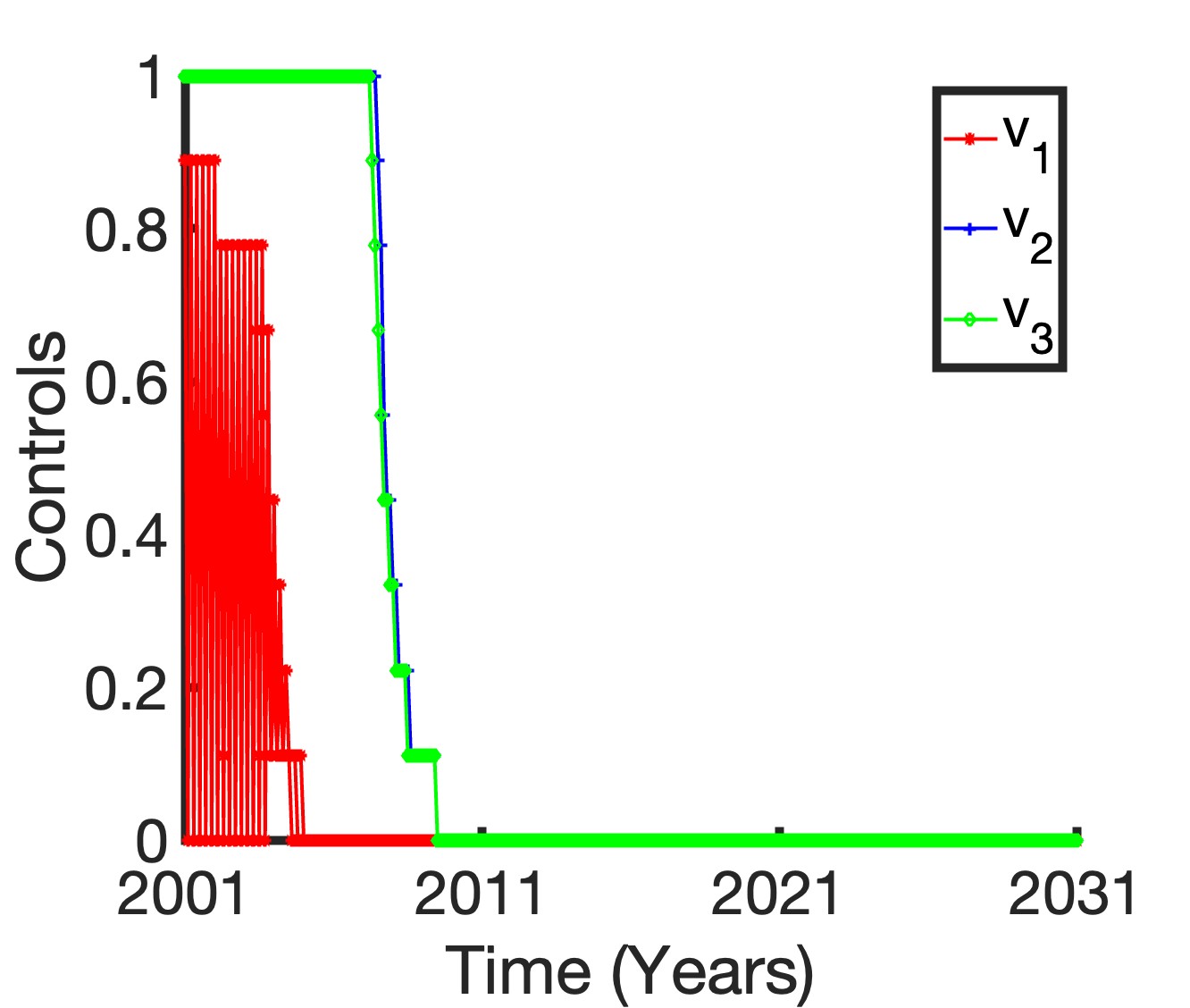}\label{fig:q32}}
\subfigure[Weight: $q_1=50, q_2=30, q_3=15$.]{\includegraphics[width=0.45\linewidth]{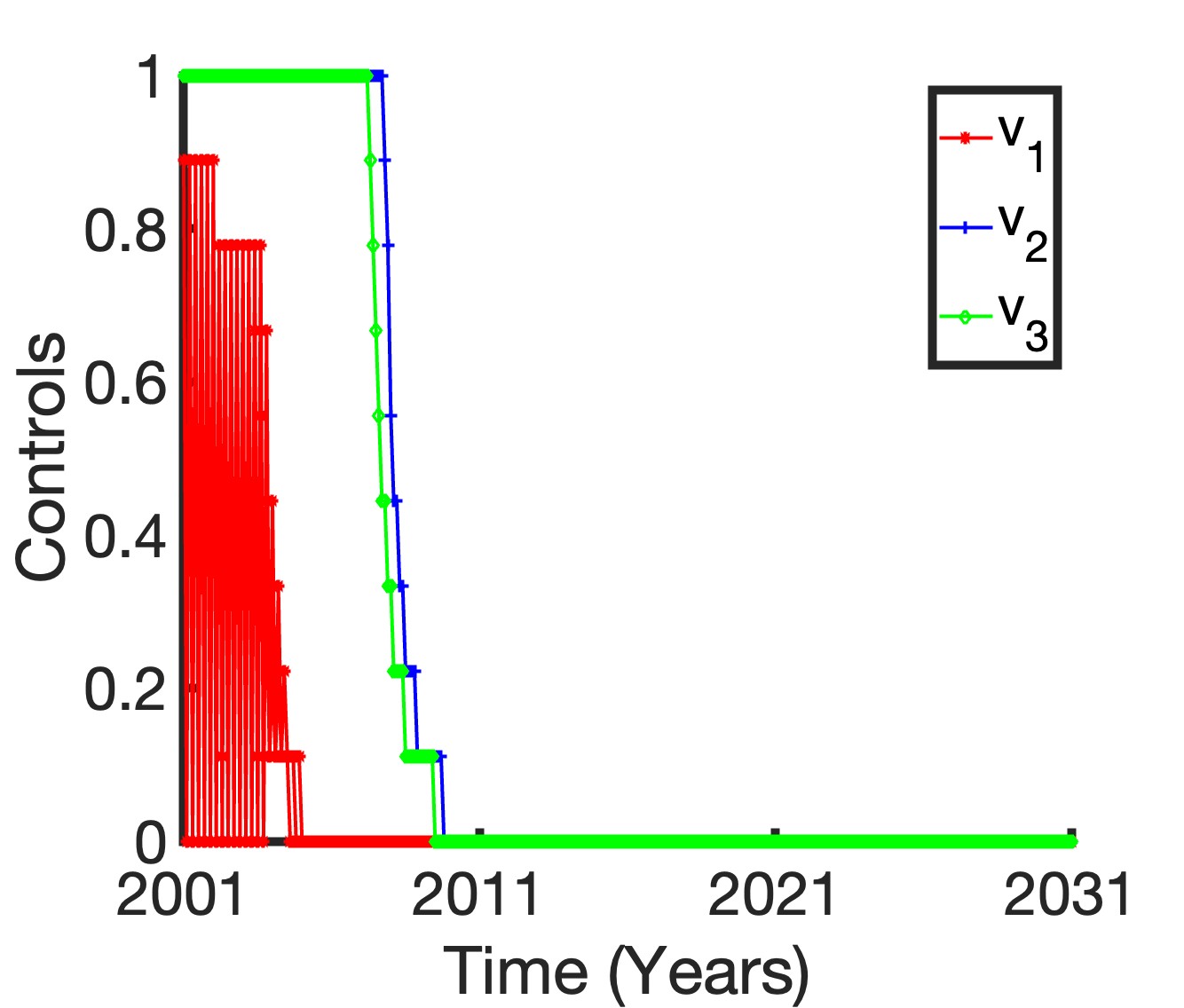}\label{fig:q33}}
\subfigure[The total number of infected people under various $q_3$ cost weights.]{\includegraphics[width=0.45\linewidth]{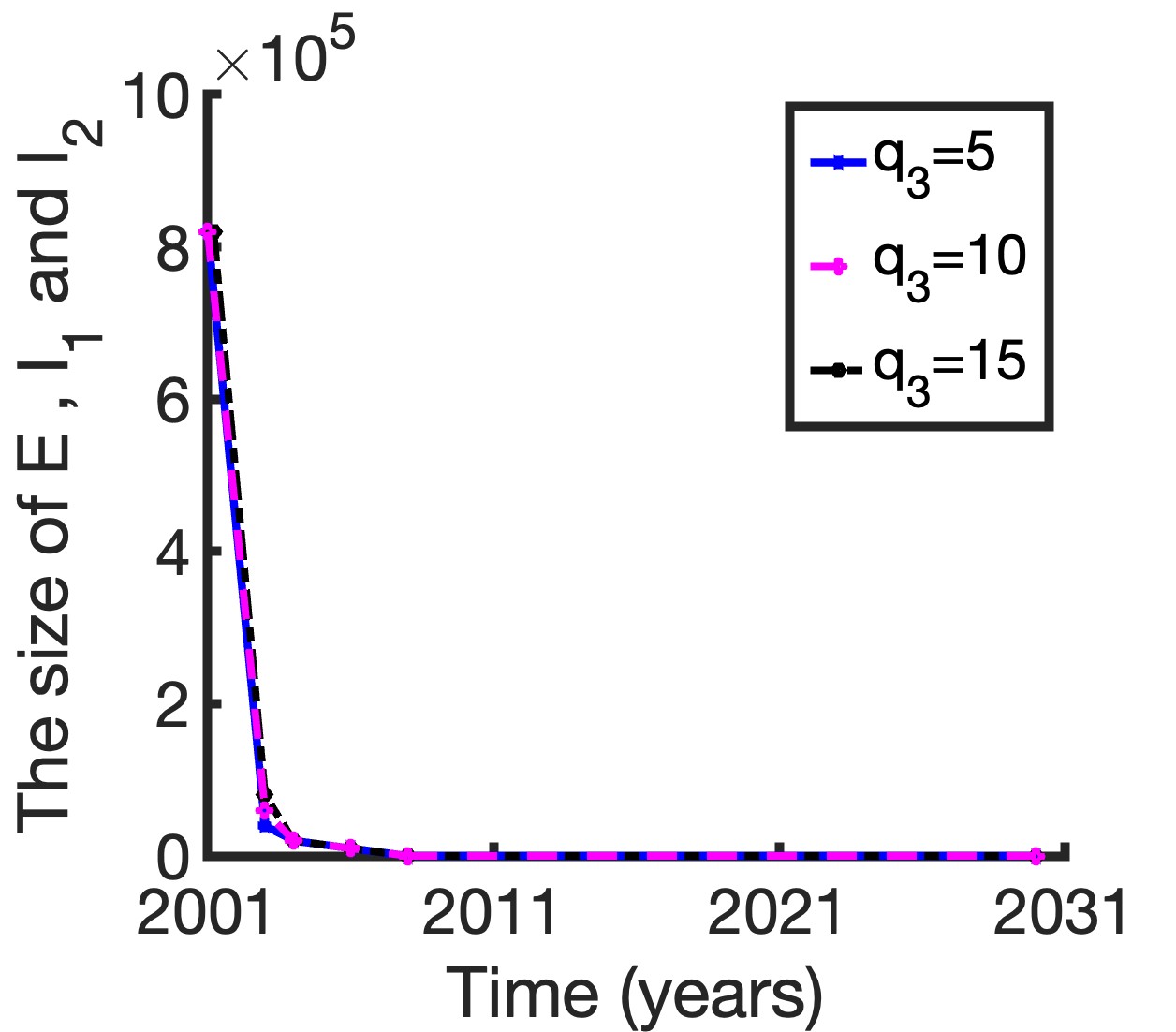}\label{fig:q34}}
\caption{Effects of modifications on the number of infected people and control procedures on the cost weight of control $v_3(t),\; q_3$.}
\label{fig:str_4_q3}
\end{figure}

\noindent In brief, there is a steady incline in the control level of $v_1$ as the weight $q_1$ reduces. Simultaneously, there is a rise in the control levels of $v_2$ and $v_3$. As the weight of $q_2$ grows, there is a progressive drop in the control level of $v_2$, while the control levels of $v_1$ and $v_3$ also decrease. As the weight of $q_3$ grows, there is a steady drop in the control level of $v_3$, while the control levels of $v_1$ and $v_2$ also decrease.

\subsection{Analysis of Efficiency}
Even while each of the four control techniques has the potential to accomplish the control goals that have been established by UNAIDS, we must also take into account the efficiency of each of these strategies. In consideration of this, within the scope of this section, we shall analyze the efficacy of four distinct control methods.\\
The initial step involves doing the efficacy analysis by employing the methodology described in references \cite{ghosh2021mathematical,mondal2021effect,wang2022effect}. Following is a definition of the efficiency index:
\begin{equation*}
    \Sigma = \left( 1 - \frac{\Omega_c}{\Omega_s} \right) \times 100\%,
\end{equation*}
where, the variable $\Omega_c$ represents the total number of persons who have been infected following the implementation of various control strategies and while there are no control methods in place, the total number of infected persons is denoted by the symbol $\Omega_s$. The whole number of people that were infected throughout the period of time spanning from 0 to $t_f$ is represented by the symbol:
\begin{equation*}
    \Omega = \int_{0}^{t_f} [E(t) + I_1(t) + I_2(t)] \, dt,
\end{equation*}
where, the variable $t_f$ represents the final time at which the implementation controls finish. Here, $t_f=28$ years.

\noindent The effective indices of four different control systems are presented in table \ref{tab:control-strategies}. The effectiveness measures demonstrate that Strategies 1, 2, 3, and 4 are successful in reducing the spread of HIV. Nevertheless, strategy 4 is the most effective control approach.

\begin{table}[H]
\centering
\caption{Indicators related to the efficacy of control strategies.}
\label{tab:control-strategies}
\begin{tabular}{lcc}
\toprule
Strategies & $\Omega = \int_{0}^{t_f} E(t) + I_1(t) + I_2(t)\,dt$ & $\Sigma = \left(1 - \frac{\Omega_c}{\Omega_s}\right) \times 100\%$ \\
\midrule
No control & 27,21,770 & 0 \\
Strategy 01 & 2,99,910 & 80.56\% \\
Strategy 02 & 1,99,840 & 90.48\% \\
Strategy 03 & 2,78,528 & 81.67\% \\
Strategy 04 & 40,004 & 97.95\% \\ 
\bottomrule
\end{tabular}
\end{table}

\subsection{Cost-Benefit Analysis}
Here, the focus is on the \textbf{cost-benefit analysis} of different strategies to control the spread of HIV. This analysis is crucial because it quantifies the economic implications of each control strategy, balancing the total cost due to deaths and the total benefits derived from reducing the number of infected individuals. By comparing these factors, the analysis identifies which strategies provide the most significant benefits relative to their costs, thus guiding the allocation of limited resources in public health interventions.

\noindent The cost-benefit analysis in this paper employs the \textbf{Value of Statistical Life (VSL)}, set at 13.2 million dollars \cite{usdot2024vsl}, to monetize the benefits of reducing fatalities. VSL is a standard measure used in economic evaluations to quantify the benefits of risk-reducing measures by assigning a monetary value to each life saved. Here, the total costs associated with deaths are weighed against the benefits obtained from implementing control strategies. These benefits include reduced healthcare costs and the economic value of lives saved. The analysis in this paper is instrumental in determining the most efficient strategy, ensuring that the resources invested in HIV control yield the highest possible return in terms of lives saved and economic gains.

\noindent This approach is particularly important for public health policymakers, as it provides a rigorous economic justification for choosing one strategy over another. The incorporation of VSL into the cost-benefit framework allows for a comprehensive assessment of the trade-offs involved in different intervention strategies, making it a cornerstone of the paper's contribution to the field of HIV management.

\begin{table}[htbp]
	\centering
	\resizebox{\textwidth}{!}{
		\begin{tabular}{|c|c|c|c|c|}
			\hline
			\textbf{Strategies} & \textbf{Total Infected Individuals} & \textbf{Total Cost Due to Death (M)} & \textbf{Total Benefit (M)} & \textbf{Total Cost of Strategies (M)} \\ \hline
			No Strategy & 27,21,770 & 3,59,27,364 & - & - \\ 
			Strategy 1 & 2,99,910 & 39,58,812 & 3,19,68,552 & 34,530.28 \\ 
			Strategy 2 & 1,99,840 & 26,37,888 & 3,32,89,476 & 46,967.56 \\ 
			Strategy 3 & 2,78,528 & 36,76,569 & 3,22,50,795 & 37,311.84 \\ 
			Strategy 4 & 40,004 & 52,852 & 3,94,74,512 & 59,404.84 \\ \hline
	\end{tabular}}
	\caption{Cost-Benefit analysis of the Above-Said strategies.}
\end{table}
\noindent Costs are calculated by using the data from CDC: Centers for Disease Control and Prevention \cite{CDCgov}.
So, from the analysis we can clearly see that we will get the most benefit by applying \textbf{strategy 4}.

\section{Discussion}
\label{sec:discussion}
This study examines the transmission of HIV through a mathematical model that includes various stages of infection and treatment. Utilizing the model, we do both theoretical analysis and numerical simulations. Through a comprehensive theoretical examination, it is clear that when the basic reproduction number ($\mathcal{R}_0< 1$), the disease-free equilibrium is globally asymptotically stable, leading to the progressive eradication of the illness. When the basic reproduction number, $\mathcal{R}_0>1$, the endemic equilibrium becomes globally asymptotically stable, resulting in the spread of the disease \cite{dietz1993estimation}. The parameter values are obtained by the Markov Chain Monte Carlo (MCMC) technique \cite{brooks1998markov}. Through the implementation of numerical simulations, we validate the stability of both the disease-free equilibrium and the endemic equilibrium. We provide a comprehensive examination of the influence of both a single measure and many measures on the spread of diseases. Utilizing numerous measures, as opposed to just one, can greatly boost the efficiency of disease transmission control.

Our strategy incorporates many mitigation techniques, including offering treatment to sick people, conducting screenings for latent cases, and giving education to those who are ignorant. We employ Pontryagin's Maximum Principle to deduce optimum control techniques. Based on the findings, it is evident that the most effective approach is to implement a comprehensive control strategy that incorporates all three measures.To successfully contain the pandemic, it is crucial to ensure that infected persons are informed about their infection and given suitable antiretroviral therapy (ART) treatments. This is because when people are receiving treatment and are aware of their status, the likelihood of transmitting the infection decreases significantly. The elimination of AIDS may be accomplished in the shortest amount of time feasible if all three preventative strategies are put into effect. Our research reveals that the four distinct control strategies examined in this study have the potential to successfully achieve the three 95\% reduction goals outlined by the UNAIDS in 2014, effectively putting an end to the AIDS epidemic by 2030. Additionally, the prompt implementation of control strategies is crucial for the timely eradication of AIDS. Specifically, we observe that the percentage of aware individuals who are infected plays a significant role in the spread of HIV. As more people become aware that they have HIV, it becomes easier to stop the virus from spreading. This is because ART therapy can significantly decrease HIV incidence, while there is a need for more efficient HIV testing \cite{van2013high}. Thus, the spread of AIDS can be reduced through advancements in medical technology, higher rates of testing, and enhanced education for individuals who are unaware of their infection.

Additionally, the model may be applied to explore a broad range of other diseases that share similar transmission patterns, such as COVID-19, Syphilis, and Hepatitis B. This is a significant advantage of the model. During the course of this research, neither the detection rate nor the influence that HIV co-infection with other illnesses has on the spread of HIV were factors that we took into consideration \cite{chang2013hiv}. One of our long-term goals is to investigate the influence that social behavior patterns and economic differences have on the transmission of diseases. Our goal is to investigate how the overall effectiveness of control techniques is affected by the diverse degrees of healthcare access that are available to people from different socioeconomic backgrounds. In addition, we intend to investigate the possibilities of emerging medical technologies, such as CRISPR gene editing \cite{wang2023crispr}, specifically with regard to their function in the management and prevention of diseases. Additionally, educational programs will be evaluated to see how beneficial they are in terms of raising awareness and lowering the stigma that is connected with certain diseases or conditions. In conclusion, it will be essential to collaborate across disciplines with policymakers in order to create and execute health initiatives that are not only scientifically sound but also culturally sensitive and economically feasible. A lot of different methods are needed to not only control current epidemics but also plan ahead for future health problems.

\section*{Acknowledgments}% and Funding Statement}
The author M Kamrujjaman acknowledged the University Grants Commission (UGC), and the  University of Dhaka, Bangladesh for supplementary support of this research.

\section*{Competing Interests}
The authors declare no conflict of interest. 

\section*{Ethical Approval}
N/A %o consent is required to publish this manuscript.

\section*{Data Availability}
The data was collected form the following link:\\
%\url{https://www.unaids.org/sites/default/files/media\_asset/data-book-2023\_en.pdf}\\
\url{https://www.unaids.org/en/resources/documents/2024/global-aids-update-2024}\\
timeline: 2001-2024 or 2006-2024\\
The data will be published as a .csv or Excel file once the manuscript is accepted.

\section*{Author Contributions}
NNKR: Conceptualization; Data curation; Formal analysis; Methodology; Software; Writing – original draft (Leading).\\
MK: Conceptualization; Investigation; Resources; Validation; Software;  Supervision; Writing – original draft, review \& editing. \\
AI: Data curation; Formal analysis; Software;  Validation; Writing – original draft.

%\newpage
\appendix
%\section{Auxiliary Results}\label{appA}
\section{Mathematical Analysis}\label{sec:mathematicalanalysis}
In order to get a better understanding of model (\ref{Model_2}), we first determine the basic reproduction number, the feasible area, and the stability of equilibrium.

\subsection{Basic Reproduction Number}
We will calculate the Basic Reproduction Number for model (\ref{Model_2}) and prove the existense of disease-free equilibrium point.
\noindent By setting the right side of the model (\ref{Model_2}) to zero, the disease-free equilibrium point can be determined as follows:
\[
W_0 = (S_0, E_0, I_{10}, I_{20}, T_0, A_0) = \left( \frac{\kappa}{\mu}, 0, 0, 0, 0,0 \right).
\]
The next generation matrix method is used to find the basic reproduction number of the model (\ref{Model_2}). Define $\mathcal{F}_i$ as the number of newly infected individuals in compartment $i$, and $\mathcal{V}_i$ as the transfer of individuals within each compartment labeled $i$, where $i$ corresponds to the compartments $S$, $E$, $I_1$, $I_2$, $T$, and $A$. We state, $x = (S, E, I_1, I_2, T, A) = (x_1, x_2, x_3, x_4, x_5, x_6)$. Model (\ref{Model_2}) can be rewritten as follows:
\begin{equation*}
	\frac{dx}{dt} = \mathcal{F} - \mathcal{V}
\end{equation*}
where,
\begin{equation*}
	\mathcal{F} = \begin{pmatrix}
		0 \\
		\alpha E S + \alpha I_2 S \\
		0 \\
		0 \\
		0 \\
		0 
	\end{pmatrix}, \quad
	\mathcal{V} = \begin{pmatrix}
		-\kappa + (\alpha \epsilon E + \alpha I_2)S +\mu \\
		k_1 E\\
		-p \beta E + k_1 I_1 - \gamma I_2 \\
		-(1 - p) \beta E + k_3 I_2 \\
		-\psi I_1 + k_4 T \\
		-\delta_1 I_1 - \delta_2 I_2 - \xi T + \mu_0 A + \mu A
	\end{pmatrix}.
\end{equation*}
Using the next generation matrix method to determine the basic reproduction number, we focus on the infected compartments labeled $x_i$, where $i = 2, 3, 4, 5$. At the disease-free equilibrium, denoted as $W_0$, it is observed that:
\begin{equation*}
	F = \left. \frac{\partial \mathcal{F}_i}{\partial x_i} \right|_{W_0} = 
	\begin{pmatrix}
		\alpha E S_0 & 0 & \alpha S_0 & 0 \\
		0 & 0 & 0 & 0 \\
		0 & 0 & 0 & 0 \\
		0 & 0 & 0 & 0 
	\end{pmatrix},
\end{equation*}
and
\begin{equation*}
	V = \left. \frac{\partial \mathcal{V}_i}{\partial x_i} \right|_{W_0} =
	\begin{pmatrix}
		k_1 & 0 & 0 & 0 \\
		-p \beta & k_2 & -\gamma & 0 \\
		-(1 - p) \beta & 0 & k_3 & 0 \\
		0 & -\psi & 0 & k_4 
	\end{pmatrix}.
\end{equation*}
where, $k_1= \beta + \mu > 0, k_2= \mu + \psi +\delta_1 >0, k_3= \gamma + \mu + \delta_2 >0, k_4=\mu + \xi > 0.$ Additionally, the matrices $F$ and $V$ satisfy the conditions (A1)-(A5) referenced in \cite{van2002reproduction}. By determining the spectral radius of the next generation matrix $FV^{-1}$, we can calculate the basic reproduction number, $\mathcal{R}_0$, in the following way:
\begin{equation*}
	\displaystyle \mathcal{R}_0 = \rho(FV^{-1}) = \frac{\alpha \varepsilon S_0}{k_1} + \frac{\beta S_0 \alpha (1 - p)}{k_1 k_3}
\end{equation*}
Here, $\displaystyle \frac{\alpha \varepsilon S_0}{k_1}$ displays the amount of people infected by latent individuals, $\displaystyle \frac{\beta S_0 \alpha (1 - p)}{k_1 k_3}$ depicts the number of persons who have been infected by those who are unaware that they are infected.

\subsection{Stability Analysis of the Disease-Free Equilibrium State}
Firstly, we will examine the stability or steadiness of the disease-free equilibrium within the local environment.
\begin{theorem}
	For model (\ref{Model_2}), the disease free equilibrium, $W_0$, is locally asymptotically stable when $\mathcal{R}_0 < 1$ in the feasible region $\Omega$.
\end{theorem}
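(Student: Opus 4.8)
The plan is to establish local asymptotic stability by linearizing system \eqref{Model_2} about $W_0 = (\kappa/\mu, 0, 0, 0, 0, 0)$ and showing that every eigenvalue of the Jacobian $J(W_0)$ has negative real part precisely when $\mathcal{R}_0 < 1$. There are two natural routes. The quickest is to invoke the result of van den Driessche and Watmough \cite{van2002reproduction}: since the matrices $F$ and $V$ constructed above were already shown to satisfy hypotheses (A1)--(A5), their theorem guarantees that $W_0$ is locally asymptotically stable when $\mathcal{R}_0 = \rho(FV^{-1}) < 1$. To keep the argument self-contained, however, I would instead verify the eigenvalue condition directly via the Routh--Hurwitz criterion.

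First I would compute $J(W_0)$ and exploit its block-triangular structure. Because the susceptible equation contributes only the diagonal entry $-\mu$ in its column, and because the treated and AIDS compartments do not feed back into $E$, $I_1$, $I_2$, the characteristic polynomial factors immediately, yielding the three manifestly negative eigenvalues $-\mu$, $-k_4$, and $-(\mu_0+\mu)$. Stability is therefore governed entirely by the $3\times 3$ block for the infected compartments,
\begin{equation*}
J_{\mathrm{inf}} = \begin{pmatrix} \alpha\epsilon S_0 - k_1 & 0 & \alpha S_0 \\ p\beta & -k_2 & \gamma \\ (1-p)\beta & 0 & -k_3 \end{pmatrix}.
\end{equation*}
Since the middle column of $J_{\mathrm{inf}} - \lambda I$ has the single nonzero entry $-k_2 - \lambda$, expanding along it peels off the negative eigenvalue $\lambda = -k_2$ and reduces the problem to the quadratic factor coming from the remaining $2\times 2$ block in the variables $E$ and $I_2$.

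The heart of the argument is then the quadratic $\lambda^2 + c_1\lambda + c_0$, where $c_1 = k_1 + k_3 - \alpha\epsilon S_0$ and $c_0 = (k_1 - \alpha\epsilon S_0)k_3 - \alpha S_0(1-p)\beta$. The key computation I would carry out is the factorization $c_0 = k_1 k_3\,(1 - \mathcal{R}_0)$, obtained by substituting the expression for $\mathcal{R}_0$ derived above; this makes the Routh--Hurwitz condition $c_0 > 0$ exactly equivalent to $\mathcal{R}_0 < 1$. For $c_1$, I would observe that $\mathcal{R}_0 < 1$ forces $\alpha\epsilon S_0 / k_1 < 1$, since the remaining term in $\mathcal{R}_0$ is nonnegative, so that $c_1 > k_3 > 0$. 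With both coefficients positive, the Routh--Hurwitz criterion yields two roots with negative real part, and combining these with the five other negative eigenvalues completes the proof. The main obstacle is bookkeeping rather than conceptual: correctly identifying the reducing block structure and carrying the algebra through to the clean identity $c_0 = k_1 k_3(1-\mathcal{R}_0)$, which is exactly what ties local stability to the threshold $\mathcal{R}_0 = 1$.
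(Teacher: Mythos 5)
Your proposal is correct and follows essentially the same route as the paper's proof: both linearize at $W_0$, exploit the block structure of the Jacobian to peel off the negative eigenvalues $-\mu$, $-k_2$, $-k_4$, $-(\mu_0+\mu)$, and reduce to the same quadratic with $m_0 = k_1k_3(1-\mathcal{R}_0)$, concluding via Routh--Hurwitz. If anything, you supply one detail the paper glosses over with ``it is easy to prove'': the observation that $\mathcal{R}_0 < 1$ forces $\alpha\epsilon S_0 < k_1$ and hence $m_1 > k_3 > 0$.
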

\begin{proof}
	We generate the Jacobian matrix in the following manner, using model \eqref{Model_2} as our basis:
	\begin{align*}
		J &= \begin{pmatrix}
			-(\lambda(t) + \mu) & -\alpha \varepsilon S & 0 & -\alpha S & 0 & 0 \\
			\lambda(t) & \alpha \varepsilon S - k_1 & 0 & \alpha S & 0 & 0 \\
			0 & p \beta & -k_2 & \gamma & 0 & 0 \\
			0 & (1 - p)\beta & 0 & -k_3 & 0 & 0 \\
			0 & 0 & \psi & 0 & -k_4 & 0 \\
			0 & 0 & \delta_1 & \delta_2 & \xi & -\mu_0 - \mu
		\end{pmatrix} \\
		&= \begin{pmatrix}
			(J_1)_{4 \times 4} & 0 \\
			(J_3)_{2 \times 4} & (J_4)_{2 \times 2}
		\end{pmatrix}
	\end{align*}
	where, $J_1$ is a $4 \times 4$ matrix, $J_3$ is a $2 \times 4$ matrix, $J_3$ is a $2 \times 2$ matrix and,
	\[
	\begin{aligned}
		k_1 &= \mu + \beta > 0, \\
		k_2 &= \mu + \psi + \delta_1 > 0, \\
		k_3 &= \mu + \gamma + \delta_2 > 0, \\
		k_4 &= \mu + \xi > 0.
	\end{aligned}
	\]
	The eigenvalues, represented by \( r \), are the solutions to the equation \( |rI - J(W_0)| = 0 \). These solutions are also the roots of the equations \( |rI - J_1(W_0)| = 0 \) and \( |rI - J_4(W_0)| = 0 \).\\
	The Jacobian matrix \(J_1(W_0)\) is:
	\[
	J_1(W_0) = \begin{pmatrix}
		-\mu & -\alpha \varepsilon S_0 & 0 & -\alpha S_0 \\
		0 & \alpha \varepsilon S_0 - k_1 & 0 & \alpha S_0 \\
		0 & p \beta & -k_2 & \gamma \\
		0 & (1 - p)\beta & 0 & -k_3
	\end{pmatrix}.
	\]
	\begin{align*}
		\left| rI - J_1(W_0) \right| &= 
		\begin{vmatrix}
			r + \mu & \alpha \varepsilon S_0 & 0 & \alpha S_0\\
			0 & r - (\alpha \varepsilon S_0 - k_1) & 0 & -\alpha S_0\\
			0 & -p \beta & r + k_2 & -\gamma \\
			0 & -(1-p)\beta & 0 & r + k_3 \\
		\end{vmatrix} \\
		&= (r + k_2)(r + \mu) \\
		&\quad \times \left[ r^2 + (k_1 + k_3 - \alpha \varepsilon S_0)r - k_3(\alpha \varepsilon S_0 - k_1) - \alpha S_0\beta(1 - p) \right] \\ &= 0.
	\end{align*}
	Clearly, the first and second roots of $|rI - J_1(W_0)| = 0$ are $r_1= -k_2 < 0$ and $r_2= -\mu < 0.$ 
	
	Next,we will examine the roots of the following equation:
	\begin{equation}
		r^2 + (k_1 + k_3 - \alpha \varepsilon S_0)r - k_3(\alpha \varepsilon S_0 - k_1) - \alpha S_0 \beta(1 - p)
		= r^2 + m_1r + m_0 = 0 %\tag{3}
		\label{eq:3}
	\end{equation}
	Here,
	\begin{align*}
		m_1 &= k_1 + k_3 - \alpha \varepsilon S_0 \\
		m_0 &= k_3(k_1 - \alpha \varepsilon S_0) - \beta S_0\alpha(1 - p) \\
		&= k_3k_1 \left[ 1 - \left( \frac{\alpha \varepsilon S_0}{k_1} + \frac{\beta\alpha S_0(1 - p)}{k_1k_3} \right) \right] \\
		&= k_3k_1(1 - \mathcal{R}_0).
	\end{align*}
	When \( R_0 < 1 \), it is easy to prove that \( m_1 > 0 \) and \( m_0 > 0 \). Here, \( R_0 = \frac{\alpha \varepsilon S_0}{k_1} + \frac{\beta S_0\alpha(1-p)}{k_1k_3} \). Based on the Routh–Hurwitz criterion \cite{chitnis2006bifurcation}, it can be observed that the roots of equation \eqref{eq:3} possess real portions that are negative. 
	
	For \( J_4(W_0) \),
	\[
	\left| rI - J_4(W_0) \right| = (r + k_4)(r + \mu_0 + \mu) = 0.
	\]
	It is obvious that, the eigenvalues of \( J_4(W_0) \) are \( r_5 = -k_4 < 0 \) and \( r_6 = -\mu_0 - \mu < 0 \).
	
	To summarize, the disease-free equilibrium, denoted as \( W_0 \), is locally asymptotically stable inside the region \( \Omega \) when the basic reproduction number \( \mathcal{R}_0 \) is less than 1.
\end{proof}

\noindent We then study the disease-free equilibrium state's global asymptotic stability.
\begin{theorem}
	For model (\ref{Model_2}), when $\mathcal{R}_0 < 1$, the disease-free equilibrium point, $W_0$ is globally asymptotically stable. 
\end{theorem}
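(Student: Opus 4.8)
The plan is to build a Lyapunov function on the transmitting infected compartments and close the argument with LaSalle's invariance principle, working throughout in the feasible region $\Omega$ where Lemma \ref{Lem1} guarantees $S(t) \le S_0 = \kappa/\mu$. Since aware individuals $I_1$, treated individuals $T$, and AIDS individuals $A$ do not contribute to the force of infection, only $E$ and $I_2$ drive transmission, so I would look for a linear Lyapunov function of the form $L = E + c\,I_2$ with a constant $c > 0$ to be chosen.

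First I would differentiate $L$ along solutions of model (\ref{Model_2}). Using $\dot E = \alpha(\epsilon E + I_2)S - k_1 E$ and $\dot{I_2} = (1-p)\beta E - k_3 I_2$ together with the bound $S \le S_0$ valid in $\Omega$, I obtain
\begin{equation*}
\dot L \le \big[(\alpha\epsilon S_0 - k_1) + c(1-p)\beta\big]E + \big[\alpha S_0 - c\,k_3\big]I_2.
\end{equation*}
The decisive choice is $c = \alpha S_0 / k_3$, which annihilates the $I_2$-coefficient and collapses the bracket multiplying $E$ into $k_1(\mathcal{R}_0 - 1)$, since
\begin{equation*}
(\alpha\epsilon S_0 - k_1) + \frac{\alpha S_0 (1-p)\beta}{k_3}
= k_1\left(\frac{\alpha\epsilon S_0}{k_1} + \frac{\alpha S_0(1-p)\beta}{k_1 k_3} - 1\right) = k_1(\mathcal{R}_0 - 1).
\end{equation*}
Hence $\dot L \le k_1(\mathcal{R}_0 - 1)E \le 0$ whenever $\mathcal{R}_0 < 1$, with equality forcing $E = 0$.

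Next I would invoke LaSalle's invariance principle to upgrade this to convergence. The set where $\dot L = 0$ is exactly $\{E = 0\}$, so I would identify the largest invariant subset contained in it. On an entire trajectory with $E \equiv 0$, the relation $\dot E = 0$ forces $\alpha I_2 S = 0$, hence $I_2 \equiv 0$ because $S > 0$ in $\Omega$; then $\dot{I_1} = -k_2 I_1$, $\dot T = \psi I_1 - k_4 T$, and $\dot A = \delta_1 I_1 + \delta_2 I_2 + \xi T - (\mu_0+\mu)A$ are pure decay equations whose only bounded full solutions are $I_1 \equiv T \equiv A \equiv 0$, while $\dot S = \kappa - \mu S$ forces $S \equiv \kappa/\mu = S_0$. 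Thus the only invariant set inside $\{E = 0\}$ is the singleton $\{W_0\}$, and LaSalle's theorem yields that every trajectory in $\Omega$ converges to $W_0$; combined with the local stability already established, $W_0$ is globally asymptotically stable.

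The main obstacle I anticipate is not the Lyapunov computation itself but justifying the two supporting facts cleanly: that $S \le S_0$ holds along all trajectories (which I would draw from Lemma \ref{Lem1} and the definition of $\Omega$), and that the cascade argument genuinely reduces the largest invariant set to $\{W_0\}$ rather than to a larger sub-face, which rests on boundedness of entire trajectories in $\Omega$. Care is also needed to state the equality case $\dot L = 0 \Leftrightarrow E = 0$ precisely, since this is what makes the invariance argument bite; should a stronger strict decrease be wanted, one could augment $L$ with small multiples of the remaining compartments, but the linear functional above already suffices.
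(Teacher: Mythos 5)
Your proof is correct and takes essentially the same route as the paper: a linear Lyapunov function in the two transmitting compartments (the paper uses $L=(\varepsilon k_3+\beta(1-p))E+k_1 I_2$, you use $E+(\alpha S_0/k_3)I_2$), the bound $S\le\kappa/\mu$ on the feasible region from Lemma \ref{Lem1}, a derivative estimate that collapses to a nonpositive multiple of $(\mathcal{R}_0-1)$, and LaSalle's invariance principle. If anything, your explicit cascade argument identifying the largest invariant set as $\{W_0\}$ is more careful than the paper's, which asserts this step without proof.
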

\begin{proof}
	According to Theorem 2.1 in \cite{shuai2013global}, we create a Lyapunov function in the following manner:
	\[
	L = (\varepsilon k_3 + \beta(1 - p))E + k_1I_2.
	\]
	It is evident that, $L \geq 0$.\\
	After that, we will do the derivation of $L$.
	\begin{align*}
		\frac{dL}{dt} &= (\varepsilon k_3 + \beta(1 - p))\frac{dE}{dt} + k_1\frac{dI_2}{dt} \\
		&= (\varepsilon k_3 + \beta(1 - p))(\alpha \varepsilon SE + \alpha SI_2 - k_1E) + \beta(1 - p)k_1E - k_1k_3I_2 \\
		&= (\varepsilon k_3 + \beta(1 - p))(\alpha \varepsilon SE + \alpha SI_2) - k_1k_3\varepsilon E - k_1k_3I_2 \\
		&\leq (\varepsilon k_3 + \beta(1 - p))\left(\frac{\kappa\alpha\varepsilon E}{\mu} + \frac{\kappa\alpha I_2}{\mu}\right) - k_1k_3\varepsilon E - k_1k_3I_2 \\
		&= (\varepsilon E + I_2)k_1k_3(\mathcal{R}_0 - 1).
	\end{align*}
	When $\mathcal{R}_0 < 1$, $\frac{dL}{dt} < 0$. Therefore, the largest invariant set contained in
	\[
	\left\{(S,E,I_1,I_2,T,A) \in \Omega : \frac{dL}{dt} = 0\right\}
	\]
	is $\{W_0\}$. Based on the LaSalle's invariance principle \cite{barkana2017can}, the disease-free equilibrium $W_0$ is globally asymptotically stable in $\Omega$ if the basic reproduction number $\mathcal{R}_0$ is less than $1$.
\end{proof}

\subsection{Analysis of the Endemic Equilibrium State}
Here, we will firstly find the existence of an endemic equilibrium point $W^*$ of model \eqref{Model_2}.
Let, $W^*=(S^*,E^*,I_\text{1}^\text{*},I_\text{2}^\text{*},T^*,A^*).$\\
To make the calculation easier, we have considered the rate at which aware individuals move to AIDS class $(\delta_1)=$ the rate at which unaware individuals move to AIDS class $(\delta_2)=\delta.$
From the second to sixth equation in model (\ref{Model_2}), we have,
\begin{align*}
	E^* &= \displaystyle \frac{k_3}{(1 - p)\beta} I_2^*, \\
	I_1^* &= \displaystyle \left[ \frac{pk_3 + (1 - p)\gamma}{(1 - p)k_2} \right] I_2^*, \\
	T^* &= \displaystyle \frac{\psi(pk_3 + (1 - p)\gamma)}{(1 - p)k_2k_4} I_2^*, \\
	A^* &= \displaystyle \frac{\delta I_2^*}{\mu + \mu_0} \left[ \frac{pk_3 + \gamma(1 - p)}{k_2(1 - p)} + 1 \right] + \frac{\xi \psi I_2^*}{(\mu + \mu_0)k_4} \left[ \frac{pk_3 + \gamma(1 - p)}{k_2(1 - p)} \right], \\
	S^* &= \displaystyle \frac{\kappa}{\mu + \frac{k_1k_3I_2^*}{\beta(1-p)}{\frac{\mathcal{R}_0}{S_0}}} , \\
	I_2^* &= \displaystyle \frac{(\mathcal{R}_0 - 1) \mu \beta (1 - p)S_0}{k_1k_3\mathcal{R}_0}.
\end{align*}
Hence, we obtain that $I_2^* > 0$ if and only if $\mathcal{R}_0 > 1$. Depending on the analysis we have done earlier, model (\ref{Model_2}) has an endemic equilibrium, $W^*$ when $\mathcal{R}_0 > 1$. The equilibrium point we found is also unique.

Subsequently, now we will analyse the global stability of the endemic equilibrium, $W^*$.
\begin{theorem}
	For model (\ref{Model_2}), if $\mathcal{R}_0>1,$ the endemic equilibrium $W^*$ is globally asymptotically stable.
\end{theorem}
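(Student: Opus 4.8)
The plan is to construct a Goh--Volterra (logarithmic) Lyapunov function centered at $W^*$ and apply LaSalle's invariance principle, in parallel with the disease-free argument above. The key structural observation is that only the latent and unaware-infected classes enter the force of infection $\lambda = \alpha(\epsilon E + I_2)$; the compartments $I_1$, $T$, and $A$ are purely downstream and feed back into neither $S$, $E$, nor $I_2$. Consequently I would first establish global attractivity of the reduced ``core'' subsystem $(S,E,I_2)$ and then recover convergence of $(I_1,T,A)$ as a slaved consequence.

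First I would introduce the normalized ratios $x = S/S^*$, $y = E/E^*$, $z = I_2/I_2^*$ and the equilibrium transmission fluxes $A_1 = \alpha\epsilon E^* S^*$ and $A_2 = \alpha I_2^* S^*$, which satisfy the identities $\kappa = A_1 + A_2 + \mu S^*$, $k_1 E^* = A_1 + A_2$, and $(1-p)\beta E^* = k_3 I_2^*$ at $W^*$. I would then take
\[
V = \Big(S - S^* - S^*\ln\tfrac{S}{S^*}\Big) + \Big(E - E^* - E^*\ln\tfrac{E}{E^*}\Big) + \frac{\alpha S^*}{k_3}\Big(I_2 - I_2^* - I_2^*\ln\tfrac{I_2}{I_2^*}\Big),
\]
which is nonnegative on $\mathbb{R}_+^3$ with its global minimum at $(S^*,E^*,I_2^*)$. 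Differentiating along trajectories and using $\tfrac{d}{dt}\big(X - X^*\ln X\big) = (1 - X^*/X)\dot X$ together with the three equilibrium identities, each term reduces to a combination of the ratios $x,y,z$.

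The crux is the algebraic regrouping. The coefficients $1$, $1$, and $\alpha S^*/k_3$ are chosen precisely so that, first, the $S$-terms and the $A_1$-pathway terms collapse to $-(\mu S^* + A_1)\tfrac{(x-1)^2}{x}$, manifestly nonpositive, and second, all remaining $A_2$-pathway contributions telescope into
\[
A_2\Big(3 - \frac{1}{x} - \frac{xz}{y} - \frac{y}{z}\Big).
\]
Since the product of the three subtracted ratios equals $1$, the arithmetic--geometric mean inequality gives $\tfrac1x + \tfrac{xz}{y} + \tfrac{y}{z} \ge 3$, so this bracket is $\le 0$ with equality iff $x = 1$ and $y = z$. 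Hence $\dot V \le 0$ throughout $\Omega$. The main obstacle I anticipate is the bookkeeping through the two simultaneous transmission pathways ($\epsilon E$ and $I_2$): one must verify that the weight $\alpha S^*/k_3$ makes the $E$ and $I_2$ cross terms cancel so that no stray linear terms in $y$ or $z$ survive outside the AM--GM triple.

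Finally I would invoke LaSalle. On $\{\dot V = 0\}$ one has $S = S^*$ and $E/E^* = I_2/I_2^*$; feeding $S\equiv S^*$ back into $\dot S = 0$ forces $\epsilon E + I_2 = \epsilon E^* + I_2^*$, which combined with the equal-ratio condition yields $E = E^*$ and $I_2 = I_2^*$. The largest invariant set contained in $\{\dot V = 0\}$ is therefore $\{(S^*,E^*,I_2^*)\}$, and once $E$ and $I_2$ are pinned at equilibrium the linear downstream equations for $I_1$, $T$, and $A$ drive those variables to $I_1^*$, $T^*$, and $A^*$. Global asymptotic stability of $W^*$ in $\Omega$ for $\mathcal{R}_0 > 1$ then follows.
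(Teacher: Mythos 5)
Your proposal is correct and takes essentially the same route as the paper: your weight $\frac{\alpha S^*}{k_3}$ coincides with the paper's coefficient $D = \frac{\alpha I_2^* S^*}{\beta(1-p)E^*}$ via the equilibrium identity $(1-p)\beta E^* = k_3 I_2^*$, and your derivative computation collapsing to $-(\mu S^* + \alpha\epsilon E^* S^*)\frac{(x-1)^2}{x} + \alpha I_2^* S^*\left(3 - \frac{1}{x} - \frac{xz}{y} - \frac{y}{z}\right)$, the AM--GM bound, and the LaSalle step are exactly what the paper does (the paper reaches the same three-term function by starting from a five-term candidate with undetermined coefficients and solving to find the $I_1$ and $T$ weights vanish). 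If anything, your explicit cascade argument showing that $I_1$, $T$, $A$ converge once $E$ and $I_2$ are pinned at equilibrium supplies a detail the paper leaves implicit, since its Lyapunov function does not control those downstream compartments.
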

\begin{proof}
	The Lyapunov function  $\tilde{U}$ is described as follows \cite{li2011global}:   
	Let, \( B, C, D,\) and \(G \) be the constants which are not negative. Consider the Lyapunov function candidate \( \tilde{U} \) defined as follows:
	\begin{align*}
		\widetilde{U} &= \left( S - S^* - S^*\ln \frac{S}{S^*} \right) + B \left( E - E^* - E^*\ln \frac{E}{E^*} \right) \\
		&\quad+ C \left( I_1 - I_1^* - I_1^*\ln \frac{I_1}{I_1^*} \right) + D \left( I_2 - I_2^* - I_2^*\ln \frac{I_2}{I_2^*} \right) \\
		&\quad+ G \left( T - T^* - T^*\ln \frac{T}{T^*} \right).
	\end{align*}
	We postulate that \( \widetilde{U} \) is positive definite. Also, the first derivative of \( \widetilde{U} \) is also positive definite given all parameters of the model are positive and bounded away from zero. Specifically, \( \widetilde{U}(W^*) = 0 \) at the equilibrium point \( W^* \). Moreover, for \( S, E, I_1, I_2, T, A > 0 \) and \( S \neq S^*, E \neq E^*, I_1 \neq I_1^*, I_2 \neq I_2^*, T \neq T^*, A \neq A^* \), we define:
%	\begin{align*}
%		\widetilde{U}(S, E, I_1, I_2, T) &= h(S) + Bh(E) + Ch(I_1) + Dh(I_2) + Gh(T), \\
%		\text{where,} \quad h(S) &= S - S^* - S^*\ln \frac{S}{S^*}, \quad h(E) = E - E^* - E^*\ln \frac{E}{E^*}, \\
%		h(I_1) &= I_1 - I_1^* - I_1^*\ln \frac{I_1}{I_1^*}, \quad h(I_2) = I_2 - I_2^* - I_2^*\ln \frac{I_2}{I_2^*}, \\
%		h(T) &= T - T^* - T^*\ln \frac{T}{T^*}.
%	\end{align*}
	\begin{align*}
			\widetilde{U}(S, E, I_1, I_2, T) = h(S) + Bh(E) + Ch(I_1) + Dh(I_2) + Gh(T), 
	\end{align*}
where, 
$$
h(S) = S - S^* - S^*\ln \frac{S}{S^*}, \;\; h(E) = E - E^* - E^*\ln \frac{E}{E^*}, \;\;
h(I_1) = I_1 - I_1^* - I_1^*\ln \frac{I_1}{I_1^*}, 
$$
$$
h(I_2) = I_2 - I_2^* - I_2^*\ln \frac{I_2}{I_2^*}, \;\;
h(T) = T - T^* - T^*\ln \frac{T}{T^*}.
$$

\noindent It follows that \( h(S)\), \( h(E)\), \( h(I_1)\), \( h(I_2)\), \( h(T) \geq 0 \) due to the properties of the logarithmic function and the positivity of \( S, E, I_1, I_2, T \).
	According to model (\ref{Model_2}),
	\begin{align*}
		\widetilde{U}^{'} &= \left( 1 - \frac{S^*}{S} \right)S' + B \left( 1 - \frac{E^*}{E} \right)E' + C \left( 1 - \frac{I_1^*}{I_1} \right)I_1' + D \left( 1 - \frac{I_2^*}{I_2} \right)I_2' + G \left( 1 - \frac{T^*}{T} \right)T' \\
		&= \left( 1 - \frac{S^*}{S} \right)(\kappa - \alpha (\varepsilon E + I_2)S - \mu S) + B \left( 1 - \frac{E^*}{E} \right)(\alpha (\varepsilon E + I_2)S - (\beta + \mu) E) \\
		&\quad+ C \left( 1 - \frac{I_1^*}{I_1} \right)(p\beta E + \gamma I_2 - (\mu + \psi + \delta) I_1)+ D \left( 1 - \frac{I_2^*}{I_2} \right)((1 - p)\beta E - (\delta + \gamma + \mu) I_2) \\
		&\quad+ G \left( 1 - \frac{T^*}{T} \right)(\psi I_1 - (\mu + \xi) T)\\
		&= \left( 1 - \frac{S^*}{S} \right) \left[ \alpha \varepsilon E^* S^* + \alpha I_2^* S^* + \mu S^* - (\alpha \varepsilon ES + \alpha I_2 S + \mu S) \right]  + B \left( 1 - \frac{E^*}{E} \right)  \\
		&\quad \left[ \alpha \varepsilon ES + \alpha I_2 S - {\frac{\alpha \varepsilon E^* S^* + \alpha I_2^* S^*}{E^*}} E \right] + C \left( 1 - \frac{I_1^*}{I_1} \right) \left[ p \beta E + \gamma I_2 - {\frac{p \beta E^* + \gamma I_2^*}{I_1^*}} I_1 \right]\\
		&\quad + D \left( 1 - \frac{I_2^*}{I_2} \right) \left[ (1 - p) \beta E - {\frac{(1 - p) \alpha E^*}{I_2^*}} I_2 \right] + G \left( 1 - \frac{T^*}{T} \right) \left[ \psi I_1 - {\frac{\psi I_1^*}{T^*}} T \right]
	\end{align*}
	Let,
	\begin{align*}
		\frac{S}{S^*} &= x, & \frac{E}{E^*} &= y,& \frac{I_1}{I_1^*} &= z, & \frac{I_2}{I_2^*} &= m, & \frac{T}{T^*} &= n.
	\end{align*}
	Then,
	\begin{equation*}
		\begin{aligned}
			\widetilde{U}' &= -\mu S^{*}\frac{(1-x)^2}{x} +\alpha \varepsilon E^{*} S^{*}\left( 1 - \frac{1}{x}\right)(1 - xy) + \alpha I_2^{*} S^{*}\left(1 - \frac{1}{x}\right)(1 - xm) \\
			&\quad + \alpha B \varepsilon S^{*} E^{*}\left(1 - \frac{1}{y}\right)(xy - y) + \alpha B I_2^{*} S^{*}\left(1 - \frac{1}{y}\right)(xm - y) \\
			&\quad + Cp \beta E^{*}\left(1 - \frac{1}{z}\right)(y - z) + C \gamma I_2^{*}\left(1 - \frac{1}{z}\right)(m - z) \\
			&\quad + D \beta (1 - p) E^{*}\left(1 - \frac{1}{m}\right)(y - m) + G \psi I_1^{*}\left(1 - \frac{1}{n}\right)(z - n) \\
			&= -\mu S^{*} \frac{(1 - x)^2}{x} + \left[ \alpha \varepsilon S^{*} E^{*} + \alpha I_2^{*} S^{*} + \alpha \varepsilon B E^{*} S^{*} + \alpha B I_2^{*} S^{*} + Cpa E^{*} \right. \\
			&\quad + C \gamma I_2^{*} + D \beta (1 - p) E^{*} + G \psi I_1^{*}] + xy \left( -\alpha \varepsilon S^{*} E^{*} + \alpha \varepsilon BE^{*} S^{*} \right) \\
			&\quad + \frac{1}{x} \left( -\alpha \varepsilon S^{*} E^{*} - \alpha I_2^{*} S^{*} \right) \\
			&\quad + y \left( \alpha \varepsilon S^{*} E^{*} -B \alpha \varepsilon  E^{*} S^{*} - \alpha B I_2^{*} S^{*} + Cp \beta E^{*} + D \beta (1 - p) E^{*} \right) \\
			&\quad + m \left( \alpha I_2^{*} S^{*} + C \gamma I_2^{*} - D \beta (1 - p) E^{*} \right) + xm \left( -\alpha I_2^{*} S^{*} + \alpha B I_2^{*} S^{*} \right) \\
			&\quad + z \left( -Cp \beta E^{*} - C \gamma I_2^{*} + G \psi I_1^{*} \right) - x \alpha \varepsilon B E^{*} S^{*} - \frac{xm}{y} \alpha B I_2^{*} S^{*} - \frac{y}{z} C \beta p E^{*}  \\
			&\quad - \frac{m}{z} C \gamma I_2^{*} - \frac{y} {m} D \beta (1 - p) E^{*} - n G \psi I_1^{*} - \frac{z}{n} G \psi I_1^{*}.
		\end{aligned}
	\end{equation*}
	Based on \cite{li2011global}, $x,\; y,\; m,\;$ and $z$ are all assumed to have coefficients of zero. By doing so, we will get the following equations:
	\begin{equation}
		\begin{cases}
			-\alpha \varepsilon S^* E^* + B \alpha \varepsilon S^* E^* = 0, \\
			\alpha \varepsilon S^* E^* - B \alpha \varepsilon S^* E^* - B \alpha I_2^* S^* + Cp \beta E^* + D \beta (1 - p) E^* = 0, \\
			\alpha I_2^* S^* + C \gamma I_2^* - D \beta (1 - p) E^* = 0, \\
			-Cp \beta E^* - C \gamma I_2^* + G \psi I_1^* = 0. %\tag{4} 
			\label{eq4}
		\end{cases}
	\end{equation}
	Solving equation (\ref{eq4}), we obtain,
	\begin{align*}
		B &= 1, \quad C = 0, \quad D = \frac{\alpha I_2^* S^*}{\beta(1 - p)E^*}, \quad G = 0,
	\end{align*}
	in which, \(p \neq 1\).\\
	Given the values of \( B \), \( C \), \( D \), and \( G \), the Lyapunov function \( \widetilde{U} \) may be expressed as:
	%\begin{equation*}
	%\widetilde{U} = \begin{pmatrix}
		%S - S^* - S^* \big \ln{\frac{S}{S^*}}
		%\end{pmatrix} + \begin{pmatrix}
		%E - E^* - E^* \big \ln{\frac{E}{E^*}}\replace{\replace{text}{replacement}}{replacement}
		%\end{pmatrix} + \frac{\alpha I_2^* S^*}{\beta (1 - p) E^*} 
		%\begin{pmatrix}
		%I_2 - I_2^* - I_2^* \big \ln{\frac{I_2}{I_2^*}}
		%\end{pmatrix}.
		%\end{equation*}
		\begin{equation*}
			\begin{aligned}
				\widetilde{U}' &= -\mu S^{*} \frac{(1 - x)^2}{x} + \left[ \alpha \varepsilon S^{*} E^{*} + \alpha I_2^{*} S^{*} + \alpha \varepsilon B E^{*} S^{*} + \alpha B I_2^{*} S^{*} + Cpa E^{*} \right. \\
				&\quad + C \gamma I_2^{*} + D \beta (1 - p) E^{*} + G \psi I_1^{*}] + \frac{1}{x} \left( -\alpha \varepsilon S^{*} E^{*} - \alpha I_2^{*} S^{*} \right) + Cp \beta E^* \\
				&\quad + D \beta (1-p) E^* -Bx \alpha \varepsilon S^* E^* - \frac{xm}{y} \alpha B I_2^{*} S^{*} - \frac{y}{z} C \beta p E^{*}  \\
				&\quad - \frac{m}{z} C \gamma I_2^{*} - \frac{y} {m} D \beta (1 - p) E^{*} - n G \psi I_1^{*} - \frac{z}{n} G \psi I_1^{*}\\
				&= -\mu S^{*} \frac{(1 - x)^2}{x} + \Bigg[2 \alpha \varepsilon S^* E^* + 2 \alpha I_2^* S^* + \frac{\alpha I_2^* S^*}{\beta (1-p) E^*} \beta (1-p) E^* \bigg]\\
				&\quad +\frac{1}{x} (-\alpha \varepsilon S^{*} E^{*}- \alpha I_2^* S^*)- x \alpha \varepsilon S^* E^* - \frac{xm}{y} \alpha I_2^* S^* \\
				&\quad - \frac{y}{m} \frac{\alpha I_2^* S^*}{\beta (1-p) E^*} \beta (1-p) E^* \\
				&= -\mu S^{*} \frac{(1 - x)^2}{x} + \alpha \varepsilon S^* E^* \bigg(2- \frac{1}{x}-x\bigg)+ \alpha I_2^* S^* \bigg(3- \frac{1}{x}- \frac{xm}{y}- \frac{y}{m} \bigg).
			\end{aligned}
		\end{equation*}
		We know that, the arithmetic mean is always bigger than or similar to the geometric mean, $\displaystyle 2- \frac{1}{x}-x \leq 0$ and $\displaystyle 3-\frac{1}{x}-\frac{xm}{y}-\frac{y}{m} \leq 0$ for $x,y,m>0.$ Here, $\displaystyle 2- \frac{1}{x}-x = 0$ iff $x=1$, and $\displaystyle 3-\frac{1}{x}-\frac{xm}{y}-\frac{y}{m} = 0$ iff $x=1$ and $y=m$. Therefore, $\widetilde{U}' \leq 0$ for $x>0,\; y>0,\; m >0,$ and $\widetilde{U}'=0$ iff $x=1$ and $y=m$. Using LaSalle's Invariance Principle \cite{barkana2017can}, it is reasonably easy to establish that the endemic equilibrium point \( W^* \) is globally asymptotically stable inside \( \Omega \). So, to sum up, when $R_0 >1$, we are able to assert that the endemic equilibrium, denoted by $W^*$, is globally asymptotically stable.
	\end{proof}
	\noindent Given the global asymptotic stability of the endemic equilibrium $W^*$, it follows that $W^*$ is also locally asymptotically stable for $R_0>1.$


\begin{thebibliography}{10}
	
	\bibitem{payne2023trends}
	Danielle Payne, Nellie Wadonda-Kabondo, Alice Wang, Joshua Smith-Sreen, Alinune
	Kabaghe, George Bello, Felix Kayigamba, Lyson Tenthani, Alice Maida, Andrew
	Auld, et~al.
	\newblock Trends in hiv prevalence, incidence, and progress towards the unaids
	95-95-95 targets in malawi among individuals aged 15--64 years:
	Population-based hiv impact assessments, 2015- 16 and 2020- 21.
	\newblock {\em The lancet HIV}, 10(9):e597--e605, 2023.
	
	\bibitem{awaidy2023progress}
	Salah~Al Awaidy, Ramy~Mohamed Ghazy, and Ozayr Mahomed.
	\newblock Progress of the gulf cooperation council (gcc) countries towards
	achieving the 95-95-95 unaids targets: a review.
	\newblock {\em Journal of epidemiology and global health}, 13(3):397--406,
	2023.
	
	\bibitem{lundgren2024sweden}
	Erik Lundgren, Macauley Locke, Ethan Romero-Severson, Mira Dimitrijevic, Maria
	Axelsson, Emmi Andersson, Christina Carlander, Johanna Br{\"a}nnstr{\"o}m,
	Hans Norrgren, Fredrik Mansson, et~al.
	\newblock Sweden surpasses the unaids 95-95-95 target: estimating hiv-1
	incidence, 2003 to 2022.
	\newblock {\em Eurosurveillance}, 29(42):2400058, 2024.
	
	\bibitem{khan2023vaccine}
	Nuzhat~Nuari Khan, Shahin~Ara Begum, Raheedun Afeef, and Md~Kamrujjaman.
	\newblock Vaccine efficacy of covid-19 in bangladesh: Does vaccination prevent
	the pandemic?
	\newblock {\em Ganit: Journal of Bangladesh Mathematical Society}, 43(1), 2023.
	
	\bibitem{mohammad2025bifurcation}
	Kazi~Mehedi Mohammad, Asma~Akter Akhi, and Md~Kamrujjaman.
	\newblock Bifurcation analysis of an influenza a (h1n1) model with treatment
	and vaccination.
	\newblock {\em PloS one}, 20(1):e0315280, 2025.
	
	\bibitem{mohammad2025stochastic}
	Kazi~Mehedi Mohammad and Md~Kamrujjaman.
	\newblock Stochastic differential equations to model influenza transmission
	with continuous and discrete-time markov chains.
	\newblock {\em Alexandria Engineering Journal}, 110:329--345, 2025.
	
	\bibitem{mohammad2024wiener}
	Kazi~Mehedi Mohammad, Mayesha~Sharmim Tisha, and Md~Kamrujjaman.
	\newblock Wiener and l{\'e}vy processes to prevent disease outbreaks:
	Predictable vs stochastic analysis.
	\newblock {\em Partial Differential Equations in Applied Mathematics},
	10:100712, 2024.
	
	\bibitem{unaids2023data}
	{Joint United Nations Programme on HIV/AIDS (UNAIDS)}.
	\newblock Unaids data 2023, 2023.
	\newblock Accessed: January 27, 2025.
	
	\bibitem{ahmed2021optimal}
	Shohel Ahmed, Sumaiya Rahman, and Md~Kamrujjaman.
	\newblock Optimal treatment strategies to control acute hiv infection.
	\newblock {\em Infectious Disease Modelling}, 6:1202--1219, 2021.
	
	\bibitem{ahmed2020dynamics}
	Shohel Ahmed, Md~Kamrujjaman, and Sumaiya Rahman.
	\newblock Dynamics of a viral infectiology under treatment.
	\newblock {\em Journal of Applied Analysis and Computation}, 10(5):1800--1822,
	2020.
	
	\bibitem{sabir2024artificial}
	Zulqurnain Sabir, Salem~Ben Said, and Qasem Al-Mdallal.
	\newblock Artificial intelligent solvers for the hiv-1 system including aids
	based on the cancer cells.
	\newblock {\em Intelligent Systems with Applications}, 21:200309, 2024.
	
	\bibitem{datta2023transferable}
	Joy Datta, Dibakar Datta, and Vidushi Sharma.
	\newblock Transferable and robust machine learning model for predicting
	stability of si anodes for multivalent cation batteries.
	\newblock {\em Journal of Materials Science}, 58(27):11085--11099, 2023.
	
	\bibitem{martin2016progress}
	Alyssa~R Martin and Robert~F Siliciano.
	\newblock Progress toward hiv eradication: case reports, current efforts, and
	the challenges associated with cure.
	\newblock {\em Annual review of medicine}, 67(1):215--228, 2016.
	
	\bibitem{xue2023modelling}
	Ling Xue, Yuanmei Sun, Xue Ren, and Wei Sun.
	\newblock Modelling the transmission dynamics and optimal control strategies
	for hiv infection in china.
	\newblock {\em Journal of Biological Dynamics}, 17(1):2174275, 2023.
	
	\bibitem{mukandavire2009modelling}
	Z~Mukandavire, W~Garira, and JM~Tchuenche.
	\newblock Modelling effects of public health educational campaigns on hiv/aids
	transmission dynamics.
	\newblock {\em Applied Mathematical Modelling}, 33(4):2084--2095, 2009.
	
	\bibitem{unaids2023}
	{Joint United Nations Programme on HIV/AIDS (UNAIDS)}.
	\newblock Global hiv \& aids statistics—fact sheet 2023, 2023.
	\newblock Accessed: February 12, 2024.
	
	\bibitem{granich2009universal}
	Reuben~M Granich, Charles~F Gilks, Christopher Dye, Kevin~M De~Cock, and
	Brian~G Williams.
	\newblock Universal voluntary hiv testing with immediate antiretroviral therapy
	as a strategy for elimination of hiv transmission: a mathematical model.
	\newblock {\em The Lancet}, 373(9657):48--57, 2009.
	
	\bibitem{pontryagin2018mathematical}
	Lev~Semenovich Pontryagin.
	\newblock {\em Mathematical theory of optimal processes}.
	\newblock Routledge, 2018.
	
	\bibitem{ghosh2021mathematical}
	S~Ghosh, AN~Chatterjee, PK~Roy, N~Grigorenko, E~Khailov, and E~Grigorieva.
	\newblock Mathematical modeling and control of the cell dynamics in leprosy.
	\newblock {\em Computational Mathematics and Modeling}, 32:52--74, 2021.
	
	\bibitem{roy2021control}
	Amit~Kumar Roy, Mark Nelson, and Priti~Kumar Roy.
	\newblock A control-based mathematical study on psoriasis dynamics with special
	emphasis on il- 21 and ifn- $\gamma$ interaction network.
	\newblock {\em Mathematical Methods in the Applied Sciences},
	44(17):13403--13420, 2021.
	
	\bibitem{roy2015effect}
	Priti~Kumar Roy, Shubhankar Saha, and Fahad~Al Basir.
	\newblock Effect of awareness programs in controlling the disease hiv/aids: an
	optimal control theoretic approach.
	\newblock {\em Advances in Difference Equations}, 2015(1):1--18, 2015.
	
	\bibitem{mondal2021effect}
	Jayanta Mondal, Piu Samui, and Amar~Nath Chatterjee.
	\newblock Effect of sof/vel antiviral therapy for hcv treatment.
	\newblock {\em Letters in Biomathematics}, 8(1):191--213, 2021.
	
	\bibitem{wang2022effect}
	Bo~Wang, Jayanta Mondal, Piu Samui, Amar~Nath Chatterjee, and Abdullahi Yusuf.
	\newblock Effect of an antiviral drug control and its variable order fractional
	network in host covid-19 kinetics.
	\newblock {\em The European Physical Journal Special Topics},
	231(10):1915--1929, 2022.
	
	\bibitem{usdot2024vsl}
	U.S.~Department of~Transportation.
	\newblock Revised departmental guidance on valuation of a statistical life in
	economic analysis, 2024.
	\newblock Accessed: 2024-08-16.
	
	\bibitem{CDCgov}
	{Centers for Disease Control and Prevention}.
	\newblock Centers for disease control and prevention homepage, 2024.
	\newblock Accessed: 2024-03-26.
	
	\bibitem{dietz1993estimation}
	Klaus Dietz.
	\newblock The estimation of the basic reproduction number for infectious
	diseases.
	\newblock {\em Statistical methods in medical research}, 2(1):23--41, 1993.
	
	\bibitem{brooks1998markov}
	Stephen Brooks.
	\newblock Markov chain monte carlo method and its application.
	\newblock {\em Journal of the royal statistical society: series D (the
		Statistician)}, 47(1):69--100, 1998.
	
	\bibitem{van2013high}
	Heidi Van~Rooyen, Ruanne~V Barnabas, Jared~M Baeten, Zipho Phakathi, Philip
	Joseph, Meighan Krows, Ting Hong, Pamela~M Murnane, James Hughes, and Connie
	Celum.
	\newblock High hiv testing uptake and linkage to care in a novel program of
	home-based hiv counseling and testing with facilitated referral in
	kwazulu-natal, south africa.
	\newblock {\em JAIDS Journal of Acquired Immune Deficiency Syndromes},
	64(1):e1--e8, 2013.
	
	\bibitem{chang2013hiv}
	Christina~C Chang, Megan Crane, JingLing Zhou, Michael Mina, Jeffrey~J Post,
	Barbara~A Cameron, Andrew~R Lloyd, Anthony Jaworowski, Martyn~A French, and
	Sharon~R Lewin.
	\newblock Hiv and co-infections.
	\newblock {\em Immunological reviews}, 254(1):114--142, 2013.
	
	\bibitem{wang2023crispr}
	Joy~Y Wang and Jennifer~A Doudna.
	\newblock Crispr technology: A decade of genome editing is only the beginning.
	\newblock {\em Science}, 379(6629):eadd8643, 2023.
	
	\bibitem{van2002reproduction}
	Pauline Van~den Driessche and James Watmough.
	\newblock Reproduction numbers and sub-threshold endemic equilibria for
	compartmental models of disease transmission.
	\newblock {\em Mathematical biosciences}, 180(1-2):29--48, 2002.
	
	\bibitem{chitnis2006bifurcation}
	Nakul Chitnis, Jim~M Cushing, and JM~Hyman.
	\newblock Bifurcation analysis of a mathematical model for malaria
	transmission.
	\newblock {\em SIAM Journal on Applied Mathematics}, 67(1):24--45, 2006.
	
	\bibitem{shuai2013global}
	Zhisheng Shuai and Pauline van~den Driessche.
	\newblock Global stability of infectious disease models using lyapunov
	functions.
	\newblock {\em SIAM Journal on Applied Mathematics}, 73(4):1513--1532, 2013.
	
	\bibitem{barkana2017can}
	Itzhak Barkana.
	\newblock Can stability analysis be really simplified?(from lyapunov to the new
	theorem of stability-revisiting lyapunov, barbalat, lasalle and all that).
	\newblock {\em Mathematics in Engineering, Science \& Aerospace (MESA)}, 8(2),
	2017.
	
	\bibitem{li2011global}
	Jianquan Li, Yali Yang, and Yicang Zhou.
	\newblock Global stability of an epidemic model with latent stage and
	vaccination.
	\newblock {\em Nonlinear Analysis: Real World Applications}, 12(4):2163--2173,
	2011.
	
\end{thebibliography}
\end{document}